\documentclass[11pt,letterpaper, twoside]{article}

\makeatletter

\renewcommand\normalsize{%
\@setfontsize\normalsize{11}{19}%
  \abovedisplayskip 1.0em plus 0.3em minus 0.5em
  \abovedisplayshortskip \z@ plus 0.3em
  \belowdisplayshortskip 0.6em plus 0.3em minus 0.3em
  \belowdisplayskip \abovedisplayskip
}
\normalsize

\makeatother

\AtBeginDocument{%
  \thickmuskip=3.5mu plus 2mu
  \thinmuskip=2.5mu
  \medmuskip=3mu plus 2mu minus 1.5mu
  \frenchspacing
  \sloppy
  \flushbottom
}

\usepackage{fancyhdr}

\newcommand{\runtitle}{Test-then-Punish: A Statistical Approach to Repeated Games}   

\usepackage{amssymb}
\usepackage{booktabs}
\usepackage{pifont}    
\usepackage{xcolor}    
\usepackage{tabularx}
\usepackage{makecell}
\newcommand{\cmark}{\textcolor{green!60!black}{\ding{51}}} 
\newcommand{\xmark}{\textcolor{red!70!black}{\ding{55}}}   
\usepackage{mathrsfs}

\usepackage{todonotes}
\usepackage{notation}
\usepackage[inline]{enumitem}
\usepackage{booktabs} 
\usepackage[ruled]{algorithm2e} 

\SetAlFnt{\small}
\SetAlCapFnt{\small}
\SetAlCapNameFnt{\small}
\SetAlCapHSkip{0pt}
\IncMargin{-\parindent}

\usepackage[authoryear,round]{natbib} 
\definecolor{outerS}{HTML}{F3E8D0} 
\definecolor{innerS}{HTML}{CBE7D6} 

\usepackage{amsmath, amsthm, amsfonts}
\usepackage{cleveref}
\usepackage{thmtools}

\theoremstyle{plain}
\newtheorem{theorem}{Theorem}

\newtheorem{lemma}{Lemma}
\newtheorem{corollary}{Corollary}

\theoremstyle{definition}
\newtheorem{definition}{Definition}

\newtheorem{condition}{Condition}

\theoremstyle{remark}
\newtheorem{remark}{Remark}

\usepackage[T1]{fontenc}
\usepackage[utf8]{inputenc} 
\usepackage{lmodern}

\crefname{theorem}{theorem}{theorems}
\Crefname{theorem}{Theorem}{Theorems}
\crefname{proposition}{proposition}{propositions}
\Crefname{proposition}{Proposition}{Propositions}
\crefname{lemma}{lemma}{lemmas}
\Crefname{lemma}{Lemma}{Lemmas}
\crefname{corollary}{corollary}{corollaries}
\Crefname{corollary}{Corollary}{Corollaries}
\crefname{definition}{definition}{definitions}
\Crefname{definition}{Definition}{Definitions}
\crefname{assumption}{assumption}{assumptions}
\Crefname{assumption}{Assumption}{Assumptions}
\crefname{example}{example}{examples}
\Crefname{example}{Example}{Examples}
\crefname{remark}{remark}{remarks}
\Crefname{remark}{Remark}{Remarks}

\usepackage{authblk}

\author[1, *]{Aymeric Capitaine}
\author[1, *]{Antoine Scheid}
\author[2]{Etienne Boursier}
\author[1]{Alain Durmus}
\author[3]{Michael I. Jordan}

\affil[1]{CMAP, CNRS, École polytechnique, France}
\affil[2]{Inria Saclay, Université Paris Saclay, LMO, France}
\affil[3]{Inria Paris, Ecole Normale Supérieure, PSL, France, University of California, Berkeley, USA}
\affil[*]{Equal Contribution}

\begin{document}


\title{Test-then-Punish: A Statistical Approach to Repeated Games}


\maketitle



\begin{abstract}
We study discounted infinitely repeated games in which players agree on a cooperative mixed action profile but, at each step, observe only the realized pure actions. This form of imperfect monitoring breaks classical trigger strategies, since deviations cannot be identified with certainty. To address this problem, we study how hypothesis testing can be used to sustain cooperation. First, we develop a framework that embeds statistical inference directly into strategic behavior. We introduce relaxed equilibrium notions that allow players to ignore vanishing-probability histories arising from rare but extreme realizations of the monitoring process. Within this framework, we formalize a generic test-then-punish strategy: players commit ex ante to a cooperative mixed action profile, continuously test whether observed play is consistent with this prescription, and permanently switch to punishment once sufficient statistical evidence of deviation accumulates. Under mild conditions on the testing procedure, this construction sustains any feasible and individually rational payoff for sufficiently patient players, yielding a Folk-theorem–type result under imperfect monitoring.
We then propose two explicit implementations of this strategy. The first relies on anytime-valid sequential tests, providing uniform control of Type I error over an infinite horizon and a finite expected detection time for payoff-relevant deviations. However, this strategy only accounts for stationary deviations and yields a Nash equilibrium. The second uses testing over batches with a fixed size, accommodating arbitrary deviations and achieving subgame-perfect Nash equilibrium, at the cost of losing global anytime guarantees on false punishments. Together, these results clarify a fundamental trade-off between statistical soundness and game-theoretic robustness, offering a principled foundation for data-driven cooperation in repeated strategic environments.
\end{abstract}
\noindent\rule{\textwidth}{0.4pt}
\vspace{1em}

\section{Introduction}\label{section:introduction}



Sustaining cooperation among strategic agents is central in many economic sectors, ranging from industrial organization and financial markets to environmental regulation and international trade. A common solution is given by punishment-based strategies: players agree on a cooperative course of action and continue to follow it as long as they observe that their opponents do the same. When a deviation is detected, they switch to a punishment phase. Provided that the threat of punishment is credible and sufficiently severe relative to the short-run gains from deviation, no player has an incentive to depart from the cooperative strategy.

Central to the literature on repeated games \citep{fudenberg1991game, mailath2006repeated}, the \emph{Folk Theorem} \citep{friedman1971non} formalizes the effectiveness of such punishment-based strategies. This theorem asserts that \textit{any} feasible and individually rational payoff profile of a given stage game can be sustained as a subgame-perfect Nash equilibrium \citep{moore1988subgame} when the game is repeated infinitely and players are sufficiently patient. This foundational result has been established both under perfect monitoring, where players are assumed to observe their opponents’ strategies  exactly, and under some imperfect monitoring setups, in which players observe only signals that are imperfectly correlated with those strategies.

The latter setting is of considerable practical interest, as economic actors are often obliged to operate with imperfect information structures. However, this strategy is notoriously difficult to analyze and
this has led to the development of a variety of approaches that primarily deliver existence results rather than providing implementable strategies. As a consequence, these approaches generally do not yield procedures that can directly exploit modern learning techniques and the abundance of data available in many real-world environments. This is all the more unfortunate given that real-world economic agents often make use of statistical analysis to implement punishment-based cooperative strategies, as illustrated by the following two examples.

A first concrete example is offered by financial auditing. Firms and auditors interact repeatedly over several periods, with firms expected to report financial statements truthfully and where deviations take the form of misreporting or manipulation. Because such behavior cannot typically be detected with certainty in a single period, compliance is assessed using statistical procedures. As formalized by the Public Company Accounting Oversight Board  \citep{as2305} in the United States, auditors form quantitative expectations, based on trends or regression-based models, and investigate outcomes that are statistically inconsistent with these benchmarks. When sufficient evidence accumulates, deviations trigger sanctions or legal liability. The prospect of delayed but severe punishment deters short-run deviations and sustains truthful reporting despite noisy monitoring.

Anti-doping enforcement in professional sports provides another example of cooperation sustained through explicit statistical testing in a repeated interaction. Athletes repeatedly choose whether to compete cleanly or not, while individual actions are unobservable and biological measurements are noisy. Under the Athlete Biological Passport, administered by the World Anti-Doping Agency, compliance is assessed using longitudinal statistical models that test whether an athlete’s biomarker profile is statistically consistent with natural variation. Sanctions are triggered when accumulated evidence renders clean behavior sufficiently unlikely \citep{wada}. Here again, the threat of punishment enforces cooperation.

The gap between the literature on repeated games with imperfect monitoring--which largely abstracts away from statistically grounded cooperative strategies--and real-world economic interactions, where data and learning techniques are pervasive, raises the following question: \emph{how can we formalize the use of hypothesis testing to sustain cooperation in repeated games}?

Additionally, statistical monitoring would be even more attractive if it could exploit prediction and inference methods. This leads us to a second, more practice-focused question: \emph{Is there a straightforward testing procedure that could realistically be employed in real-world instances of this problem}?

We answer the first question by developing a statistical framework for repeated games with imperfect public  monitoring, which reposes on two key concepts. First, due to the probabilistic nature of statistical inference, we introduce equilibrium notions that relax Nash and subgame--perfect Nash equilibria by allowing players to disregard rare-probability histories and tail events arising from extreme realizations of the monitoring process. Second, we formalize a generic \textit{test-then-punish} strategy: before play begins, players commit ex ante to a cooperative mixed action profile that implements a target feasible payoff. Then, observing only the realized pure actions, they continuously test the null hypothesis that each opponent is adhering to the prescribed mixed strategy. Cooperation persists as long as sequential tests fail to reject the null hypothesis. Once a test detects sufficient evidence of deviation, 
players switch permanently to a retaliatory punishment profile. Under mild conditions on the testing procedure---concretely met by the tests we design later---this template sustains any feasible and individually rational payoff for sufficiently patient players, yielding a Folk-Theorem–type result under imperfect public monitoring. We emphasize that our statistical approach allows us to bypass the standard decomposability and self-generation techniques typically used to analyze equilibria under imperfect monitoring \citep{abreu1990toward}. Instead, we build on familiar proof techniques from the perfect monitoring literature and address the stochasticity generated by random signals using probabilistic tools. The interest of such approach is to derive a simple and strategy, relying on a punishing rule with a timing that is explicitely controlled.

We address the second question by introducing two explicit testing strategies. The first one leverages anytime valid testing and e-processes \citep{ramdas2024hypothesis}. This framework enables the testing of a hypothesis across all game rounds simultaneously, maintaining validity regardless of the choice of stopping time. Assuming that players' strategies are stationary, we show that it is possible to construct a strategy that almost attains any feasible, rational payoff profile as a Nash equilibrium. Two statistical errors are critical in this context. On the one hand, a Type I error occurs when the null hypothesis is rejected despite being true, which corresponds to entering the punishment phase while players are still cooperating. Anytime-valid testing ensures that the probability of this event remains below a pre-specified threshold. We emphasize that our approach is novel in providing such a guarantee for games, contrary to the belief-based \citep{sekiguchi1997efficiency} or belief-free \citep{ely2005belief} methods typically employed in the literature which overlook this risk. On the other hand, a Type II error occurs when players fail to detect a deviation and do not trigger punishment. We establish a finite bound on the expected stopping time of cooperation when a player deviates, as a function of the magnitude of their deviation. Our result on an upper bound for the expected stopping time new in the testing literature and may be of independent interest.

The second strategy is designed to overcome the two limitations of the previous approach that only supports stationary deviations and solely provides Nash equilibrium (rather than subgame-perfect equilibrium). Instead of continuously testing a single null across all periods, we introduce a batch test-then-punish strategy that partitions the game into consecutive blocks of rounds. At the end of each batch, players run a statistical test on the empirical distribution of realized actions in the block to assess consistency with the prescribed cooperative mixed strategy. If the null is rejected in some batch, players switch permanently to the punishment profile. We show that this construction can approximate any feasible and individually rational payoff and yields a subgame-perfect Nash equilibrium. However, the batch procedure does not provide uniform, explicit control of Type I and Type II errors over the entire horizon.

Overall, there is a trade-off between the two testing procedures we propose. On the one hand, the anytime approach enjoys explicit Type I and Type II errors guarantees, making it an attractive option when such requirements are desired, typically for reasons of risk aversion or fairness. However, the anytime test-then-punish is weaker from a game-theoretic standpoint since it only prevents stationary deviations and does not enjoy subgame perfection. On the contrary, the batch test-then-punish handles arbitrary deviations, and is a subgame perfect Nash equilibrium, but  does not enjoy statistical guarantees regarding false positives. We summarize this discussion in \Cref{tab:properties}. Ultimately, the choice between the anytime and batch procedures comes down to whether players value statistical soundness or strong game-theoretic guarantees.

\begin{table}[t]
\centering
\renewcommand{\arraystretch}{1}
\setlength{\tabcolsep}{8pt}
\begin{tabular}{lcccc}
\toprule
 
& \makecell{Equilibrium\\type}
& \makecell{Non-stationary\\strategies}
& \makecell{Type I Error\\guarantee}
& \makecell{Type II Error\\guarantee} \\
\midrule
Anytime (\Cref{section:folk_anytime}) & Nash & \xmark & \cmark & \cmark \\
Batch (\Cref{section:folk_batch})   & subgame perfect Nash & \cmark & \xmark & \cmark \\
\bottomrule
\end{tabular}
\caption{Comparison of properties across methods.}
\label{tab:properties}
\end{table}

In summary, our main contributions include:
\begin{enumerate}
    \item \textbf{Statistical monitoring framework.} We formalize repeated strategic interaction when players prescribe mixed actions but only observe realized pure actions. To account for rare events in statistical monitoring, we extend Nash/subgame-perfect equilibria to account for histories that occur with small probability.
    \item \textbf{Generic test-then-punish strategy.} We propose a generic strategy where players commit ex ante to a cooperative mixed profile targeting a feasible payoff, test opponents’ action against this prescription, and switch to punishment upon detecting a deviation. Under mild conditions on the sequential test, this recovers a Folk-Theorem result under imperfect monitoring.
    \item \textbf{Explicit testing strategies.} Finally, we instantiate the template with two sequential tests: the first one controls false punishment over an infinite horizon while still detecting deviations in finite time. The second block-based variant handles any form of deviation and yields stronger high-probability sequential guarantees, at the cost of weaker global anytime Type I error control.
\end{enumerate}

\subsection*{Related work}

Our work stands at the intersection of distinct branches of literature: repeated games, algorithmic game theory, and hypothesis testing.

First, the Folk Theorem has been the subject of an extensive literature. The idea that repetition of a stage game expands the set of payoff profiles that can be sustained under a Nash equilibrium traces back at least to \citet{friedman1971non}, who assumes perfect public monitoring and time-averaged payoffs. The result is shown to hold under the same conditions with subgame perfect Nash equilibrium \citep{rubinstein1980strong}, and it is further extended to discounted payoffs \citep{fudenberg1986folk}. The Folk Theorem is also established under \textit{imperfect monitoring} \citep{aumann1995repeated}, where players do not observe the actual mixed strategies chosen by their opponents but only public signals correlated with them. By using decomposability and self-generation techniques \citep{abreu1990toward}, it is possible to show, albeit non-constructively, that any feasible rational payoff can be sustained under a subgame perfect Nash equilibrium provided pairwise identifiability holds \citep{fudenberg1994folk}. A further refinement arises with private signals \citep{mailath2002repeated}, where players no longer observe a common signal but instead receive different and potentially conflicting signals, which hinders immediate coordination. Several approaches have been proposed to address this difficulty, including allowing for communication among players \citep{compte1998communication}, assuming almost perfect private signals \citep{ely2002robust, horner2006private} or adopting belief-free strategies \citep{ely2005belief}. In addition, \citet{gossner1995folk} studies the sustainment of the punishment phase played as a mixed strategy while \citet{sugaya2022folk} establishes an existence Folk Theorem under private monitoring by using long review phases and implicit communication through actions to coordinate punishments. Recent advances have expanded the scope of repeated game analysis in several directions. These include uncertain repeated games, in which players may face different stage games over time \citep{LACLAU2012711}, imperfect monitoring in finite games \citep{horner25}, and environments with switching costs \citep{TSODIKOVICH2024137}. Folk-type results have also been established in more complex strategic settings, such as competing mechanisms \citep{ATTAR202179}, matching markets \citep{DEB20161}, stochastic games \citep{AIBA201458}, and network games \citep{LACLAU2012711}. Finally, recent work has connected the Folk theorem to algorithmic collusion \citep{CARTEA20261}, a rapidly growing literature that studies whether collusive behavior can emerge from interactions among algorithms \citep{calvano20}.

Our paper contributes to a growing literature at the intersection of economics and machine learning \citep{jordan2025collectivist}, aiming to move beyond traditional equilibrium analysis under uncertainty by replacing implicit belief-updating rules with implementable learning algorithms \citep{erev1998predicting, cesa2006prediction, roughgarden2010algorithmic}. In this line of research, players adapt their actions to observed outcomes using data-driven procedures. This approach has been successfully applied to the study of classical microeconomic problems, including dynamic pricing \citep{bertsimas2006dynamic, mueller2019low}, matching \citep{liu2020competing, johari2021matching}
the management of externalities \citep{scheid2024learning}, adverse selection and moral hazard \citep{bates2022principal, capitaine2024unravelling}, bidding in auctions \citep{morgenstern2016learning, feng2018learning},
social choice \citep{brandt2012computational}
among others. An illustrative example are principal-agent problems that recently benefited from learning techniques to enforce contracts \citep{fallah2024contract, collina2024repeated}, to infer the players' preferences \citep{kolumbus2024contracting, scheid2025online} or to play against unknown opponents \citep{mansour2020bayesian, kolumbus2022and, arunachaleswaran2025learning}. A key goal of this literature is to provide finite-time and finite-sample guarantees, such as explicit convergence rates to equilibria or to socially optimal outcomes \citep{syrgkanis2015fast}. Focusing on infinite horizon repeated games, our work advances this program by introducing explicit, test-based strategies to sustain cooperation.

Finally, our proposed deviation-detection mechanism crucially builds on the classical theory of statistical hypothesis testing \citep{fisher1934statistical, neyman1933ix, wald1992sequential, lehmann2005testing}.
From Bayesian formulations of evidence \citep{andraszewicz2015introduction} to general optimality-driven treatments of test construction \citep{lehmann2005testing} and the modern theory of multiple hypothesis testing \citep{benjamini1995controlling, shaffer1995multiple}, a substantial body of work has shaped contemporary hypothesis testing. Our technical approach relies more specifically on the recent theory of \emph{anytime-valid} sequential tests---most notably the notion of \emph{e-processes} \citep{shafer2005probability, ramdas2022testing, howard2021timea}. E-processes provide martingale-based measures of statistical evidence that can be updated online, without fixing a sample size in advance, while remaining valid under arbitrary stopping rules. In particular, an e-process (a nonnegative process with mean upper bounded by one) directly yields uniform Type I error control over an infinite horizon via Ville's inequality \citep{doob1939jean}. While anytime-valid testing is increasingly used in modern sequential decision problems, our work is, to the best of our knowledge, the first to leverage anytime-valid inference as a core enforcement primitive inside a repeated, game-theoretic test-and-punish construction.

\subsection*{Organization}

The paper is organized as follows. In \Cref{section:setting}, we introduce our framework and recall the Folk Theorem under perfect monitoring. In \Cref{section:folk_anytime}, we introduce the anytime test-then-punish strategy and show that, provided it is possible to control Type I and Type II errors, it leads to a Folk Theorem under imperfect monitoring. We then  explicitly design sequential tests satisfying these conditions thanks to e-processes. Finally, in \Cref{section:folk_batch}, we introduce the \textit{batch} test-then-punish and establish, under mild conditions, another version of the Folk Theorem under imperfect monitoring. Here again, we exhibit simple sequential tests which satisfy the aforementioned conditions.

\section{Setting and the Folk Theorem}\label{section:setting}


\paragraph{Notations}For any finite measurable space $\cA$, we denote by $\Delta(\cA)$ the set of probability measures over $\cA$. In particular, if $\cA=\{a_1,\ldots,a_K\}$, for any $w\in\Delta(\cA)$, we denote by $w[a_\ell]\in[0,1]$ the mass put by $w$ on $a_{\ell}\in\cA$. If $\cA=\cA^1\times\ldots\times\cA^N$ is the product of $N>0$ spaces, we write for any $i\in[N]$ $\cA^{-i}=\cA^1\times\ldots\times\cA^{i-1}\times\cA^{i+1}\times\ldots\times\cA^N$ so $\cA=\cA^i\times\cA^{-i}.$ Likewise, if  $\bw=w^1\otimes\ldots\otimes w^N$ is a product measure on $\cA$, for any $i\in[N]$ we denote by $w^{-i}=w^1\otimes\ldots\otimes w^{i-1}\otimes w^{i+1}\otimes\ldots\otimes w^N$ so $\bw=w^i\otimes w^{-i}$. Finally for any finite set $\cX$, we denote by $\mathsf{conv}(\mathcal{X})$ the convex hull of $\cX$.

\subsection{Framework}


We consider a repeated game with an infinite horizon and $N>0$ players. Each player $i\in[N]$ has access to a pure action set $\cA^i=\{a^i_1,\ldots,a^i_K\}$ such that $\Card(\cA^i) = K>0$. Their preferences are represented by utility functions $u^i \colon \cA \rightarrow[0,1]$, where $\cA=\cA^1\times\ldots\times\cA^N$. At any round $t\geq 0$, each player $i$ picks a  mixed strategy $w^i_t \in\Delta(\cA^i)$, so their expected payoff reads $u^i(w^i_t, w^{-i}_t)$ where, by a slight abuse of notation, we write $u^i (w^i_t , w^{-i}_t)=\sum_{a^i, a^{-i}} u ^i (a^i , a^{-i})\, w^i_{t}[a^i]\,w^{-i}_t[a^{-i}]$. At any time $t\geq0$, denoting by $\cH_t$ the $t$-ary Cartesian product of $\cA$, players have access to a public history $\hist_t\in\cH_t$ containing past realized pure actions, denoted by $\hist_t =\{(A^i_0,A^{-i}_0),\ldots,(A^i_{t-1},A^{-i}_{t-1})\}$, which they can use to inform their strategies. Observe our set of signals is the set of pure actions $\cA$, which corresponds to the canonical signal space \citep{horner2006private}. Formally, with $\cH=\cup_{t\geq0}\cH_t$, the sequence of moves $(w^i _0 , w^i _1,\ldots)$ from player $i$ is generated according to a (behavior) strategy
\begin{align*}
s^i \colon \cH \rightarrow \Delta(\cA^i) \eqsp,
\end{align*}
which maps histories to distributions over pure actions.  We denote by $\cS^i$ the set of such strategies for player $i$, and $\cS=\cS^1\times\ldots\times\cS^N$. Note that we focus on public strategies, which only depend on public histories $\hist_t \in\cH$ and  not private ones\footnote{For any player $i\in[N]$, we could indeed define  private histories $\tilde{\hist}_t = \{(w^i_0, A^i_0, A^{-i}_0),\ldots,(w^i_{t-1}, A^i_{t-1}, A^{-i}_{t-1})\}\in\tilde{\cH}_t$ and consider private strategies $\tilde{s}^i:\cup_{t\geq0}\tilde{\cH}\rightarrow \Delta(\cA^i)$. This however would break the recursive structure of the repeated game and significantly complicate the analysis \cite{mailath2006repeated}. Extending our study to this case is an interesting future research direction.}. Then, the repeated game unfolds as follows. Starting from a history $\hist_0 = \emptyset$, each player $i\in[N]$ picks a strategy $s^i\in\cS^i$ with $s^i (\hist_0)=w^i_0$, and for any $t\geq 0$
\begin{enumerate}
    \item They play a mixed strategy $w^i_t = s^i (\hist_t)$,
    \item A pure action $A^i _t \sim w^i _t$ is drawn from it for each player $i \in N$,
    \item Players receive a payoff $u^i (A^i_t, A^{-i}_t)$ and observe $A^{-i}_t \in\cA^{-i}$. The history for the next round is then updated as $\hist_{t+1}=\hist_t \cup \sigma(\{A^i_t, A^{-i}_t\})$.
\end{enumerate}
Observe that monitoring is imperfect because of step 3, where public histories only contain pure actions and not mixed strategies. For any strategy profile $\bs=(s^1,\ldots,s^N)\in\cS$, the strategies and the players' independent randomizations induce a unique probability measure $\P^{\bs}$ over the space of infinite action histories $\cA^{\mathbb{N}}$. We denote by $\E^{\bs}$ the expectation with respect to $\P^{\bs}$. Player $i$'s discounted payoff for a strategy profile $\bs=(s^1, \ldots, s^N)\in\cS$ reads
\begin{equation}\label{equation:definition_discounted_payoff}
U^i(\bs) = (1-\beta)\sum_{t=0}^{\infty}\beta^t \,\E^{\bs}\parentheseDeux{u^i (A^1 _t,\ldots, A^N _t)}\eqsp,
\end{equation}
where $0<\beta<1$ is a discount common factor. Finally, for any time $t\geq 0$ and history $\hist_t\in\cH$, we also define the \textit{continuation payoff} of player $i$ for strategies $\bs$ as
\begin{equation}
    \label{def:continuation_payoff}
    U^i (\bs\,;\, \hist_t ) = (1-\beta)\,\sum_{\ell=t}^\infty \beta^{\ell-t}\,\E^{\bs}\parentheseDeux{ u^i (A^1_{\ell},\ldots,A^N_{\ell})\,|\,\hist_t}\eqsp.
\end{equation}
We now introduce the two public equilibrium concepts used in this paper. 
\begin{definition}[$(\varepsilon, \mathfrak{S})-$NE]\label{definition:ne} For any $\mathfrak{S}=\mathfrak{S}^1\times\ldots\times\mathfrak{S}^N\subseteq\cS$ and $\varepsilon\geq0$, a strategy profile $\bs_\star =(s^1_\star,\ldots, s^N_\star)\in\mathcal{S}$ is a $(\varepsilon, \mathfrak{S})$-\emph{Nash equilibrium} if
    $$
U^i (s^i _\star, s^{-i}_\star)\geq U^i (s^i, s^{-i}_\star)-\varepsilon\qquad\text{for any}\quad i\in[N]\quad\text{and}\quad s^i\in\mathfrak{S}^i\eqsp.
$$
\end{definition}
\Cref{definition:ne} corresponds to the classic $\varepsilon$-Nash equilibrium concept, where the set of deviations available to players can be restricted to a subset $\mathfrak{S}\subset\cS$ of strategies (naturally, setting $\mathfrak{S}=\cS$ recovers the $\varepsilon$-Nash equilibrium). In what follows, we refer to a $(\varepsilon, \mathfrak{S})$-Nash equilibrium as  $(\varepsilon, \mathfrak{S})-$NE, and a $(\varepsilon, \cS)$-Nash equilibrium as a $\varepsilon$-NE. At a high level, this concept is motivated by the fact that incorporating all possible deviations into equilibrium analysis with statistical tests is information-theoretically infeasible--see \Cref{section:folk_anytime} for an extensive discussion.
We also introduce the following subgame perfect equilibrium concept.
\begin{definition}[$(\varepsilon, \delta)$-HP-SPNE]\label{definition:spne}
    For any $\delta\in(0,1)$ and $\varepsilon\geq0$, we call a strategy profile $\bs_{\star}=(s^1 _\star,\ldots, s^N _\star)\in\cS$ a $\delta$-\textit{high probability $\varepsilon$-subgame perfect Nash Equilibrium} if, for any $t\geq 0$,
$$\text{there exists}\quad \mathsf{\mathsf{H_t}}\subseteq\cH_t\quad\text{such that}\quad \P^{\bs_{\star}} (\hist_t\in\mathsf{H}_t)\geq 1-\delta\eqsp,$$
and for any history $\hist_t \in \mathsf{H}_t$,
\begin{align*}
U^i (s^i _\star ,s^{-i} _\star\,;\, \hist_t) \geq U^i (s^i , s^{-i}_\star\,;\,\hist_t)-\varepsilon\quad\text{for any }i\in[N]\quad\text{and}\quad s^i \in\cS^i\eqsp.
\end{align*}
\end{definition}

To the best of our knowledge, \Cref{definition:spne} has not been formalized in the repeated-games literature. It can be viewed as a probabilistic relaxation of subgame (public) perfection, in which approximate sequential rationality is required only on histories that arise with high probability under the equilibrium strategy. Note that if $\delta=0$, we recover the classic $\varepsilon$-Subgame Perfect Nash equilibrium concept. Here again, we must exclude degenerate subgames from the equilibrium analysis due to the statistical nature of our approach. In the rest of the text, this equilibrium concept is referred to as a $(\varepsilon, \delta)$-HP-SPNE. We  refer to a $(0,0)$-HP-SPNE as an SPNE.

\subsection{The Folk Theorem under perfect monitoring}
We now recall the classic Folk Theorem with mixed strategies, which we ultimately aim to extend to the case of imperfect monitoring. To ease the exposition, and in this section only, we assume that players enjoy perfect monitoring. Formally, in this section, the repeated game is assumed to unfold as follows. Starting from a history $\hist_0 = \emptyset$, player $i$ picks a strategy $s^i\in\cS^i$ with $s^i (\hist_0)=w^i_0$, and for any $t\geq 0$
\begin{enumerate}
    \item They play a mixed strategy $w^i_t = s^i (\hist_t)$,
    \item They receive a payoff $u^i (w^i_t, w^{-i}_t)$ and observe $w^{-i}_t \in\Delta(\cA^{-i})$. The public  history is updated for the next round as $\hist_{t+1}=\hist_t \cup\sigma(\{w^i_t , w^{-i}_t\})$.
\end{enumerate}
Note that for any $t\geq0$, the history $\hist_t = \{(w^i_0, w^{-i}_0),\ldots, (w^{i} _{t-1}, w^{-i}_{t-1})\}$ contains the \textit{actual} mixed strategies played so far, not realizations from it. This situation corresponds to perfect monitoring.

The \textit{Folk Theorem} states the following fact: in contrast to one-shot games, for which a given payoff profile can only be reached by rational agents if it corresponds to a Nash equilibrium, in a repeated game, \textit{any} feasible payoff profile $\bv=(v^1 ,\ldots, v^N)\in[0,1]^N$ above some threshold can be sustained as a (subgame perfect) Nash equilibrium (SPNE).

A standard approach to constructing an SPNE that achieves $\bv$ is to rely on a \textit{grim-trigger} strategy \citep{fudenberg1986folk}. Broadly speaking, players agree to stick to a strategy profile  $\bw_{\bv}=(w_{\bv} ^1 ,\ldots, w_{\bv} ^N)\in\Delta(\cA)^N$ achieving $\bv$, that is, $(u^1 (\bw_{\bv}),\ldots, u ^N (\bw_{\bv}))=\bv$. If one of them  deviates, their opponents switch to a punishment (typically a Nash equilibrium) action forever. If players care sufficiently about future, that is $\beta$ is close enough to one, the threat of punishment deters players from deviating from $\bw_{\bv}$. This ensures that $\bw_{\bv}$ is an SPNE inducing payoffs $\bv$.

We formally introduce the grim-trigger strategy and the Folk Theorem below. Fix some reference mixed strategy Nash equilibrium $\bb=(b^1 ,\ldots b^{N})\in\Delta(\cA)$ (which exists by the Nash theorem) and write $\underline{u}^i = u^i (\bb)$ the associated payoff for player $i$. For any history $\hist_t \in\cH$, the grim-trigger strategy profile $\bs_{\bv}=(s^1_{\bv},\ldots, s^N_{\bv})\in\cS$ is given, for any $i\in[N]$, by
\begin{equation}
    \label{def:grim}
    s_{\bv}^i (\hist_t) = \begin{cases}
        w_{\bv} ^i &\text{if }\hist_t = (\bw_{\bv},\ldots,\bw_{\bv})\eqsp,\\
        b^i &\text{otherwise}\eqsp.
    \end{cases}
\end{equation}In what follows, we write $\bar{u}_i =\max_{\bw}u^i (\bw)$ for player $i$'s maximum payoff, and $\cU^i  =\mathsf{conv}( u^i (\cA))$ the set of feasible instantaneous payoffs under mixed strategies for player $i$.
\begin{restatable}{theorem}{theoremfolk}\label{theorem:folk} Assume perfect monitoring. Fix a payoff profile $\bv=(v^1,\ldots, v^N)\in\cU ^1\times \ldots\times\cU ^N$ such that $v^i \geq \underline{u}^i$ for any $i\in[N]$. If $\beta \geq (\bar{u}^i - v^i)(\bar{u}^i - \underline{u}^i)^{-1}$ for any player $i$, then $\bs_{\bv}$ defined as in \eqref{def:grim} satisfies:
\begin{enumerate*}[label=(\roman*)]
    \item $
U^i (\bs_{\bv})=v^i$ for any $i\in[N]$;
    \item $\bs_{\bv}$ is an SPNE.
\end{enumerate*}
\end{restatable}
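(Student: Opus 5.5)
Part (i) is immediate. Along the equilibrium path no deviation occurs, so every history is of the form $(\bw_{\bv},\ldots,\bw_{\bv})$. Hence $s^i_{\bv}(\hist_t)=w^i_{\bv}$ for all $t$ and all $i$, and each stage payoff equals $u^i(\bw_{\bv})=v^i$. Plugging this into \eqref{equation:definition_discounted_payoff} gives $(1-\beta)\sum_t\beta^t v^i=v^i$.

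For (ii) I use the one-shot deviation principle, which holds here because payoffs are bounded and discounted, so the game is continuous at infinity. It therefore suffices to show that at every history $\hist_t$, no player gains by deviating at round $t$ alone and reverting to $s^i_{\bv}$ afterwards. The histories split into two cases.

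\emph{Punishment histories.} Suppose $\hist_t$ is not of the form $(\bw_{\bv},\ldots,\bw_{\bv})$. Then every continuation history also deviates from the cooperative path, so the opponents play $b^{-i}$ forever regardless of what $i$ does. The continuation game thus reduces to the repetition of a stage game in which opponents play a fixed Nash equilibrium profile. A one-shot deviation $w^i$ yields $(1-\beta)u^i(w^i,b^{-i})+\beta\underline{u}^i$. Since $\bb$ is a Nash equilibrium, $u^i(w^i,b^{-i})\le \underline{u}^i$, so this is at most the conforming payoff $\underline{u}^i$.

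\emph{Cooperative histories.} Suppose $\hist_t=(\bw_{\bv},\ldots,\bw_{\bv})$. Conforming yields continuation payoff $v^i$. A one-shot deviation to $w^i\neq w^i_{\bv}$ is perfectly observed and triggers $\bb$ forever. By the punishment case, the best continuation after the deviation is $\underline{u}^i$, so the deviation payoff is at most $(1-\beta)\bar u^i+\beta\underline{u}^i$, using $u^i(w^i,w^{-i}_{\bv})\le\bar u^i$. Deviation is therefore unprofitable whenever $v^i\ge(1-\beta)\bar u^i+\beta\underline u^i$, i.e.\ $\beta(\bar u^i-\underline u^i)\ge \bar u^i-v^i$. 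This is exactly the stated condition when $\bar u^i>\underline u^i$. If $\bar u^i=\underline u^i$, then $v^i=\bar u^i$ and the inequality holds trivially.

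The main point needing care is justifying the one-shot deviation principle. I would either cite it or argue directly: any profitable deviation can be truncated at a large time $T$ at a loss of at most $\beta^T$. A backward induction over the truncated deviation, using the two cases above, then shows it is no better than conforming. A secondary subtlety is a deviating player who alters play but whose mixed action coincides with $w^i_{\bv}$; under perfect monitoring of mixed actions such a change is not a deviation at all, so it causes no issue.
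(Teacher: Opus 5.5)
Your proof is correct, but it proves (ii) by a different route from the paper's. The paper does not use the one-shot deviation principle. At a cooperative history it takes an arbitrary continuation strategy $s$ for player $i$ and lets $T$ be the first period on the path at which $s$ departs from $w^i_{\bv}$; the path is deterministic, since mixed actions are observed and payoffs are expected payoffs. It then bounds the payoff by $v^i$ before $T$, by $\bar{u}^i$ at $T$, and by $\underline{u}^i$ in every period after $T$. The last bound holds because opponents play $b^{-i}$ forever and $\bb$ is a stage Nash equilibrium, so nothing $i$ plays against $b^{-i}$ earns more than $\underline{u}^i$. This gives $v^i+\beta^{T-t_0}\bigl((1-\beta)\bar{u}^i+\beta\underline{u}^i-v^i\bigr)\leq v^i$, which is exactly your inequality $v^i\geq(1-\beta)\bar{u}^i+\beta\underline{u}^i$.

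The paper's direct argument handles multi-period deviations in one step and needs no continuity-at-infinity or backward-induction argument. It relies on the grim-trigger punishment being absorbing and consisting of a static equilibrium. Your route reduces everything to single-period checks at each history. The cost is justifying the principle, which you rightly flag, and which is clean here because the perfect-monitoring game is deterministic with bounded, discounted payoffs. The benefit is that it carries over unchanged to non-absorbing or history-dependent punishment phases, where the ``first deviation time'' bound is no longer immediate. Your separate handling of the degenerate case $\bar{u}^i=\underline{u}^i$, where the patience condition is undefined, covers a point the paper's statement leaves implicit.
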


The proof of \Cref{theorem:folk}, which is well known \citep{fudenberg1986folk}, is given for completeness in  \Cref{section:proofperfect}. This theorem states that provided players are sufficiently patient (as captured by the lower bound on the discount factor $\beta$), the grim-trigger strategy \eqref{def:grim} allows to reach the agreed utility profile $\bv$ as an SPNE.

Note that in \Cref{theorem:folk}, punishment corresponds to a Nash equilibrium, as is standard in the literature \citep{fudenberg1986folk}. One could strengthen the result by considering punishments that deliver minmax payoffs, but this requires a more involved analysis—in particular, punishers must themselves be incentivized to continue punishing \citep{fudenberg1986folk}. For clarity and ease of exposition, we therefore restrict attention to Nash-equilibrium punishments in this paper.
  \section{Equilibrium Under Imperfect Monitoring via Anytime Testing}\label{section:folk_anytime}
The perfect monitoring setting presented in the previous section is unfortunately rarely encountered in real-life situations.
We thus instead ask whether similar results can be obtained in a setting where players do not observe actual mixed strategies, but only \emph{pure actions drawn from them}. \label{subsection:testthenpunish} In this imperfect information setting, applying the grim-trigger \eqref{def:grim} is impossible, since players do not directly observe the mixed strategies played by their opponents. In particular, it is no longer possible to assess with certainty whether a player $j$ deviated from $w^{j}_{\bv}$ (except when a pure action from outside the support of $w^{j}_{\bv}$ is drawn). However, we will show that it is possible to use \textit{statistical tests} to check whether such a deviation occurred with high probability. If a player gathers sufficient statistical evidence to support this hypothesis, they could trigger a punishment similarly to \eqref{def:grim}. This is the idea behind the \textit{test-then-punish} strategy that we present formally below.

Assume that players agree on  feasible payoffs $\bv=(v^1,\ldots, v^N)\in\cU^1\times\ldots\times\cU^N$ and profile of mixed strategies $\bw_{\bv}=(w^1_{\bv} ,\ldots,w^N_{\bv})\in\Delta(\cA)$ such that $v^i = u^i (\bw_{\bv})$ for any $i\in[N]$. We then define $\boldsymbol{\sigma}_{\bv}=(\sigma_{\bv}^1,\ldots, \sigma_{\bv}^N)\in\cS$ where for any $j\in[N]$, $\sigma^{j}_{\bv}(\hist_t)=w^{j}_{\bv}$ for all $t\geq0\text{ and }\hist_t\in\cH_t$. 
Then, each player $j$ can test as follows for $i\ne j$
\begin{equation}
    \label{eq:hoanytime}
\rmH^i _{0}:\;\eqsp s^{i}=\sigma^{i}_{\bv} \qquad\text{versus}\qquad \rmH^i _{1} \colon \;\,s^{i}\ne \sigma^i_{\bv}\eqsp.
\end{equation}
$\rmH^i _{0}$ being true means that player $i$ sticks to the cooperative strategy $w^{i}_{\bv}$. On the contrary, rejecting $\rmH^i _{0}$ amounts to concluding that the player $i$ does not play the agreed-upon strategy.

To decide whether $\rmH^i _{0}$ is true, player $j\ne i$ may rely on a sequence of tests $\bpsi^i = \{\psi^i _t\}_{t\geq0}$ where $\psi^i_0=0$ and for any  $t\geq 1, \psi^i_t \colon \cH_t\rightarrow \{0,1\}$. If $\psi^i _t=1$, then $\rmH^i_0$ has been rejected at time $t$. Here, we assume that all players $j\ne i$ use the same test for $\rmH^i_0$ (hence the absence of $j$ indexing in $\psi^i_t$) given that they all have access to a common public history $\hist_t \in\cH_t$ at time $t$. We can then define the \textit{test-then-punish} strategy $\bs_{\bv}\in\cS$ where for $j\in[N]$, $s^j _{\bv}$ is given by

\begin{equation}\label{def:testthenpunish}
s^j_{\bv}(\hist_t)=
\begin{cases}
w^j_{\bv}& \text{if } \prod\limits_{s=0}^t \prod\limits_{i=1}^{N}\,(1-\psi^i _s (\hist_{s}))=1\eqsp ,\\
b^{\,j}& \text{otherwise} \eqsp.
\end{cases}
\end{equation}
Informally, at time $t \geq 0$, if players have not detected any deviation, i.e., $\psi_t^i = 0$ for all $i \in [N]$, they continue playing the cooperative mixed profile $w_{\boldsymbol v}$. Once a deviation is detected, that is, $\psi_t^i = 1$ for some $i$, all players permanently switch to the punishing strategy in the following rounds.

To simplify notation, for any $i\in[N]$ we define the (random) rejection time associated with the sequence of tests $\bpsi^i$ as
\begin{equation}\label{def:rejectiontime}
\tau_{\boldsymbol{\psi}^i} =\inf\bigl\{t\geq 0 \colon \psi^i _t=1\bigr\}\in\N \cup\{\infty\} \eqsp,
\end{equation}
with the convention $\inf\varnothing =\infty$. In words, $\tau_{\boldsymbol{\psi}^i}$ is the first time $\rmH^i _0$ is rejected; that is, the first time player $i$ has been detected as deviating. Similarly, we define $\tau_{\bpsi}$ as
$$
\tau_{\bpsi}=\min_{i \in [N]} \tau_{\bpsi^i}\eqsp,
$$
which is the first time \textit{any} null hypothesis  $\rmH^1_0,\ldots, \rmH^N_0$ has been rejected, that is any player has been detected as deviating. It should be clear from \eqref{def:testthenpunish} that players play $\bw_{\bv}$ for any $t\leq \tau_{\bpsi}$, and turn to the punishment $\mathbf{b}$ for any $t>\tau_{\bpsi}$. 





Observe that the test-then-punish strategy \eqref{def:testthenpunish} mimics the grim trigger strategy under perfect monitoring \eqref{def:grim} with one notable difference: while the latter triggers a punishment as soon as a deviation is observed (as allowed by perfect monitoring), the former triggers a punishment as soon as one null hypothesis $\rmH^i_0$ is rejected (because of imperfect monitoring). If tests were perfect---that is, if $\rmH^i_0$ would be rejected if and only if player $i$ actually deviated, the two approaches would be equivalent. However, statistical tests are never perfect. They may suffer from Type I error (rejecting $\rmH_0$ while it is true) and Type II error (failing to reject $\rmH_0$ while it is false), because they rely on stochastic observations. In our case, this means that the test-then-punish strategy \eqref{def:testthenpunish} may either wrongfully trigger punishment while players are actually sticking to $\bw_{\bv}$, which corresponds to Type I error, or fail to punish a player who is deviating, which corresponds to Type II error. Both kinds of errors may prevent the test-then-punish procedure from achieving its goal of ensuring a target payoff profile $\bv$ under a (subgame perfect) Nash equilibrium. 

It is therefore natural to ask that the family of sequential tests $\bpsi=(\bpsi^1,\ldots, \bpsi^N)$ satisfies some conditions ensuring that Type I and Type II errors remain low, so that $\bs_{\bv}$ is effective.  In particular, regarding the Type I error,  we introduce the following requirement.
\begin{condition}[Low Type I error probability]    \label{def:lowlevel}There exists $\gamma \in (0,1)$ such that for any $i\in[N]$
\begin{equation}
\E^{\bs_{\bv}}[\2{\tau_{\bpsi^i}<\infty}]
= \P^{\bs_{\bv}}\parenthese{\tau_{\bpsi^i}<\infty}
\leq \gamma/N\eqsp.
\end{equation}
\end{condition}If \Cref{def:lowlevel} is true, then for any $i\in[N]$ the test sequence $\bpsi^i$ is such that the probability to wrongfully reject $\rmH^i_0$ under the cooperative strategy $\bw_{\bv}$ remains low. Equivalently, the probability to wrongfully detect a deviation from player  $i$ while they cooperate is low. Note that \Cref{def:lowlevel} immediately controls the probability of wrongfully entering the punishment phase, since a union bound gives
\begin{align*}
\P^{\bs_{\bv}}(\tau_{\bpsi}<\infty)=\P^{\bs_{\bv}}(\cup_{i\in[N]}\,\{\tau_{\bpsi^i}<\infty\})\leq\sum_{i\in[N]}\P^{\bs_{\bv}}(\tau_{\bpsi^i}<\infty)\leq\gamma\eqsp.
\end{align*}
This ensures that cooperation can be sustained in the long run with high probability. Note that \Cref{def:lowlevel} becomes more stringent with the number of players. Indeed, more players imply more hypothesis tests, which increases the probability of false alarms and therefore calls for tighter control of each individual test. Controlling the false discovery rate \citep{benjamini1995controlling}, particularly via anytime extensions \citep{wang2022false}, may provide a less conservative way to handle the growing number of hypotheses when a player faces multiple opponents, while still limiting false alarms.

Regarding Type II error, the family of sequential tests $\bpsi=(\bpsi^1,\ldots, \bpsi^N)$ should be such that if a player deviates, they ought to be detected in finite time with high probability. Note that it is extremely hard, if not hopeless, to detect deviations that are arbitrarily close to the collaborative strategy $\bw_{\bv}$. We thus focus on deviations that are ``far enough'' from $\bw_{\bv}$. To formalize this idea, we define for any $\varepsilon>0$ and $i\in[N]$ the total variation-ball of radius $\varepsilon$ centered in $w^i_{\bv}$
\begin{align*}
\mathsf{B}^i(\bw_{\bv}\, , \,\varepsilon) = \{s\in\cS^i \colon \: \lVert s(\hist_t)\,-\,w^{i}_{\bv}\rVert_1\leq 2 \varepsilon\text{ for any }t\geq0\text{ and }\hist_t\in\cH_t\}\eqsp.
\end{align*}
Notice that for any $\mathfrak{S}^i\subseteq\cS^i,\,\mathfrak{S}^i\setminus\mathsf{B}^i(\bw_{\bv},\varepsilon)$ contains all deviations in $\mathfrak{S}^i$ for player $i$ that are $\varepsilon$-significant, in the sense that they are at least $\varepsilon$ far from the cooperative strategy $w^{i}_{\bv}$ in total variation. This situation is depicted on \Cref{figure:deviation}.  Then, for any $\mathfrak{S}=\mathfrak{S}^1\times\ldots\times\mathfrak{S}^N \subseteq\cS$, we define the set of strategy profiles containing at least one $\varepsilon$-significant deviation
\begin{equation}\label{eq:spacedeviation}
\mathfrak{S}(\bw_{\bv},\varepsilon) = \defEns{\bs=(s^1,\ldots,s^N)\in\mathfrak{S},\:\text{there exists }i\in [N] \colon \, s^i \in\mathfrak{S}^i\setminus\mathsf{B}^i(\bw_{\bv}\,,\,\varepsilon)}\eqsp.
\end{equation}


\begin{figure}[ht]
\centering
\resizebox{0.35\textwidth}{!}{%
\begin{tikzpicture}
    \draw[thick, fill=outerS]
        plot [smooth cycle, tension=0.8]
        coordinates {
            (-4.2,0.6) (-3.5,2.2) (-1.5,2.8) (1.2,2.3)
            (3.4,1.4) (3.9,-0.4) (2.6,-2.4)
            (0,-2.9) (-3,-2.2)
        };
    \node at (3.0,2.2) {$\cS^i$};
    \draw[thick, fill=innerS]
        plot [smooth cycle, tension=0.8]
        coordinates {
            (-3,0.5) (-2.4,1.9) (-0.8,2.3) (1.2,1.9)
            (2.3,0.9) (2.1,-1.1)
            (0.8,-2.0) (-1.3,-2.1) (-2.6,-1.2)
        };
    \node at (2,1.8) {$\mathfrak{S}^i$};
    \filldraw[fill=white, draw=black, dashed]
        (0,0) circle (0.8);
    \fill (0,0) circle (2pt);
    \node[below right] at (0,0) {$w^i_{\bv}$};
    \draw[->] (0,0) -- (0.8,0) node[midway, above] {$\varepsilon$};
\end{tikzpicture}%
}
\caption{If one player $i\in[N]$ picks a strategy $s^i$ in the green subset, then $\bs=(s^i, s^{-i})\in\mathfrak{S}(\bw_{\bv},\varepsilon)$.}\label{figure:deviation}
\end{figure}
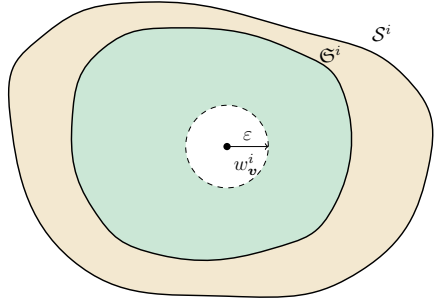

We are now ready to state our second requirement. Intuitively, the following condition states that any strategy profile  $\bs\in\mathfrak{S}$ containing a  $\varepsilon$-significant deviation triggers a punishment in finite time. 
\begin{condition}\label{def:highpower}
 For $\varepsilon>0$, there exists a sequence $\parentheseLigne{\zeta^{(\varepsilon)}_t}_{t\geq0}$ such that
 \begin{equation}
\text{for any }t\geq0\,,\quad \sup_{\bs\in\mathfrak{S}(\bw_{\bv}\,,\,\varepsilon)}\P^{\bs}(\tau_{\bpsi}\geq t)\leq \zeta^{(\varepsilon)}_t\quad\text{and}\quad \sum_{t=0}^\infty\,\zeta^{(\varepsilon)}_t < \infty\eqsp.
\end{equation}
\end{condition} \Cref{def:highpower} ensures that for any strategy profile  $\bs\in\mathfrak{S}(\bw_{\bv},\varepsilon)$ (such that at least one player deviates from $w^i_{\bv}$ by at least $\varepsilon$), the punishment time $\tau_{\bpsi}$ does not exceed $t$ with high probability. In particular, \Cref{def:highpower} implies that players enter the punishment phase in finite time, in expectation, when one of them $\varepsilon$-significantly deviates since
$$
\sup_{\bs\in\mathfrak{S}(\bw_{\bv},\varepsilon)}\E^{\bs}[\tau_{\bpsi}] = \sup_{\bs\in\mathfrak{S}(\bw_{\bv},\varepsilon)}\sum_{t\geq0}\P^{\bs}(\tau_{\bpsi}\geq t)\leq \sum_{t\geq 0}\,\sup_{\bs\in\mathfrak{S}(\bw_{\bv},\varepsilon)}\P^{\bs}(\tau_{\bpsi}\geq t)\leq \sum_{t=0}^\infty \zeta^{(\varepsilon)}_t < \infty\eqsp.
$$

We remark that  condition \eqref{def:highpower} leaves  ``small'' deviations in $\mathsf{B}^{i}(\bw_{\bv},\varepsilon)$ aside: such deviations are not required to trigger a punishment in finite time. In fact, this is not a matter of concern since such deviations cannot bring a significant utility gain, and are therefore covered by our approximate equilibrium notion. This is established formally in the following proposition. 
\begin{restatable}{proposition}{lemmasmalldeviation}\label{lemma:smalldeviation}Fix $\varepsilon>0$. For any $i\in[N]$, if $s^i \in\mathsf{B}^i(\bw_{\bv}\,,\,\varepsilon)$, then
$\;
U^i (s^i, s^{-i}_{\bv})\leq v^i + \varepsilon\eqsp.
$
\end{restatable}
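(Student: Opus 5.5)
The plan is to compare the deviating profile $(s^i, s^{-i}_{\bv})$ round by round with the cooperative profile $\bw_{\bv}$. I will write $U^i(s^i,s^{-i}_{\bv}) = (1-\beta)\sum_{t\ge0}\beta^t\,\E[u^i(w^i_t,w^{-i}_t)]$, where $w^i_t=s^i(\hist_t)$ and $w^{-i}_t = s^{-i}_{\bv}(\hist_t)$ is the opponents' mixed action. The latter is either $w^{-i}_{\bv}$ (before $\tau_{\bpsi}$) or $b^{-i}$ (after). Here I use the tower property: conditionally on $\hist_t$, the realized actions are independent draws from $w^i_t\otimes w^{-i}_t$.

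\textbf{Cooperation rounds.} On the event $\{t\le\tau_{\bpsi}\}$ the opponents play $w^{-i}_{\bv}$. Since $u^i\in[0,1]$ and $\|w^i_t-w^i_{\bv}\|_1\le 2\varepsilon$, I will bound
\[
|u^i(w^i_t,w^{-i}_{\bv})-u^i(w^i_{\bv},w^{-i}_{\bv})| \le \sup_{a^{-i}}\Bigl|\sum_{a^i}u^i(a^i,a^{-i})\,\bigl(w^i_t[a^i]-w^i_{\bv}[a^i]\bigr)\Bigr|\le \varepsilon .
\]
The constant $\varepsilon$ rather than $2\varepsilon$ comes from the fact that the differences sum to zero. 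One can therefore subtract $1/2$ from $u^i$, obtaining a function with values in $[-1/2,1/2]$, and get at most $\tfrac12\|w^i_t-w^i_{\bv}\|_1\le\varepsilon$. Consequently the stage payoff in these rounds is at most $v^i+\varepsilon$.

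\textbf{Punishment rounds.} On the event $\{t>\tau_{\bpsi}\}$ the opponents play the Nash profile $b^{-i}$. Here I can use only the Nash property: $u^i(w^i_t,b^{-i})\le u^i(\bb)=\underline{u}^i$. I then need $\underline{u}^i\le v^i$, which holds under the paper's standing assumption of individual rationality, $v^i\ge\underline u^i$, used in \Cref{theorem:folk}. With this in hand, every stage payoff is at most $v^i+\varepsilon$, and summing with weights $(1-\beta)\beta^t$ gives the claim.

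\textbf{Main obstacle.} The delicate step is the punishment phase, because the statement as written does not list $v^i\ge\underline u^i$ explicitly. If that assumption were not available, I would fall back on a different argument: note that each stage payoff is an average over the realized regime, and appeal to the fact that the target $\bv$ is assumed individually rational throughout the section. I would therefore make that hypothesis explicit in the proof.
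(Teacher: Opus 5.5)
Your argument is correct and is essentially the paper's. You split rounds into cooperation rounds, where the gain is controlled by a H\"older-type bound against the payoff vector $u^i(\cdot,w^{-i}_{\bv})$, and punishment rounds, where the Nash property of $\bb$ gives $\underline{u}^i$, bounded in turn by $v^i$ via individual rationality; the paper's proof also uses that assumption, implicitly at its ``$\leq 0$'' step, although the proposition does not state it. Your centering step is more careful than the paper's, which asserts $\lVert s^i(\hist_t)-w^i_{\bv}\rVert_1\leq\varepsilon$ even though membership in $\mathsf{B}^i(\bw_{\bv},\varepsilon)$ only yields $2\varepsilon$; subtracting $1/2$ from the payoff vector (valid because the difference of two distributions sums to zero) is exactly what recovers the constant $\varepsilon$.
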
To summarize, \Cref{def:highpower} together with \Cref{lemma:smalldeviation} ensures that a player $i$ deviating to some $s^i\ne s^i_{\bv}$ either triggers a punishment in finite time (if $s^i \in\mathfrak{S}^i\setminus\mathsf{B}^i(\bw_{\bv},\varepsilon)$, by \Cref{def:highpower}) or cannot enjoy a large utility gain (if $s^i\in\mathfrak{S}^i\cap\mathsf{B}^i(\bw_{\bv},\varepsilon)$, by \Cref{lemma:smalldeviation}). This observation allows us to establish  a Folk Theorem based on the test-then-punish strategy \eqref{def:testthenpunish}, provided  \Cref{def:lowlevel} and \Cref{def:highpower} hold.
\begin{restatable}{theorem}{folkanytime}\label{theorem:folk_anytime_main}
   Let $\mathfrak{S}=\mathfrak{S}^1\times\ldots\times\mathfrak{S}^N \subseteq\cS$ and   consider a family of sequential tests $\boldsymbol{\psi}$ satisfying  \Cref{def:lowlevel} and \Cref{def:highpower}. Let $\bv\in\cU^1\times\ldots\times\cU^N$ be a feasible payoff such that for any $i\in[N]$
\begin{align*}
   v^i \geq\underline{u}^i = u^i (\bb) \eqsp,
\end{align*}
where $\bb=(b^1 ,\ldots b^{N})\in\Delta(\cA)$ is some reference mixed strategy Nash equilibrium, which exists by the Nash theorem. Denoting $\tau^{(\varepsilon)}=\sup\limits_{\bs\in\mathfrak{S}(\bw_{\bv}\, , \,\varepsilon)}\E^{\bs}[\tau_{\bpsi}]<\infty$, and $\bar{u}^i = \max_{a \in \cA} u^i(a)$, if for any $i\in[N]$
\begin{equation}\label{eq:patienceconditionanytime}     \bar{\beta}^i=\frac{\bar{u}^i - (1-\gamma)v^i}{\bar{u}^i - \underline{u}^i}\in[0,1]\quad\text{and}\quad
    \beta^{\tau^{(\varepsilon)}} \geq \bar{\beta}^i\eqsp,
\end{equation}
    then $\bs_{\bv}$ defined in \eqref{def:testthenpunish}  satisfies
\begin{enumerate}[label=(\roman*)]
\item  
    for any $i\in[N]$, $(1-\gamma)v^i \leq U^i (\bs_{\bv})\leq v^i\eqsp,$
\item $\bs_{\bv}$ is a $(\varepsilon+\gamma,\mathfrak{S})$-NE.
\end{enumerate}
\end{restatable}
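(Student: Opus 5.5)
For part (i) I would decompose the discounted payoff on the event of false punishment. Under $\bs_{\bv}$, every player plays $\bw_{\bv}$ at all rounds $t\leq\tau_{\bpsi}$ and then $\bb$ forever. On $\{\tau_{\bpsi}=\infty\}$ the stage payoff is $v^i$ at every round. On its complement the stage payoff is $v^i$ up to $\tau_{\bpsi}$ and $\underline{u}^i\leq v^i$ afterwards. This immediately gives $U^i(\bs_{\bv})\leq v^i$. For the lower bound I would simply drop the nonnegative payoffs collected on $\{\tau_{\bpsi}<\infty\}$: since utilities lie in $[0,1]$,
\begin{align*}
U^i(\bs_{\bv})\geq v^i\,\P^{\bs_{\bv}}(\tau_{\bpsi}=\infty)\geq(1-\gamma)v^i\eqsp,
\end{align*}
where the last step is \Cref{def:lowlevel} combined with the union bound displayed after it.

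For part (ii), I fix $i$ and a deviation $s^i\in\mathfrak{S}^i$ and split into two cases according to whether $s^i\in\mathsf{B}^i(\bw_{\bv},\varepsilon)$. If it is, \Cref{lemma:smalldeviation} gives $U^i(s^i,s^{-i}_{\bv})\leq v^i+\varepsilon$. Part (i) gives $v^i\leq U^i(\bs_{\bv})/(1-\gamma)$, but I would rather use the additive bound $v^i\leq U^i(\bs_{\bv})+\gamma v^i\leq U^i(\bs_{\bv})+\gamma$, which holds because $U^i(\bs_{\bv})\geq(1-\gamma)v^i$ and $v^i\leq1$. Together these give $U^i(s^i,s^{-i}_{\bv})\leq U^i(\bs_{\bv})+\varepsilon+\gamma$, as required.

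If instead $s^i\notin\mathsf{B}^i$, then $(s^i,s^{-i}_{\bv})\in\mathfrak{S}(\bw_{\bv},\varepsilon)$. By \Cref{def:highpower}, $\tau:=\tau_{\bpsi}$ has $\E[\tau]\leq\tau^{(\varepsilon)}$. Before and at time $\tau$ the deviator earns at most $\bar u^i$ per round. After $\tau$ the opponents play $b^{-i}$ forever, and since $\bb$ is a stage Nash equilibrium, each such round yields player $i$ at most $\underline{u}^i$ whatever $s^i$ prescribes. Summing the discounted stage payoffs round by round,
\begin{align*}
U^i(s^i,s^{-i}_{\bv})\leq\E\bigl[\bar u^i(1-\beta^{\tau+1})+\underline{u}^i\beta^{\tau+1}\bigr]=\bar u^i-(\bar u^i-\underline{u}^i)\,\E[\beta^{\tau+1}]\eqsp.
\end{align*}
Since $x\mapsto\beta^x$ is convex, Jensen's inequality gives $\E[\beta^{\tau}]\geq\beta^{\E\tau}\geq\beta^{\tau^{(\varepsilon)}}\geq\bar\beta^i$. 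Plugging in $\bar\beta^i$ yields $U^i(s^i,s^{-i}_{\bv})\leq(1-\gamma)v^i\leq U^i(\bs_{\bv})$, which is even stronger than needed.

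The main obstacle is the off-by-one in the timing: punishment starts at $\tau+1$, so the bound involves $\beta^{\tau+1}$ rather than $\beta^{\tau}$. I would handle this either by checking whether the convention $\psi_0=0$ and the time indexing let one write $\beta^{\tau}$ directly, or by absorbing the extra factor $\beta$ into the patience condition \eqref{eq:patienceconditionanytime}. A secondary care point is justifying the round-by-round bound by conditioning on $\hist_t$: at each $t>\tau$, the stage payoff is $u^i(s^i(\hist_t),b^{-i})\leq\underline{u}^i$ because $\bb$ is a Nash equilibrium.
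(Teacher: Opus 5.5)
Your proposal is correct and follows the paper's proof essentially step for step. For (i) you use the same payoff bounds and the union bound. For (ii) you use the same split into small deviations, handled by \Cref{lemma:smalldeviation} plus the additive form $v^i\leq U^i(\bs_{\bv})+\gamma$, and significant deviations, handled by bounding the deviator by $\bar u^i-(\bar u^i-\underline u^i)\E[\beta^{\tau_{\bpsi}}]$ and then applying Jensen and the patience condition.

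On the off-by-one, your first resolution is the right one. In \eqref{def:testthenpunish} the product runs up to $s=t$, so players already play $\bb$ at round $\tau_{\bpsi}$ itself. The bound therefore involves $\beta^{\tau_{\bpsi}}$ with no extra factor, and the stated patience condition suffices as written. The paper's own proof writes $\beta^{\tau_{\bpsi}+1}$ and silently drops the $+1$ at the Jensen step. Absorbing a factor $\beta$ into the patience condition would instead prove a weaker statement than the theorem.
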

It is informative to compare \Cref{theorem:folk_anytime_main} with its perfect monitoring counterpart, \Cref{theorem:folk}. In the latter, $U^i (\bs_{\bv})$ is \textit{exactly} equal to the target payoff $v^i$. In the former, it lies in an interval of length $\gamma$ around $v^i$. This difference is related to Type I error: even though all players stick to the collaborative strategy $w _{\bv}$, they bear the risk of wrongfully entering the punishment phase with probability at most $\gamma$. Note that a better test results in a lower error $\gamma$, and thereby decreases the utility gap.

Also, observe that in \Cref{theorem:folk}, $\bs_{\bv}$ is an exact SPNE. On the contrary, in \Cref{theorem:folk_anytime_main} it is a $(\varepsilon+\gamma,\mathfrak{S})$-NE. The $\varepsilon+\gamma$ approximation term stems from type I error, and the possibility for players to pick small deviations that remain undetected as discussed in \Cref{lemma:smalldeviation}. Moreover, \Cref{theorem:folk_anytime_main} only shows that $\bs_{\bv}$ is a Nash equilibrium and not that it is  subgame perfect.
As we shall see in \Cref{section:folk_batch}, it is possible to obtain subgame perfection by adopting a different testing strategy, at the expense of Type I and II error control. Finally, observe that under perfect monitoring (\Cref{theorem:folk}) the patience condition depends directly on the discount factor $\beta$, since deviations are detected immediately. In contrast, under the test-then-punish strategy of \Cref{theorem:folk_anytime_main}, a deviation may go undetected for up to $\tau^{(\varepsilon)}$ periods, so the relevant condition involves $\beta^{\tau^{(\varepsilon)}}$. Consequently, the longer it takes to detect deviations, the more patience is required. More powerful tests, which reduce $\tau^{(\varepsilon)}$, relax this requirement.

\subsection{Anytime testing via plug-in e-processes}\label{subsection:stationary}

To make \Cref{theorem:folk_anytime_main} fully actionable, we still need to specify a way for players to design  a family of sequential tests $\bpsi$ satisfying \Cref{def:lowlevel} and \Cref{def:highpower}. 
This can be achieved by making use of \textit{anytime testing} and e-processes \citep{ramdas2024hypothesis}. We provide a brief review of this framework in \Cref{appendix:statistical_background}, while we only provide here the main ideas and concepts of this framework.

Given a null hypothesis $\mathrm{H}_0$, anytime valid testing is built on the notion of an \textit{e-process} that are typically a supermartingale $(M_t)_{t\geq 0}$ for any probability distribution belonging to $\mathrm{H}_0$. Under this setting, Ville's inequality \citep[][see \Cref{appendix:statistical_background}]{doob1939jean},  $(M_t)_{t\geq 0}$ satisfies for any $\P_0 \in \mathrm{H}_0$, \begin{equation}\label{equation:ville}
\P_0 \,\left(\sup_{t\geq 0} M_t \geq 1/\gamma\right)\;\leq\;\gamma
\qquad \mbox{for any }\ \gamma \in(0,1) \eqsp.
\end{equation}
In other words, under the null hypothesis, the event $\cup_{t\geq0} \{M_t \geq 1/\gamma\}$ cannot happen with probability greater than $\gamma$. Thus, a natural test statistic for testing $\rmH_0$ at level $\gamma$ is given by $ \2{M_t \geq 1/\gamma}$.

Considering $(\scrH_t)_{t\geq 0}$, the natural filtration associated to $(\hist_t)_{t\geq 0}$, $\scrH_t = \sigma(\hist_s \, :\, s \leq t)$, we now specify how a player $j \in [N]$ can design a $(\hist_t)_{t \geq 0}$-supermartingale $\{E^i _t\}_{t\geq 0}$ to test $\rmH^i_0$ for an opponent $i \in [N]$. Since at any round $t$, the pure actions $A^1_t,\ldots,A^N_t$ are observed, players can count how many times each player $i\in[N]$ played action $a\in\cA^i$ up to step $t-1$ for  which we denote by $N_{t}^{i}(a)=\sum_{s=0}^{t-1} \1\{A_s^{i}=a\}$. It is then possible to define the  (smoothed\footnote{$1$ and $K$ play the role of smoothing as in a Laplace estimator.}) empirical frequencies of actions played by player $i$ as
\begin{equation}\label{eq:plugin-predictor}
\hat{w}^{i}_{t}(a)=\frac{N_{t}^{i}(a)+1}{t+K} \qquad\text{for any } a\in \cA\eqsp,
\end{equation}
and $\widehat w^{i}_{-1}(a) = 1/K$. Then, we can form a plug-in e-process \cite{ramdas2024hypothesis} to test $\rmH^i_0$ for player $i$ as follows. Set $E_{-1}^i=1$ and for any $t \geq 0$
\begin{equation}\label{eq:e-process}
E_t^i = \prod_{s=0}^t \frac{\widehat w^{i}_{s}\,(A_s^{i})}{w_{\bv}^{i}\,( A_s^i )} \eqsp.
\end{equation}
We formally establish that $(E^i_t)_{t\geq 0}$ is a $(\hist_t)_{t \geq 0}$-supermartingale and therefore an e-process under $\rmH^i_0$ in \Cref{appendix:statistical_background}, \Cref{theorem:anytime}. Intuitively, if player $i$ draws a pure action from $w^i_{\bv}$, then $\hat{w}^i_s (a) / w^i_{\bv}(a) \approx 1$ for large $s$ and any $a\in\cA^i$, so $E^i_t$ does not grow too much on average over time. On the contrary, if player $i$ deviates, then $E^i_t$ ought to increase and triggers the test the first time it exceeds $1/\gamma$.
If we combine all the above elements, we obtain a practical testing procedure for players to test $\rmH^i_0$: they can keep track of $i$'s action empirical frequencies \eqref{eq:plugin-predictor} to build the  e-process \eqref{eq:e-process} and check whether it exceeds a given threshold. For $\gamma\in(0,1)$, this leads to the sequential test $\bpsi^i = \{\psi^i_t\}_{t\geq0}$ for $\rmH^i_0$ defined for any $t\geq 0$ as
\begin{equation}
\label{definition:test_anytime}
\tpsi^i_t =\1\left\{ E_t^i \geq N/\gamma \right\} \eqsp.
\end{equation}
By Ville's inequality \eqref{equation:ville}, we can explicitly control the significance level of this test by tuning the threshold. This immediately implies that such a test satisfies \Cref{def:lowlevel}, as formally recorded in the following proposition.

\begin{restatable}{proposition}{anytimetestlowlevel}\label{proposition:anytime_low_level} Let $\gamma\in(0,1)$ and consider for any $i\in[N]$, the sequence of tests $(\tpsi_t^i)_{t \geq 1}$ defined in \eqref{definition:test_anytime}
and the resulting strategy $\bs_{\bv}$ defined in \eqref{def:testthenpunish}. Then,  \Cref{def:lowlevel} is satisfied.
\end{restatable}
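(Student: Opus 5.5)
We need to show that $\P^{\bs_{\bv}}(\tau_{\bpsi^i}<\infty)\leq \gamma/N$ for every $i\in[N]$. The plan is to exhibit $(E^i_t)_{t\geq 0}$ as a nonnegative supermartingale with initial value at most one under $\P^{\bs_{\bv}}$, and then apply Ville's inequality \eqref{equation:ville} with threshold $N/\gamma$.

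The first step is to identify the law of player $i$'s actions under $\P^{\bs_{\bv}}$. By \eqref{def:testthenpunish}, player $i$ plays $w^i_{\bv}$ at every time $t\leq \tau_{\bpsi}$ and plays $b^i$ afterwards. This is the one subtlety: $\rmH^i_0$ ($s^i=\sigma^i_{\bv}$) does not literally hold along the whole path, because once some test fires, everyone switches to $\bb$.

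To handle this, I will use the inclusion
\[
\{\tau_{\bpsi^i}<\infty\}\subseteq\{\tau_{\bpsi^i}<\infty,\ \tau_{\bpsi^i}\leq\tau_{\bpsi}\}\cup\{\tau_{\bpsi}<\tau_{\bpsi^i}<\infty\}.
\]
The second event is exactly where punishment was triggered by another player's test. A cleaner route avoids this split by a coupling argument. Let $\tilde{\P}$ denote the law under the profile $(\sigma^i_{\bv},s^{-i}_{\bv})$, or even under $\boldsymbol{\sigma}_{\bv}$. The rejection event for $\bpsi^i$ depends only on player $i$'s action sequence $(A^i_s)_s$, since $E^i_t$ is a function of $A^i_0,\dots,A^i_t$ alone. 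Moreover, under $\P^{\bs_{\bv}}$ the actions of player $i$ coincide in law with i.i.d.\ draws from $w^i_{\bv}$ up to time $\tau_{\bpsi}$.

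I have to be honest about the remaining case. If the paper's convention is that after a rejection the game enters punishment and further rejections are irrelevant, then the quantity to control is really the probability that $\bpsi^i$ is the first to fire, or $\P(\tau_{\bpsi^i}\leq \tau_{\bpsi})$. On this event, all actions of $i$ up to $\tau_{\bpsi^i}$ were drawn from $w^i_{\bv}$. I would make this rigorous by considering the stopped process $E^i_{t\wedge\tau_{\bpsi}}$, or more precisely $E^i_{t\wedge \tau_{\bpsi}}$ with the stopping at the step after which the law changes. Since the switch at time $\tau_{\bpsi}+1$ is $\scrH$-measurable, the stopped process is a supermartingale under $\P^{\bs_{\bv}}$.

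The core verification is the supermartingale property while $i$ plays $w^i_{\bv}$. Conditionally on $\scrH_s$, the prediction $\widehat w^i_s$ is predictable, being built from counts up to $s-1$ as in \eqref{eq:plugin-predictor}, and it is a probability vector. Therefore
\[
\E\Bigl[\tfrac{\widehat w^i_s(A^i_s)}{w^i_{\bv}(A^i_s)}\,\Big|\,\scrH_s\Bigr]=\sum_{a\in\operatorname{supp} w^i_{\bv}}\widehat w^i_s(a)\leq 1 .
\]
This gives $\E[E^i_s\mid\scrH_s]\leq E^i_{s-1}$, with $E^i_{-1}=1$. This is also the content of \Cref{theorem:anytime}, which I will invoke directly.

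Ville's inequality applied to the stopped supermartingale then gives
\[
\P^{\bs_{\bv}}\Bigl(\sup_t E^i_{t\wedge\tau_{\bpsi}}\geq N/\gamma\Bigr)\leq \gamma/N,
\]
which bounds $\P^{\bs_{\bv}}(\tau_{\bpsi^i}\leq\tau_{\bpsi},\ \tau_{\bpsi^i}<\infty)$. The union bound over $i$ shown after \Cref{def:lowlevel} then yields $\P(\tau_{\bpsi}<\infty)\leq\gamma$.

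The main obstacle is reconciling this with \Cref{def:lowlevel} as literally stated, which concerns $\tau_{\bpsi^i}$ itself. My first attempt would be to interpret $\bpsi^i$ as only active before punishment, so that $\tau_{\bpsi^i}$ is effectively stopped at $\tau_{\bpsi}$. Failing that, I would note that the test is applied to the null $\rmH^i_0$, under which $i$ always plays $w^i_{\bv}$, and verify the bound under $\boldsymbol{\sigma}_{\bv}$. That laws agree on the relevant event up to $\tau_{\bpsi}$.
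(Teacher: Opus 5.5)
Your core step is the paper's proof. The paper disposes of this proposition in one sentence, as an immediate consequence of Ville's inequality applied to the plug-in e-process of \Cref{theorem:anytime} at threshold $N/\gamma$. That is precisely your predictability-plus-Ville argument. Two of your details are more careful than the paper: summing over $\operatorname{supp} w^i_{\bv}$ to get $\le 1$ (the paper's ``$=1$'' needs full support), and working with the public filtration, which you need so that $\tau_{\bpsi}$ is a stopping time.

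The obstacle you raise at the end is real, and it is a defect of the statement, not of your argument, so stop hedging between interpretations. Under $\P^{\bs_{\bv}}$, $\rmH^i_0$ holds only up to $\tau_{\bpsi}$. Suppose $b^i\neq w^i_{\bv}$. Then on $\{\tau_{\bpsi}<\infty\}$ player $i$ draws i.i.d.\ from $b^i$ forever, so $t^{-1}\log E^i_t\to\mathrm{KL}(b^i\Vert w^i_{\bv})>0$, or $E^i_t=\infty$ once an action outside $\operatorname{supp} w^i_{\bv}$ appears. Hence $\tau_{\bpsi^i}<\infty$ a.s.\ on that event, and
\[
\P^{\bs_{\bv}}(\tau_{\bpsi^i}<\infty)=\P^{\bs_{\bv}}(\tau_{\bpsi}<\infty).
\]
The right-hand side is the probability that at least one of $N$ independent null tests ever fires. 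Ville only bounds it by $\gamma$, and it can exceed $\gamma/N$. One step multiplies $E^j$ by at most $1/\min_a w^j_{\bv}[a]$, so a change of measure to the Laplace mixture gives each single-test false-alarm probability at least $\gamma\min_a w^j_{\bv}[a]/N$. For instance, with $N\ge 3$ and every $w^j_{\bv}$ uniform on two actions, $\P^{\bs_{\bv}}(\tau_{\bpsi}<\infty)\ge 1-(1-\gamma/(2N))^N>\gamma/N$.

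So the literal \Cref{def:lowlevel} cannot be obtained this way. The paper's appeal to Ville silently computes the probability under $\rmH^i_0$, i.e.\ under $\boldsymbol{\sigma}_{\bv}$, which is your second fallback. The correct content is the stopped bound you already proved: $\P^{\bs_{\bv}}(\tau_{\bpsi^i}\le\tau_{\bpsi},\,\tau_{\bpsi^i}<\infty)\le\gamma/N$. Since $\{\tau_{\bpsi}<\infty\}=\bigcup_i\{\tau_{\bpsi^i}=\tau_{\bpsi}<\infty\}$, this gives $\P^{\bs_{\bv}}(\tau_{\bpsi}<\infty)\le\gamma$. That is the only consequence of \Cref{def:lowlevel} used in the proof of \Cref{theorem:folk_anytime_main}. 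You should therefore state the condition in this stopped form and prove that.
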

We now show that the test sequence defined above also satisfies \Cref{def:highpower}, under a simplifying assumption. Before proceeding, we highlight a fundamental difficulty. For $\mathfrak{S}=\mathfrak{S}^1\times\ldots\times\mathfrak{S}^N \subseteq\cS$, the Type II error control required in \Cref{def:highpower} must hold uniformly over the set $\mathfrak{S}(\bw_{\bv},\varepsilon)$ defined in \eqref{eq:spacedeviation}. That is, any strategy $s^i \in\mathfrak{S}^i$ such that $s^i (\hist_t)$ deviates by at least $\varepsilon$ from $w^i_{\bv}$ at some time $t\geq 0$ should, with high probability, eventually trigger a punishment. From a statistical perspective, this amounts to testing a simple null hypothesis $\rmH_0^i$ against an arbitrarily large composite alternative $\rmH_1^i$. It is well known that, without additional structural restrictions on the alternative, uniformly powerful tests generally do not exist, and nontrivial control of the Type II error is in many cases information-theoretically impossible \citep{lehmann2005testing}. We therefore simplify the problem by restricting attention to a set of simple deviations as follows
\begin{equation}
\label{def:stationarydeviations}
\tilde{\mathfrak{S}}^i
=
\left\{\,
s \in \cS^i \;:\; \text{there exists } \tilde{w}^i \in \Delta(\cA^i) \text{ such that  for any } t \geq 0,\ \hist_t \in \cH_t, \,  s(\hist_t)=
\tilde{w}^i \, 
\right\}\eqsp.
\end{equation}

In plain words, $\tilde{\mathfrak{S}}=\tilde{\mathfrak{S}}^1\times\ldots\times\tilde{\mathfrak{S}}^N$ contains \textit{stationary} strategies, which stick to a single mixed strategy $\tilde{w}^i \in\Delta(\cA)$. This simplifies the statistical analysis by ensuring that under $\rmH^i_{1}$, any player $i\in[N]$ draws i.i.d.\ actions from  $\tilde{w}^{i}$ before getting caught. However, we stress that the problem remains difficult, since players do not know what mixed strategy $\tilde{w}^i$ has been chosen by player $i$. In other words, $\rmH^i_1$ is still a composite alternative. We show in the following proposition that \Cref{def:highpower} is satisfied when strategies lie in  $\tilde{\mathfrak{S}}(\bw_{\bv} , \varepsilon)$, as defined in \Cref{eq:spacedeviation}.

We now have statistical leverage to obtain a bound on the expected stopping time $\ttau^i$ defined as 
\begin{align}\label{equation:stopping_time_i}
\ttau^i_{\psi} = \inf\{t \geq 1 \colon \tpsi^i_t = 1\} \eqsp,
\end{align}
where $(\tpsi^i_t)_{t\geq 0}$ is defined in \eqref{definition:test_anytime} for a error level $\gamma \in \ooint{0,1}$.

\begin{restatable}{theorem}{probaboundanaytime}\label{theorem:probaboundanaytime}
Consider $\epsilon >0$ and  a level $\gamma \in \ooint{0,1}$. Consider the sequential tests  $\{\tpsi^i\}_{i \in [N]}$ defined in \eqref{definition:test_anytime}. Then, for $\tilde{\mathfrak{S}}=\tilde{\mathfrak{S}}^1\times\ldots\times\tilde{\mathfrak{S}}^N$, where 
$\tilde{\mathfrak{S}}^i$ are defined in \eqref{def:stationarydeviations}, and $\ttau_{\bpsi} = \min_{i\in[N]} \ttau_{\psi}^i$, Condition $2$ holds with  
\begin{align*}
    \zeta^{(\epsilon)}_t = C_1\Biggl\{ t \exp\left(-\frac{2 t \epsilon^2}{C_2 \log(t)^2}\right) + \exp\left(-t\frac{4 \epsilon^4
    }{C_2 \log(t)} \right) + \exp\left(-\frac{t \epsilon^4}{2 |\log \underline{\bw_{\bv}}|}\right) \Biggr\}+\1\{t\leq 10\log(1/\gamma)/\epsilon^2\} \eqsp,
\end{align*}
for any $t \geq 2, \zeta^{(\epsilon)}_0=1, \zeta^{(\epsilon)}_1=1$, where  $C_1, C_2 >0$ are universal constants and we set $\underline{\bw_{\bv}} = \min_{i\in [N]} \min_{a \in \cA^i} w^i_{\bv}[a]$. In particular, we have $\sum_{t \geq 0} \zeta^{(\epsilon)}_t <\infty$.
\end{restatable}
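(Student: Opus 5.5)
Since $\ttau_{\bpsi}=\min_i \ttau^i_\psi$, for any $\bs\in\tilde{\mathfrak{S}}(\bw_{\bv},\varepsilon)$ we fix a player $i$ with $s^i\equiv\tilde w^i$ and $\|\tilde w^i-w^i_{\bv}\|_1> 2\varepsilon$, and use $\P^{\bs}(\ttau_{\bpsi}\geq t)\leq \P^{\bs}(\ttau^i_\psi\geq t)\leq \P^{\bs}(E^i_{t-1}<N/\gamma)$. Before punishment, player $i$'s actions $A^i_0,A^i_1,\dots$ are i.i.d.\ from $\tilde w^i$ (the other players' behaviour does not matter for $E^i$). It therefore suffices to show that $\log E^i_{t}=\sum_{s\le t}\log\hat w^i_s(A^i_s)-\sum_{s\le t}\log w^i_{\bv}(A^i_s)$ grows linearly with slope of order $\mathrm{KL}(\tilde w^i\|w^i_{\bv})\ge 2\varepsilon^2$ (Pinsker), and that the probability it stays below $\log(N/\gamma)$ is summable in $t$.

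\textbf{Decomposition.} I would write
\[
\log E^i_t=\underbrace{\sum_{s\le t}\log\frac{\hat w^i_s(A^i_s)}{\tilde w^i(A^i_s)}}_{(\mathrm{I})}+\underbrace{\sum_{s\le t}\log\frac{\tilde w^i(A^i_s)}{w^i_{\bv}(A^i_s)}}_{(\mathrm{II})}.
\]
Term (II) is an i.i.d.\ sum with mean $(t+1)\mathrm{KL}(\tilde w^i\|w^i_{\bv})$. One summand is bounded below by $\log \underline{\bw_{\bv}}$ only from the $w_{\bv}$ side, and the $\log\tilde w^i$ part is unbounded if $\tilde w^i$ has small atoms. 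I would handle this by truncation: either group atoms where $\tilde w^i(a)\le\varepsilon^2$, or control the variance. Then Hoeffding/Bernstein gives the term $\exp(-t\varepsilon^4/(2|\log\underline{\bw_{\bv}}|))$, with deviations of order $\varepsilon^2 t$ and a range of order $|\log\underline{\bw_{\bv}}|$. Term (I) is the regret of the Laplace (add-one) predictor against the best fixed distribution, shifted by randomness. Deterministically, $\sum_s\log\hat w_s(A_s)\ge \max_q\sum_s\log q(A_s)-(K-1)\log(t+K)$ (standard Laplace/KT regret), and $\max_q$ is at least the $\tilde w^i$ log-likelihood. Hence (I)$\ge -(K-1)\log(t+K)$ always, which is negligible against $\varepsilon^2 t$.

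\textbf{Assembling and the obstacle.} On the good event where (II)$\ge (t+1)\varepsilon^2$, we get $\log E^i_t\ge t\varepsilon^2-K\log(t+K)$. This exceeds $\log(N/\gamma)$ once $t\gtrsim 10\log(1/\gamma)/\varepsilon^2$ and $t$ is beyond a logarithmic threshold, which produces the indicator term in $\zeta_t$. The remaining terms in the stated $\zeta^{(\varepsilon)}_t$, namely those with $\log(t)^2$ and $\log t$ in the denominators, suggest the authors instead control (I) probabilistically. They likely use concentration of $\hat w_s$ around $\tilde w^i$ uniformly over $s\le t$ (a union bound gives the factor $t$), with $\log\hat w_s$ bounded by $\log(t+K)$, via Azuma with increments of size $\log t$. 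I would follow whichever route is cleaner, and fall back to this martingale route if the regret bound misbehaves on atoms where $\tilde w^i(a)=0$. The main obstacle is the unbounded log-likelihood ratio in (II) when $\tilde w^i$ puts tiny mass on some action; the truncation must lose only $O(\varepsilon^2)$ in the drift. Finally, summability of $\zeta^{(\varepsilon)}_t$ is immediate: each exponential term decays as $\exp(-ct/\log^2 t)$, which beats the factor $t$, and the indicator is nonzero for only finitely many $t$.
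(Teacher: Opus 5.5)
Your route differs from the paper's, and it goes through once the step you flag is closed, which takes one line.

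\textbf{How the paper argues.} The paper writes $\log E_n=S_n^1+S_n^2$ and treats the pieces separately:
\begin{itemize}
\item $S_n^1=\sum_t(\log\hat\pi_{t-1}(A_t)-\mathbb{E}[\log\hat\pi_{t-1}(A_t)\mid\mathcal{F}_{t-1}])$ is a martingale, controlled by Pinelis' moment inequality with increments bounded by $\log t$ (\Cref{lemma:bound_pinelis}). This produces the $\log t$ denominators.
\item $S_n^2$ is split into three parts. The term $-\sum_t\mathrm{KL}(\pi\|\hat\pi_{t-1})$ is controlled by a high-probability KL bound for the Laplace estimator plus a union bound over $t$ (\Cref{lemma:bound_KL}), giving the factor $t$ and the $\log(t)^2$. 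Then comes the drift $n\,\mathrm{KL}(\pi\|\pi_0)$. Last is a centred i.i.d. sum of $\log\pi_0(A_t)$, handled by Hoeffding.
\item Actions outside $\mathrm{supp}(\pi_0)$ are removed by conditioning on $\tau^\infty>n$, and Pinsker converts KL into TV.
\end{itemize}

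\textbf{How your route differs.} You centre at the realised $\sum_s\log\tilde w^i(A^i_s)$ rather than at its mean. This turns $S_n^1$ and the KL-sum together into the regret of the Laplace rule. That regret is deterministically at most $\log\binom{t+K-1}{K-1}\le(K-1)\log(t+1)$. This step is correct and replaces both external results.

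\textbf{Closing the step you flag.} The obstacle you name in (II) needs no truncation. Set $X_s=\log(\tilde w^i(A^i_s)/w^i_{\bv}(A^i_s))\in(-\infty,+\infty]$. A Chernoff bound at $\lambda=1/2$ gives
\[
\mathbb{P}\Bigl(\sum_{s=0}^{t-1}X_s\le c\Bigr)\le e^{c/2}\Bigl(\sum_{a}\sqrt{\tilde w^i(a)\,w^i_{\bv}(a)}\Bigr)^{t}\le e^{c/2}\bigl(1-\mathrm{TV}^2\bigr)^{t/2}\le e^{c/2-t\varepsilon^2/2},
\]
because the Bhattacharyya coefficient satisfies $\mathrm{BC}^2\le 1-\mathrm{TV}^2$. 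Atoms with $w^i_{\bv}(a)=0$ have $X_s=+\infty$ and contribute $0$ to the moment generating function. Take $c=\log(N/\gamma)+(K-1)\log(t+1)$. Then for all $t\ge2$ and all $\bs\in\tilde{\mathfrak S}(\bw_{\bv},\varepsilon)$,
\[
\mathbb{P}^{\bs}(\ttau_{\bpsi}\ge t)\le\sqrt{N/\gamma}\,(t+1)^{(K-1)/2}e^{-t\varepsilon^2/2},
\]
which is summable.

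If you prefer your truncation idea, its clean form is to use the comparator $(1-\eta)\tilde w^i+\eta w^i_{\bv}$ in the regret bound, since that bound holds against any comparator. But then Hoeffding puts the \emph{squared} range, which includes $\log(1/\eta)$, in the denominator. It does not give $|\log\underline{\bw_{\bv}}|$ to the first power, as you wrote.

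\textbf{What each route buys.}
\begin{itemize}
\item The paper's decomposition reproduces the displayed shape of $\zeta^{(\varepsilon)}_t$. However, the constants in \Cref{lemma:bound_pinelis} scale like $\underline{\pi}^{-4}$, where $\underline{\pi}$ is the smallest atom of the \emph{alternative}. So they are not uniform over $\tilde{\mathfrak S}(\bw_{\bv},\varepsilon)$, which Condition~2 requires. The third term also degenerates if $\underline{\bw_{\bv}}=0$.
\item Your bound is uniform over all stationary deviations and does not depend on $\underline{\bw_{\bv}}$. It gives an expected detection time of order $(\log(N/\gamma)+K\log(K/\varepsilon))/\varepsilon^2$, instead of the $\varepsilon^{-5}$ in \Cref{corollary:corogamestoppingtime}.
\item The price is that you do not recover the displayed $\zeta^{(\varepsilon)}_t$ verbatim. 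Your effective threshold involves $\log(N/\gamma)+(K-1)\log t$ rather than $10\log(1/\gamma)$. Dominating your bound by the stated expression therefore forces $C_1,C_2$ to depend on $K$ and $N$.
\item The paper's ``universal'' constants already depend on $K$. Its indicator also drops the $\log N$ coming from the test threshold $N/\gamma$. So this difference is cosmetic, and you can drop the fallback to the martingale route.
\end{itemize}
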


\begin{restatable}{corollary}{corogamestoppingtime}\label{corollary:corogamestoppingtime}
   Under the same setting than \Cref{theorem:probaboundanaytime}, for any  $\bs \in \tilde{\mathfrak{S}}(\bw_{\bv}, \epsilon)$, there exists a universal constant $C>0$ such that
    \begin{align*}
         \E^{\bs}[\ttau_{\bpsi}] \leq \ttau_{\varepsilon}  = \frac{10\log(1/\gamma)}{\epsilon^2} + C \frac{1+|\log \underline{\bw_{\bv}}|}{\epsilon^5} \eqsp.
    \end{align*}
\end{restatable}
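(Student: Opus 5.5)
The plan is to derive \Cref{corollary:corogamestoppingtime} directly from \Cref{theorem:probaboundanaytime} through the tail-sum formula for nonnegative integer random variables. For any $\bs\in\tilde{\mathfrak{S}}(\bw_{\bv},\epsilon)$ we have
\begin{align*}
\E^{\bs}[\ttau_{\bpsi}] = \sum_{t\geq 1}\P^{\bs}(\ttau_{\bpsi}\geq t)\leq \sum_{t\geq 0}\zeta^{(\epsilon)}_t \eqsp,
\end{align*}
since \Cref{theorem:probaboundanaytime} gives $\P^{\bs}(\ttau_{\bpsi}\geq t)\leq \zeta^{(\epsilon)}_t$ uniformly over the class. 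So the task is to bound $\sum_t\zeta^{(\epsilon)}_t$ explicitly, treating each of the four summands separately.

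The indicator term gives $\sum_{t}\1\{t\leq 10\log(1/\gamma)/\epsilon^2\}\leq 10\log(1/\gamma)/\epsilon^2+1$. This produces the leading term, and the extra $+1$, together with $\zeta_0=\zeta_1=1$, is absorbed into $C$ because $\epsilon\leq 1$ makes $\epsilon^{-5}\geq 1$. The third exponential is geometric:
\begin{align*}
\sum_t \exp\bigl(-t\epsilon^4/(2|\log\underline{\bw_{\bv}}|)\bigr)\leq \bigl(1-e^{-c}\bigr)^{-1}\leq 1+1/c\eqsp,\quad c=\frac{\epsilon^4}{2|\log\underline{\bw_{\bv}}|}\eqsp.
\end{align*}
This is $O\bigl((1+|\log\underline{\bw_{\bv}}|)/\epsilon^4\bigr)$, which fits inside the $\epsilon^{-5}$ bound. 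The second term, $\exp(-4t\epsilon^4/(C_2\log t))$, is handled by comparison. For $t\geq T$ we have $\log t\leq \sqrt t$ or a similar crude bound, so the sum is controlled by $\sum_t \exp(-c'\epsilon^4 t^{1/2})\lesssim \epsilon^{-8}$. That is too large, so I will need something sharper. I will instead bound $\log t\leq \log(t_0)+\ldots$ via splitting at $t_0\asymp \epsilon^{-5}$: the terms below $t_0$ contribute at most $t_0$, and above $t_0$ the exponent dominates $t\epsilon^4/\log t\geq$ a constant multiple of $\sqrt t$.

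The main obstacle is the first term, $t\exp(-2t\epsilon^2/(C_2\log^2 t))$, and more generally getting every contribution down to order $\epsilon^{-5}$ with universal constants. My approach is the same threshold split. Choose $t_0=\lceil K\epsilon^{-5}\rceil$ for a universal $K$, and bound each of the three exponential terms by $1$ for $t<t_0$ (if needed, using $\P\leq1$ to cap $\zeta$ by $1$ there), which contributes $t_0=O(\epsilon^{-5})$. For $t\geq t_0$, I will show that $t\epsilon^2/\log^2t\geq \sqrt t$ and $t\epsilon^4/\log t\geq \sqrt t$ when $K$ is large. This uses the fact that $t\geq K\epsilon^{-5}$ implies $\epsilon^4\geq (K/t)^{4/5}$, so $t\epsilon^4/\log t\geq K^{4/5}t^{1/5}/\log t$, and likewise for the $\epsilon^2$ term. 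The tails are then bounded by $\sum_t t\,e^{-c\,t^{1/5}/\log^2 t}$, which is a universal constant. I expect the bookkeeping of exponents in this step to be the delicate part; if $\epsilon^{-5}$ turns out not to suffice for the $\epsilon^4/\log t$ term, I would adjust $t_0$ with logarithmic factors absorbed into $C$. Collecting the pieces gives $\E^{\bs}[\ttau_{\bpsi}]\leq 10\log(1/\gamma)/\epsilon^2+C(1+|\log\underline{\bw_{\bv}}|)/\epsilon^5$.
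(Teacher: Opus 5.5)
Your proof is correct, but you evaluate the tail sum differently from the paper. The paper does not sum $\zeta^{(\epsilon)}_t$. It applies \Cref{proposition:expected_stopping_time} to the deviating player, which sums the per-alternative bound of \Cref{corollary:concentration_with_KL} by dominating each term by $C_\eta e^{-c n^{1-\eta}}$ and using $\sum_n e^{-cn^{1-\eta}}\le 1+\Gamma\bigl(1/(1-\eta)\bigr)c^{-1/(1-\eta)}/(1-\eta)$. This produces $5\log(1/\gamma)/\mathrm{KL}+C_\eta\bigl(1+|\log\underline{\pi}_0|^{1/(1-\eta)}\bigr)/\mathrm{TV}^{4/(1-\eta)}$, and Pinsker, $\mathrm{TV}\ge\epsilon$ and $\eta=1/5$ then give $\epsilon^{-5}$.

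You instead do three things. You truncate below $t_0\asymp\epsilon^{-5}$, using only that probabilities are at most $1$. Above $t_0$ you convert the $\epsilon$-dependence of the exponents into a power of $t$, so the remaining tails are $\epsilon$-free constants. You sum the third exponential exactly as a geometric series.

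Your route is more elementary, and it delivers $|\log\underline{\bw_{\bv}}|$ to the first power, as stated; with $\eta=1/5$ the paper's route literally gives the power $5/4$. The truncation also spares you from bounding $n e^{-cn/\log^2 n}$ by a $c$-independent multiple of $e^{-cn^{1-\eta}}$, which fails as $c\to0$. In exchange, the paper's route keeps an instance-dependent bound in terms of $\mathrm{KL}(\tilde w^i\|w^i_{\bv})$ and $\mathrm{TV}$. It also gives a family of exponents $4/(1-\eta)$ that can be pushed toward $4$, though your threshold could do the same with $t_0\asymp\epsilon^{-4-\eta'}$.

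One correction to your write-up: the claim that $t\epsilon^4/\log t$ dominates (a constant multiple of) $\sqrt t$ for all $t\ge t_0$ is false. At $t=t_0=K\epsilon^{-5}$ the left side is of order $\epsilon^{-1}/\log(1/\epsilon)$, while $\sqrt{t_0}\asymp\epsilon^{-5/2}$. The claim is also unnecessary. The bounds you actually derive, $t\epsilon^4/\log t\ge K^{4/5}t^{1/5}/\log t$ and $t\epsilon^2/\log^2 t\ge K^{2/5}t^{3/5}/\log^2 t$, already make both tails universal constants. So $t_0=\lceil K\epsilon^{-5}\rceil$ works as is, and the fallback of absorbing logarithmic factors into $C$ is not needed. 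In any case $\log(1/\epsilon)$ could not go into a universal constant, only into the slack between $\epsilon^{-4}$ and $\epsilon^{-5}$.
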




As $\gamma$ decreases (i.e., as the test becomes more conservative), the upper bound on $ \E^{\bs}[\ttau_{\bpsi}]$ increases, illustrating the classical trade-off between Type I error control and detection power. By \Cref{proposition:anytime_low_level} and \Cref{theorem:probaboundanaytime}, sequential tests as defined in \eqref{definition:test_anytime} satisfy \Cref{def:lowlevel} and \Cref{def:highpower}. It is therefore possible to instantiate \Cref{theorem:folk_anytime_main} with the family of tests $\{(\tpsi^i)_{t \geq 1}\}_{i \in [N]}$ to obtain the following corollary.

\begin{restatable}{corollary}{anytimefolk}\label{corollary:folk_anytime}
Consider the same setting than \Cref{theorem:probaboundanaytime} with $\gamma = \varepsilon$, and the resulting strategy   $\bs_{\bv}$ defined in \eqref{def:testthenpunish} with tests specified in \eqref{definition:test_anytime}.  Moreover, let $\bv\in\cU^1\times\ldots\times\cU^N$ be a feasible payoff profile such that $v^i \geq \underline{u}^i$ for $i\in[N]$. If for any $i\in[N]$
\begin{equation}\label{eq:patiencecorollaryanytime}
    \bar{\beta}^i=\frac{\bar{u}^i - (1-\varepsilon)v^i}{\bar{u}^i - \underline{u}^i}\in[0,1]\quad\text{and}\quad
\beta^{\ttau_{\varepsilon} } \geq \bar{\beta}^i\eqsp,
\end{equation}
with $\ttau_{\varepsilon} $ defined in \Cref{corollary:corogamestoppingtime}. Then, it holds 
\begin{enumerate}[label=(\roman*)]
\item For any $i\in[N]$, $(1-\varepsilon)v^i \leq U^i (\bs_{\bv})\leq v^i\eqsp,$
\item $\bs_{\bv}$ is a $(2\varepsilon\,,\, \tilde{\mathfrak{S}})$-NE, where $\tilde{\mathfrak{S}}$ is defined as in \eqref{def:stationarydeviations}.
    \end{enumerate}
\end{restatable}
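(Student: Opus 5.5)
This corollary is a direct instantiation of \Cref{theorem:folk_anytime_main} with $\mathfrak{S}=\tilde{\mathfrak{S}}$, the family of tests $\tpsi=\{\tpsi^i\}_{i\in[N]}$ from \eqref{definition:test_anytime}, and the choice $\gamma=\varepsilon$. The plan is to check each hypothesis of that theorem in turn and then read off the conclusions, substituting $\gamma=\varepsilon$.

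\textbf{Verifying the two conditions.} \Cref{def:lowlevel} with $\gamma=\varepsilon$ is exactly \Cref{proposition:anytime_low_level}: by Ville's inequality \eqref{equation:ville}, the threshold $N/\gamma$ gives $\P^{\bs_{\bv}}(\tau_{\tpsi^i}<\infty)\le\gamma/N$ for each $i$. \Cref{def:highpower} over $\tilde{\mathfrak{S}}(\bw_{\bv},\varepsilon)$ is supplied by \Cref{theorem:probaboundanaytime}, with the explicit sequence $\zeta^{(\varepsilon)}_t$, which is summable.

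\textbf{Handling the patience condition.} The theorem's condition \eqref{eq:patienceconditionanytime} involves $\tau^{(\varepsilon)}=\sup_{\bs\in\tilde{\mathfrak{S}}(\bw_{\bv},\varepsilon)}\E^{\bs}[\tau_{\tpsi}]$, whereas the corollary assumes $\beta^{\ttau_\varepsilon}\ge\bar\beta^i$.
\begin{itemize}
\item By \Cref{corollary:corogamestoppingtime}, $\E^{\bs}[\ttau_{\bpsi}]\le\ttau_\varepsilon$ uniformly over $\tilde{\mathfrak{S}}(\bw_{\bv},\varepsilon)$, since the constant depends only on $\varepsilon,\gamma,\underline{\bw_{\bv}}$ and not on $\bs$. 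Hence $\tau^{(\varepsilon)}\le\ttau_\varepsilon$.
\item Since $\beta\in(0,1)$, the map $x\mapsto\beta^x$ is decreasing, so $\beta^{\tau^{(\varepsilon)}}\ge\beta^{\ttau_\varepsilon}\ge\bar\beta^i$.
\item The membership $\bar\beta^i\in[0,1]$ is assumed directly, with $\gamma=\varepsilon$ substituted.
\end{itemize}

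\textbf{Bookkeeping step (main obstacle).} The theorem's $\tau_{\bpsi}$ is defined as an infimum over $t\ge0$, while $\ttau$ is an infimum over $t\ge1$. These coincide because $\psi^i_0=0$ by convention. This has to be stated so that the expected-time bound transfers to the theorem's notion.

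\textbf{Conclusion.} \Cref{theorem:folk_anytime_main} then gives $(1-\varepsilon)v^i\le U^i(\bs_{\bv})\le v^i$. It also gives that $\bs_{\bv}$ is a $(\varepsilon+\gamma,\tilde{\mathfrak{S}})=(2\varepsilon,\tilde{\mathfrak{S}})$-NE. No new estimate is needed; all the work lies in checking that the uniform stopping-time bound feeds the patience condition via monotonicity.
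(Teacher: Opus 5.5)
Your proposal is correct and follows the paper's argument. The paper's proof is a one-line instantiation of \Cref{theorem:folk_anytime_main} with $\gamma=\varepsilon$ and $\mathfrak{S}=\tilde{\mathfrak{S}}$: it uses \Cref{proposition:anytime_low_level} and \Cref{theorem:probaboundanaytime} for the two conditions, and the uniform bound $\tau^{(\varepsilon)}\le\ttau_{\varepsilon}$ from \Cref{corollary:corogamestoppingtime} for the patience condition. Your monotonicity step $\beta^{\tau^{(\varepsilon)}}\ge\beta^{\ttau_{\varepsilon}}\ge\bar{\beta}^i$ and your index bookkeeping between $\tau_{\bpsi}$ and $\ttau_{\bpsi}$ simply make explicit what the paper leaves implicit.
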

Observe that in \Cref{corollary:folk_anytime}, taking $\varepsilon\rightarrow 0$ allows us to nearly recover the perfect monitoring result of \Cref{theorem:folk}. In other words, when players use our test-then-punish strategy with sequential tests built as e-processes, they can cope with imperfect monitoring and sustain cooperation to obtain payoffs arbitrarily close to any feasible rational profile $\bv$. We note, however, that as $\varepsilon\rightarrow 0$, the patience condition \eqref{eq:patiencecorollaryanytime} requires that $\beta\rightarrow1$.

\section{Equilibrium Under Imperfect Monitoring via Batch Testing}\label{section:folk_batch}

In the previous section, we saw that anytime testing can be used to implement the test-then-punish strategy, and that it can certify an arbitrarily low Type I error. However, the approach has from two limitations. First, our analysis is limited to stationary deviations, \ie, belonging to $\tilde{\mathfrak{S}}$ defined in \eqref{def:stationarydeviations}, so as to obtain the guarantee on Type II error presented in  \Cref{def:highpower}. While plausible from a practical standpoint, one could argue that this assumption is restrictive since it rules out adversarial behaviors in which players employ sophisticated adaptive strategies. Second, $\bs_{\bv}$ is only a Nash equilibrium in \Cref{theorem:folk_anytime_main}, while we aspire to obtain a subgame perfect Nash equilibrium. 

We therefore ask whether it is possible to implement the test-then-punish strategy in full generality, by considering the set of \textit{all} deviations  $\mathfrak{S}=\cS$, and to ensure that it is a subgame perfect equilibrium. As we shall show, this more ambitious objective can indeed be achieved by switching to a different testing strategy. The price to pay, however, is the loss of explicit control over Type I error.

Broadly speaking, we propose to use a \textit{batch} testing procedure which can be described as follows. We first gather rounds in batches with a predetermined length; in each batch, players check whether any player $i\in[N]$ played on average the mixed action $w^{i}_{\bv}$ for this batch, based on the pure actions they observe; if this is the case, all players keep playing $\bw_{\bv}$ themselves; otherwise, they turn to the punishment $\boldsymbol{b}\in\Delta(\cA)$ for all of the subsequent batches. 

Formally, assume that players agree on  feasible payoffs $\bv\in\cU^1\times\ldots\times\cU^N$ and profile of mixed strategies $\bw_{\bv}=(w^1_{\bv} \ldots,w^N_{\bv})\in\Delta(\cA)$ such that $v^i = u^i (\bw_{\bv})$ for $i\in[N]$. Fix a batch length $L>0$, and define for any $k\geq 0$ the batch $\cB_k$ as
$$
\cB_k = \{Lk  ,\ldots, L(k+1)-1\}\eqsp,
$$
so $\N=\cup_{k\geq 0}\cB_k$. At the end of each batch $\cB_k$, players check by means of a statistical test whether player $i$'s actions are drawn i.i.d from $w^{i}_{\bv}$ on each of the rounds $t\in\cB_k$. Formally, they test for any $k\geq0$
\begin{equation}
    \label{eq:h0batch}
\rmH^i _{0,k}:\quad ( \,A^i_t\, )_{\,t\in\cB_k}\overset{\mathrm {i.i.d}}{\sim}w^i_{\bv}\qquad\text{versus}\qquad \rmH^i _{1,k}:\quad ( \,A^i_t\, )_{\,t\in\cB_k}\overset{\mathrm {i.i.d}}{\not\sim}w^i_{\bv} \eqsp.
\end{equation}
For $k \geq 0$, denote by $\batch_k = \{(A^i_t)_{t\in\cB_k}\, :\, i \in [N]\} \in \cA^L$. Then, a test for the hypothesis \eqref{eq:h0batch} is a map $\phi^i_k \colon  \cA^L \rightarrow\{0,1\}$  with $\phi^ i _k = 1$ meaning that $\rmH^i _{0,k}$ is rejected at batch~$k$.

We are now ready to introduce the batch test-then-punish strategy. For any $t\geq 0$, we write $k_t = \lfloor t / L \rfloor$ the index of the batch to which $t$, which implies in particular that $\cup_{k<k_{t}}\,\batch_k \subset \hist_t$. Then, we define the batch test-then-punish strategy as $\bbs_{\bv}=(\bars^1_{\bv},\ldots,\bars^N_{\bv})\in\cS$ as follows. For any $j\in[N]$ and $\hist_t \in\cH_t$, $\bar{s}_{\bv}(\hist_t)=w^j_{\bv}$ if $k_t =0$, and for any $k_t \geq 1$
\begin{equation}
\label{def:testthenpunsihbatch}
    \bars^j_{\bv} (\hist_t) = \begin{cases}
        w^j_{\bv} &\text{if}\quad \prod\limits_{k=1}^{k_{t}-1}\prod\limits_{i=1}^{N}\parenthese{1-\phi^i _{k}(\batch_k)}= 1\eqsp, \\
        b^j\eqsp &\text{otherwise}\eqsp.
    \end{cases}
\end{equation}
The batch test-then-punish strategy proceeds in the same way as its anytime counterpart. At batch $k\geq0$, if no deviation has been detected---i.e., $\phi^i_{l}=0$ for any $i\in[N]$ and any step $l\in\{1,\ldots,k-1\}$---players keep playing the cooperative strategy $\bw_{\bv}$. However, as soon as a deviation is detected, that is $\phi^i_k = 1$ for some $i$ and $k$, they switch to the punishment phase by playing $\boldsymbol{b}$ forever after. Just as previously, we can define rejection times associated to the family of test sequences $\bphi=(\bphi^1, \ldots,\bphi^N)$ as follows
\begin{equation}
\label{eq:stoppingtimebatch}\text{for any }i\in[N],\quad\kappa_{\bphi^i}=\inf\,\{k\geq0:\: \phi^i_k=1\}\qquad\text{and}\qquad
    \kappa_{\bphi}=\min_{i\in[N]}\kappa_{\bphi^i}\eqsp.
\end{equation}In \eqref{eq:stoppingtimebatch}, $\kappa_{\bphi}$ is the first global iterate $k$ such that one hypothesis among $\rmH^1_{0,k}\,\ldots\,\rmH^N _{0,k}$ is rejected.  In other words, it is the batch after which the punishment phase starts. The batch test–then–punish strategy is illustrated in \Cref{figure:batcharrow}. A deviation is detected at the end of batch $\kappa_{\bphi}$, after which players enter the punishment phase starting from batch $\kappa_{\bphi}+1$.
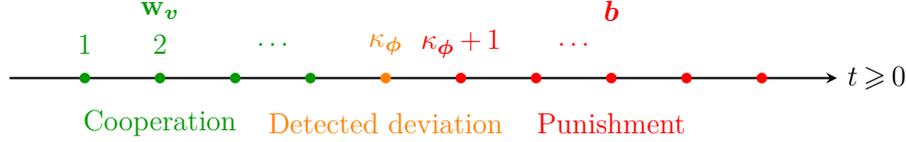
\begin{figure}[ht]
\centering
\begin{tikzpicture}[>=stealth, thick]
    \draw[->] (0,0) -- (11,0) node[right] {$t \geq 0$};

    \def\kindex{5}

    \foreach \i in {1,...,\numexpr\kindex-1\relax} {
        \fill[green!60!black] (\i,0) circle (2pt);
    }

    \fill[orange] (\kindex,0) circle (2pt);

    \foreach \i in {\numexpr\kindex+1\relax,...,10} {
        \fill[red] (\i,0) circle (2pt);
    }

    \node[green!60!black] at (1,0.45) {$1$};
    \node[green!60!black] at (2,0.45) {$2$};
    \node[green!60!black] at (3.5,0.45) {$\ldots$};

    \node[orange] at (\kindex,0.45) {$\kappa_{\bphi}$};
    \node[red] at (\kindex+1,0.45) {$\kappa_{\bphi}+1$};
    \node[red] at (\kindex+2.5,0.45) {$\ldots$};

    \node[green!60!black] at ({(\kindex-1)/2},0.9) {$\bw_{\bv}$};
    \node[red] at ({(\kindex+11)/2},0.9) {$\boldsymbol{b}$};

    \node[green!60!black] at ({(\kindex-1)/2},-0.6) {Cooperation};
    \node[orange] at (\kindex,-0.6) {Detected deviation};
    \node[red] at ({(\kindex+11)/2},-0.6) {Punishment};

\end{tikzpicture}
\caption{Timeline of the batch test--then--punish strategy. Dots represent ends of batches.}
\label{figure:batcharrow}
\end{figure}

Just as in the anytime case, one can ask if there are conditions that support the design sequences of batch tests $\bphi=(\bphi^1, \ldots, \bphi^N)$ so that \eqref{def:testthenpunsihbatch} is effective. We introduce two such conditions. While they resemble in spirit  those introduced for anytime testing (\Cref{def:lowlevel} and \Cref{def:highpower}), they are slightly less demanding, and more easily satisfied for arbitrary deviations. 

First, the probability to wrongfully detect a deviation \textit{on a given batch} should be small for most continuation games. Recall that for any $t\geq0$, we define $k_t = \lfloor t/L\rfloor$ to be the batch to which $t$ belongs.
\begin{condition}\label{eq:batchtypeierror}There exist  $p_L\in(0,1)$ and $q_L\in(0,1)$ such that for any $t\geq0$, there exists a set of histories $\mathsf{H}_t\subset\cH_t$ with $\P^{\bbs_{\bv}}(\hist_t\in\mathsf{H}_t)\geq 1 -q_L$ satisfying, for any $\hist_t\in\mathsf{H_t}$
\begin{align*}
\P^{\bbs_{\bv}}(\kappa_{\bphi}\geq k+1\,|\, \kappa_{\bphi}\geq k\,,\,\hist_t)\geq 1-p_L\qquad\text{for any}\quad k\geq k_t\eqsp.
\end{align*}
\end{condition}Informally, \Cref{eq:batchtypeierror} ensures that at any point of time, starting from an history that is likely under $\bbs_{\bv}$ (with probability at least $1-q_L$), when players are still playing according to the cooperative strategy $\bw_{\bv}$  (given that $\kappa_{\bphi}\geq k$), the event $\{\kappa_{\bphi}=k\}$, which corresponds to wrongfully triggering the punishment phase is not likely (with probability at most $p_L$). The probabilities $p_L$ and $q_L$ should be typically be thought of as small. Note that this assumption is weaker than \Cref{def:lowlevel}, since we only require the Type I error to be low \textit{within each batch}. We let $p_L$ and $q_L$ depend on the batch size $L$, since we expect this probability to decrease as batch size increases.

Second, for any player $i\in[N]$, a deviation from $w^i _{\bv}$ that remains undetected on a batch should not bring too much utility gain.
\begin{condition}\label{eq:utilitygain}
For $\beta \in\ooint{0,1}$,
there exists $\Delta_L$ such that for any $k\geq0$, player $i\in[N]$  sequence of actions $(a^i_{t})_{t\in\cB_k} \in (\cA^i)^L$, 
$$
\2{\phi^i_k = 0}\sum_{t\in\cB_k}\beta^t\,u^i (a^i _t, w^{-i}_{\bv})\leq \sum_{t\in\cB_k}\beta^t (v^i+\Delta_L) \eqsp,$$where $u^i (a^i, w^{-i}_{\bv})=\sum_{a^{-i}}u^i(a^i ,a^{-i})\,w^{-i}_{\bv}[a^{-i}].$
\end{condition}Broadly speaking, \Cref{eq:utilitygain} ensures that if no deviation is detected on a batch (as captured by $\2{\phi^i _k = 0}$), then no sequence of actions $(a^i_t)_{t\in\cB_k}$ for player $i$ can result in a discounted payoff greater than $v^i + \Delta_L$ on this batch when opponents stick to $w^{-i}_{\bv}$.

Recall that \Cref{def:highpower} required the probability of not triggering a punishment under deviation to be bounded. This requirement was precisely why, in the anytime framework, we restricted attention to stationary deviations $\tilde{\mathfrak{S}}$: establishing such a guarantee for arbitrary deviations was not feasible. In contrast, \Cref{eq:utilitygain} merely requires that the utility gain from deviating be bounded. As we shall see later, this condition holds for arbitrary deviations under the class of tests considered in \Cref{subsection:batchconcentration}. This is the key advantage of the batch approach, and it explains why it can accommodate a substantially richer class of strategies than anytime testing. As a side remark, note that if players were to observe realized payoffs rather than actions, they could test condition \Cref{eq:utilitygain} directly. This observation suggests a natural extension of our framework to settings with payoff-based monitoring rather than action-based monitoring, which we leave for future work.

We show below that, when Conditions \ref{eq:batchtypeierror} and \ref{eq:utilitygain} are satisfied, a Folk Theorem can be recovered using batch testing.

\begin{restatable}{theorem}{folkbatchtesting}\label{theorem:batch_folk}
    Let $\varepsilon>0$, $\beta \in\ooint{0,1}$ and consider $\bbs_{\bv}$ defined in \eqref{def:testthenpunsihbatch} with $\bphi$ satisfying  \Cref{eq:batchtypeierror}  and \Cref{eq:utilitygain}. Let $\bv\in\cU^1\times\ldots\times\cU^N$ be a feasible payoff such that $v^i \geq\underline{u}^i$ for $i\in[N]$. Suppose in addition that for any  $i\in[N]$,
    \begin{equation}
        \label{eq:patiencebatch}
    \beta^L \geq \frac{\bar{u}^i - v^i - \Delta_L}{\bar{u}^i - \underline{u}^i}\eqsp.
    \end{equation}
    Then, it holds
\begin{enumerate}
    \item For any $i\in[N],\,$ $\beta^{L}\parenthese{1-\frac{p_L}{1-\beta^L}}\,v^i \leq U^i (\bbs_{\bv})\leq v^i\eqsp,$
    \item $\bbs_{\bv}$ is a $\parenthese{\beta^{-L}(1-\beta^{2L}+\Delta_L) + \frac{p_L}{1-\beta^L}\, , \,q_L}$-HP-SPNE.
\end{enumerate}
\end{restatable}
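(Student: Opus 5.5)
We treat the two claims separately, mirroring the proof of \Cref{theorem:folk} but working batch by batch rather than round by round.

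\textbf{Payoff bounds (item 1).} The upper bound is immediate. Under $\bbs_{\bv}$ each round yields expected payoff $v^i$ during cooperation and $\underline{u}^i\leq v^i$ during punishment. For the lower bound we drop the (nonnegative) punishment payoffs and keep only the cooperative ones. Batch $\cB_0$ is always cooperative, and batch $\cB_k$ with $k\geq1$ is cooperative as long as $\kappa_{\bphi}\geq k$ (up to the indexing convention in \eqref{def:testthenpunsihbatch}). This gives
\[
U^i(\bbs_{\bv})\geq v^i\sum_{k\geq 1}(1-\beta^L)\beta^{Lk}\,\P^{\bbs_{\bv}}(\kappa_{\bphi}\geq k).
\]
Applying \Cref{eq:batchtypeierror} at $t=0$, where $\mathsf{H}_0$ is the trivial history, gives $\P(\kappa_{\bphi}\geq k)\geq(1-p_L)^{k}\geq 1-kp_L$. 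Summing, $\sum_{k\geq1}(1-\beta^L)\beta^{Lk}=\beta^L$ and $\sum_k (1-\beta^L)\beta^{Lk}k\,p_L\leq \beta^L p_L/(1-\beta^L)$. Together these produce the claimed factor $\beta^L\bigl(1-p_L/(1-\beta^L)\bigr)$.

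\textbf{HP-SPNE (item 2).} Fix $t$, take $\mathsf{H}_t$ from \Cref{eq:batchtypeierror} with mass at least $1-q_L$, and fix $\hist_t\in\mathsf{H}_t$. There are two cases.

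If a rejection has already occurred in $\hist_t$, the opponents play $b^{-i}$ forever, independently of future play. Any one-shot deviation is then not profitable because $\bb$ is a stage Nash equilibrium, and the continuation inequality holds with $\varepsilon=0$.

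Otherwise the state is cooperative. We compare $U^i(\bbs_{\bv};\hist_t)$ with $U^i(s^i,\bars^{-i}_{\bv};\hist_t)$ for an arbitrary $s^i$.
\begin{itemize}
\item[] \emph{Equilibrium side.} Following the computation of item 1 conditionally on $\hist_t$, and using the conditional bound $\P(\kappa_{\bphi}\geq k+1\mid\kappa_{\bphi}\geq k,\hist_t)\geq 1-p_L$ for $k\geq k_t$, we get a lower bound of the form $v^i-\bigl(1-\beta^{L}\bigr)$-type boundary terms minus $p_L/(1-\beta^L)$. The boundary terms come from the current partial batch and the one-batch delay before punishment.
\item[] \emph{Deviation side.} We decompose by batches. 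On each batch that passes the test, \Cref{eq:utilitygain} caps the discounted gain at $v^i+\Delta_L$. It is applied pathwise to the realized actions $a^i_t$, after conditioning on $A^i_t$ and using that opponents play $w^{-i}_{\bv}$ independently. On the batch where rejection happens, and on the following batch before punishment takes effect, the player gets at most $\bar{u}^i$. After that the player gets at most $\underline{u}^i$, since the opponents play $b^{-i}$ and $b^i$ is a best response.
\end{itemize}
This yields a continuation value that is a convex combination, indexed by the detection batch $\kappa$, of terms of the form
\[
(1-\beta^{L\kappa'})(v^i+\Delta_L)+\beta^{L\kappa'}\bigl[(1-\beta^{L})\bar{u}^i+\beta^{L}\underline{u}^i\bigr],
\]
with normalization by $\beta^{-t}$, which is where the factor $\beta^{-L}$ comes from when $t$ sits inside a batch. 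The patience condition \eqref{eq:patiencebatch} is exactly what makes the bracket at most $v^i+\Delta_L$. Every term is therefore bounded by $v^i+\Delta_L$ plus the boundary loss $1-\beta^{2L}$ from the at most two partial or unpunished batches, and the $\beta^{-L}$ rescaling comes from the current partial batch. Combining with the equilibrium side gives $\varepsilon=\beta^{-L}(1-\beta^{2L}+\Delta_L)+p_L/(1-\beta^L)$.

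\textbf{Main obstacle.} The delicate step is the deviation side against fully adaptive $s^i$. \Cref{eq:utilitygain} is stated for deterministic action sequences and the indicator of a non-rejection, whereas the deviator's actions are random and history-dependent. To handle this, we write the expected payoff as an expectation over realized paths and apply the condition pathwise on each batch. We then handle the rejection batch via the crude bound $\bar{u}^i\leq 1$ and absorb it into the $1-\beta^{2L}$ term. Carefully tracking the offset between $t$ and the batch boundaries is what produces the $\beta^{-L}$ factor.
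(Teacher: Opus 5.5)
Your proposal is correct, and for item~2 it follows the same structure as the paper's proof. You use the same case split on whether a rejection already occurred in $\hist_t$. On the deviation side you use the same batch decomposition: passing batches, then the rejection batch at $\bar u^i$, then punishment at $\underline u^i$. \Cref{eq:utilitygain} is applied pathwise through $\1\{k<\kappa_{\bphi}\}\leq\1\{\phi^i_k=0\}$, the already-played part of the current batch is added back at no cost because payoffs are nonnegative, and \eqref{eq:patiencebatch} makes $(1-\beta^L)\bar u^i+\beta^L\underline u^i-(v^i+\Delta_L)\leq 0$. Together these give $U^i(s,\bars^{-i}_{\bv};\hist_t)\leq\beta^{-L}(v^i+\Delta_L)$.

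The genuine difference is the cooperative lower bound, in both items. The paper enlarges the probability space, builds a $\mathrm{Geometric}(p_L)$ variable $\bar\kappa\leq\kappa_{\bphi}$ in \Cref{lemma:barkappa}, and evaluates $\E[\1\{\bar\kappa>k_t\}(1-\beta^{L(\bar\kappa-k_t-1)})]$ in closed form. You instead chain the hazard bound of \Cref{eq:batchtypeierror} into $\P(\kappa_{\bphi}\geq k\mid\hist_t)\geq(1-p_L)^{k-k_t}\geq 1-(k-k_t)p_L$. You then sum the cooperative batches directly using $\sum_{j\geq1}(1-\beta^L)j\beta^{Lj}=\beta^L/(1-\beta^L)$. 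Because the cooperative payoff is nondecreasing in $\kappa_{\bphi}$, stochastic dominance is all that is needed, so you reach the same factor $\beta^L\bigl(1-p_L/(1-\beta^L)\bigr)v^i$ with no coupling. This is more elementary. It is also sturdier than the paper's lemma, which infers the almost-sure domination $\bar\kappa\leq\kappa_{\bphi}$ only from equalities in law of the individual tests.

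Two corrections to your verbal accounting; your displayed formula is already the right one.

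First, there is no extra unpunished batch. By \eqref{def:testthenpunsihbatch}, a rejection at the end of batch $\kappa_{\bphi}$ switches everyone to $\bb$ from batch $\kappa_{\bphi}+1$ on, so only the rejection batch is paid at $\bar u^i$. With a second such batch the bracket would become $(1-\beta^{2L})\bar u^i+\beta^{2L}\underline u^i$, which \eqref{eq:patiencebatch} does not control.

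Second, no $1-\beta^{2L}$ loss belongs on the deviation side. The deviation bound is exactly $\beta^{-L}(v^i+\Delta_L)$, and the cooperative bound is at least $\beta^L v^i-p_L/(1-\beta^L)$. The term $1-\beta^{2L}$ appears only in their difference, through $\beta^{-L}v^i-\beta^{L}v^i=\beta^{-L}(1-\beta^{2L})v^i\leq\beta^{-L}(1-\beta^{2L})$. Charging $1-\beta^{2L}$ on the deviation side on top of the $(1-\beta^L)$ boundary loss on the cooperative side would overshoot the stated constant.
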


\Cref{theorem:batch_folk} shows that batch testing, provided \Cref{eq:batchtypeierror} and \Cref{eq:utilitygain} hold, yields an approximate Folk Theorem. This result exhibits both similarities to and differences from its anytime counterpart, \Cref{theorem:folk_anytime_main}.

Regarding similarities, both results ensure that $U^i(\bbs_{\bv})$ remains close to $v^i$. In the anytime case, the approximation becomes sharper as the Type I error bound $\gamma$ becomes smaller; similarly, in the batch case, it improves as $p_L$ decreases. Moreover, in both cases the strategy profile constitutes an approximate equilibrium. In the anytime setting, the approximation term shrunk with $\gamma$ and $\varepsilon$, the magnitude of the deviation; similarly, in the batch setting, it vanishes as $p_L$ and $\Delta_L$, the maximum utility gain on each batch, decrease. Finally, the patience condition \eqref{eq:patiencebatch} involves $\beta^L$, just as \eqref{eq:patienceconditionanytime} involves $\beta^{\ttau_{\varepsilon}  }$. The intuition is identical: because tests are conducted only at the end of batches, players may enjoy up to $L$ rounds of deviation before detection.

However, \Cref{theorem:batch_folk} and \Cref{theorem:folk_anytime_main} differ in two important respects. First, the batch construction of $\bbs_{\bv}$ is an equilibrium with respect to \textit{all} possible deviations, whereas the anytime result provides guarantees only against stationary deviations. This is because \Cref{eq:utilitygain} is more easily met for arbitrary deviations than \Cref{def:lowlevel}. Second, the batch equilibrium is subgame perfect (except for a small fraction $q_L$ of continuation games), while in the anytime setting $\bs_{\bv}$ is a Nash equilibrium only.  The reason behind this difference is that for any step $t\geq0$, the anytime procedure keeps track of the empirical frequencies $\hat{w}^i_t (a)$ using \textit{all} observed actions up to time $t$. Thus, a history $\hist_t$ could be constructed at any round $t$ so as to strongly support $\rmH^i_0$ for a given player $i$, thereby allowing that player to deviate without being detected quickly.  On the contrary, the batch testing procedure  forgets past actions and only focuses on the current batch. This forgetful property, commonly referred to as \textit{bounded recall} in the repeated games literature \citep{mailath2006repeated}, is precisely why the construction yields subgame-perfect equilibria. For any continuation game starting at $t\geq 0$, the history  $h_t$ influences only the test statistic for the \textit{current} batch and does not accumulate as evidence into later batches. All in all, the fact that the equilibrium featured in \Cref{theorem:batch_folk} accommodates any deviation and is subgame perfect  makes this result substantially stronger than \Cref{theorem:folk_anytime_main} from a game-theoretic perspective.

\subsection{Batch testing via concentration}\label{subsection:batchconcentration}

In order to implement the batch test-then-punish strategy,  it remains to determine sequences of tests $\bphi=(\bphi^1, \ldots, \bphi^N)$ which satisfy \Cref{eq:batchtypeierror} and \Cref{eq:utilitygain}. We propose a very simple procedure: on each batch, players collect the pure actions played by player $i$, compute their empirical frequencies, and check whether they match with $w^{i}_{\bv}$. Formally, writing $\cA^i=\{a^i(1),\ldots,a^i(K)\}$, for any $k\geq 0$ and $i\in[N]$, we let
\begin{equation}\label{def:empiricalfrequencies}
\hat{w}^{i}_k = L^{-1}\parenthese{\sum_{t\in\cB_k}\1\{A^{i}_t = a^i(1)\},\ldots, \sum_{t\in\cB_k}\1\{A^{i}_t = a^i(K)\}}^\top\eqsp,    
\end{equation}
be the vector of empirical frequencies of pure actions played by $i$ on batch number $k$. Then, for a given threshold $\delta \in(0,1)$, we test $\rmH^i_{0,k}$ with the test statistic
\begin{equation}
    \label{def:test_batch}
    \tphi^i_k = \1\{\lVert \hat{w}^{i}_k - w^{i}_{\bv}\rVert_1 \geq \delta\} \eqsp.
\end{equation}The intuition behind \eqref{def:test_batch} is simple. If $\rmH^i _{0,k}$ is true, then the empirical distribution $\hat{w}^{i}_k$ concentrates around $w_{\bv}^{i}$ with high probability. The contrapositive immediately tells us that if $\lVert \hat{w}^{i}_k - w_{\bv}^{i}\rVert_1$ is high, we can suspect that the null $\rmH^i _{0,k}$ is false. We show in the next proposition that our test \eqref{def:test_batch} indeed fulfills the desiderata in \Cref{eq:batchtypeierror} and \Cref{eq:utilitygain}.

\begin{restatable}{proposition}{requirementsbatch}\label{proposition:conditionsbatch}
    Let $\beta,\delta \in\ooint{0,1}$. Consider for any  $i\in[N]$, the test defined in \eqref{def:test_batch}
and the resulting strategy $\bbs_{\bv}$ defined in \eqref{def:testthenpunsihbatch}. Then, 
    \begin{enumerate}
        \item \Cref{eq:batchtypeierror} holds with $p_L=q_L =2KN\exp(-2L\delta^2 / K^2)\eqsp,$
        \item \Cref{eq:utilitygain} holds with $\Delta_L =\delta +3(1-\beta^L)\eqsp.$
    \end{enumerate}
\end{restatable}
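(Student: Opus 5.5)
The plan is to prove the two items separately. Item 1 uses Hoeffding concentration of the per-batch empirical frequencies under cooperation. Item 2 is a deterministic comparison between discounted and undiscounted sums within a batch.

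\textbf{Item 1 (\Cref{eq:batchtypeierror}).} Fix a batch $k$ and a player $i$. On $\{\kappa_{\bphi}\geq k\}$, all players use $\bw_{\bv}$ throughout $\cB_k$, so $(A^i_t)_{t\in\cB_k}$ are i.i.d.\ draws from $w^i_{\bv}$. Since $\lVert \hat w^i_k - w^i_{\bv}\rVert_1\geq\delta$ forces $|\hat w^i_k(a)-w^i_{\bv}(a)|\geq \delta/K$ for some $a\in\cA^i$, Hoeffding's inequality on the $L$ indicators $\1\{A^i_t=a\}$ and a union bound over the $K$ actions give $\P(\tphi^i_k=1)\leq 2K\exp(-2L\delta^2/K^2)$. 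A further union bound over the $N$ players shows that one batch under cooperation triggers a rejection with probability at most $p_L=2KN\exp(-2L\delta^2/K^2)$.

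For a fully future batch $k\geq k_t+1$, the actions in $\cB_k$ are independent of $\hist_t$ given $\{\kappa_{\bphi}\geq k\}$. Hence $\P^{\bbs_{\bv}}(\kappa_{\bphi}\geq k+1\mid\kappa_{\bphi}\geq k,\hist_t)\geq 1-p_L$ holds for every history. Only the current batch $k=k_t$ needs care, because $\hist_t$ already contains part of it. I will take $\mathsf H_t$ to be the set of histories whose partial counts in $\cB_{k_t}$ are well behaved, namely
\[
\max_{i\in[N],\,a\in\cA^i}\Big|\sum_{s\in\cB_{k_t},\,s<t}\big(\1\{A^i_s=a\}-w^i_{\bv}(a)\big)\Big|\leq \frac{L\delta}{2K}
\]
(together with histories on which $\kappa_{\bphi}<k_t$, for which the conditioning event has probability zero). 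The probability of $\mathsf H_t$ follows from Hoeffding on the partial batch. On $\mathsf H_t$, a rejection at batch $k_t$ requires the remaining increments of the same batch to deviate by the other half. Another Hoeffding bound controls this conditional probability.

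\textbf{Main obstacle.} The delicate step is this treatment of the current, partially observed batch: splitting the deviation budget $\delta/K$ between the observed and unobserved parts of the batch. A naive split degrades the constant in the exponent, for instance to $L\delta^2/(2K^2)$. I would first try Hoeffding's maximal inequality over partial sums of the batch, $\P(\max_{m\leq L}|S_m|\geq x)$, which removes the dependence on where $t$ falls inside the batch. If the exact constant $2$ cannot be kept, the statement's constant may need to be adjusted, since the rest of the argument is insensitive to it.

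\textbf{Item 2 (\Cref{eq:utilitygain}).} Write $g(a)=u^i(a,w^{-i}_{\bv})\in[0,1]$, so that $v^i=\sum_a w^i_{\bv}(a)g(a)$. On $\{\phi^i_k=0\}$ we have $\lVert\hat w^i_k-w^i_{\bv}\rVert_1<\delta$, and therefore
\[
\sum_{t\in\cB_k}g(a^i_t)=L\sum_a\hat w^i_k(a)g(a)\leq L(v^i+\delta).
\]
Using $\beta^{t}\leq\beta^{Lk}$ on $\cB_k$, the left side of the condition is at most $\beta^{Lk}L(v^i+\delta)$. Conversely, $\sum_{t\in\cB_k}\beta^t\geq L\beta^{Lk}\beta^L$. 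With $\Delta=\Delta_L$, this gives
\[
\sum_{t\in\cB_k}\beta^t(v^i+\Delta)\geq\beta^{Lk}L\big(v^i+\Delta-(1-\beta^L)(1+\Delta)\big).
\]
It therefore suffices to show $\Delta\geq\delta+(1-\beta^L)(1+\Delta)$. If $\Delta_L\geq1$, the claim is trivial because $g\leq1\leq v^i+\Delta_L$. Otherwise $1+\Delta_L\leq 2\leq3$, and $\Delta_L=\delta+3(1-\beta^L)$ satisfies the inequality.
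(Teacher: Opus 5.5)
\textbf{Item 1.} Your argument for fully future batches $k\ge k_t+1$ is correct and gives $2KNe^{-2L\delta^2/K^2}$. Your plan for the current batch has the same architecture as the paper's proof: a set $\mathsf{H}_t$ controlling the partial counts, then Hoeffding on the unobserved remainder. But splitting the per-coordinate budget $L\delta/K$ equally only yields $p_L=q_L=2KNe^{-L\delta^2/(2K^2)}$, so item 1 is not proved.

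No refinement, maximal inequality or otherwise, can reach the stated exponent, because it is false. Take $K=2$, $w^1_{\bv}=(\tfrac{1}{2},\tfrac{1}{2})$, and $w^j_{\bv}$ a point mass for $j\ge 2$, so those tests never fire. Let $L$ be even and $t=L/2$, so the first half of batch $0$ lies in $\hist_t$. Let $m,M$ be the independent $\mathrm{Bin}(L/2,\tfrac{1}{2})$ counts of $a^1(1)$ in the observed and unobserved halves; rejection at batch $0$ means $|m+M-L/2|\ge L\delta/2$. Set $r=\mathbb{P}(M-L/4\ge L\delta/4)$.
\begin{itemize}
\item If $\mathsf{H}_t$ contains a history with $|m-L/4|\ge L\delta/4$, the conditional rejection probability there is at least $r$, so $p_L\ge r$.
\item Otherwise $\mathsf{H}_t^c\supseteq\{|m-L/4|\ge L\delta/4\}$, so $q_L\ge r$.
\end{itemize}
By the method of types, $r\ge \exp\{-\frac{L}{2}\mathrm{kl}(\frac{1+\delta}{2}\|\frac{1}{2})-O(\log L)\}$. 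Moreover $\mathrm{kl}(\frac{1+\delta}{2}\|\frac{1}{2})=\sum_{n\ge 1}\frac{\delta^{2n}}{2n(2n-1)}\le \delta^2\ln 2$. Hence for fixed $\delta$ and large $L$, $\max(p_L,q_L)\ge r$ exceeds the claimed $2KNe^{-2L\delta^2/K^2}=4Ne^{-L\delta^2/2}$. The half-observed batch genuinely halves the exponent.

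So your fallback is the right call, but aim for what the paper's proof actually delivers. Mid-proof it switches to $p_L=q_L=2KNe^{-L\delta^2/K^2}$, and that is the value used in the proof of \Cref{corollary:folk_batch}.
\begin{itemize}
\item To reach it, write $t_0=t-k_tL$ and split $L\delta/K=x_1+x_2$ with $x_1\propto\sqrt{t_0}$ and $x_2\propto\sqrt{L-t_0}$.
\item Both Hoeffding exponents then equal $2(L\delta/K)^2/(\sqrt{t_0}+\sqrt{L-t_0})^2\ge L\delta^2/K^2$, since $\sqrt{t_0}+\sqrt{L-t_0}\le\sqrt{2L}$.
\item This is exactly the paper's use of the radius $\sqrt{\ln(2KN/q_L)/(2n)}$ on each piece.
\end{itemize}
Alternatively, set $\epsilon=2KNe^{-2L\delta^2/K^2}$ and let $\mathsf{H}_t$ consist of the histories on which punishment has already started or the conditional probability of rejection at batch $k_t$ is at most $\sqrt{\epsilon}$. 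Markov's inequality applied to that conditional probability gives $p_L=q_L=\sqrt{\epsilon}$ at once.

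The constant is also not immaterial downstream: the batch length in \Cref{corollary:folk_batch} is calibrated to the exponent $L\delta^2/K^2$, and with yours it would have to be doubled.

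\textbf{Item 2.} This is correct and more elementary than the paper's argument. The paper first bounds the effect of reordering actions within a batch via \Cref{lemma:worst_instance}. It then compares the discounted and uniform empirical measures through $\sum_{t<L}|\beta^t/Z-1/L|\le 2(1-\beta^L)$, with $Z=\sum_{t<L}\beta^t$.

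You instead sandwich $\beta^{Lk+L}\le\beta^t\le\beta^{Lk}$ on $\cB_k$. You use $g\ge 0$ for the left-hand side, $\sum_{t\in\cB_k}\beta^t\ge L\beta^{Lk+L}$ for the right-hand side, and dispatch the case $\Delta_L\ge 1$ trivially. This bypasses the permutation lemma entirely and yields the same $\Delta_L=\delta+3(1-\beta^L)$. Your inequality would even accept $\Delta_L=\delta+2(1-\beta^L)$.
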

The first point of \Cref{proposition:conditionsbatch} follows directly from a concentration argument. In the second point, $\Delta_L$ features two terms. The first one stems from the possibility for player $i$ to slightly deviate from $w^{i}_{\bv}$ within a $\delta$ range, without triggering the test of player $i$. The second one follows from the ability of player $i$ to re-order pure actions within a batch without changing $\hat{w}^{i}_k$,  but in a way that most valuable actions are played at the beginning of the batch to enjoy a better discounting.  

It is worth noting two trade-offs in $L$ and $\delta$ appear in \Cref{proposition:conditionsbatch}. On the one hand, a larger \(L\) ensures that \(\hat{w}^{i}_k\) concentrates more tightly around \(w^i_{\bv}\) under \(\rmH^i_{0,k}\), which is statistically desirable ($p_L$ decreases with $L$). On the other hand, a large $L$ gives the opponent more room to deviate within a batch and potentially obtain a sizeable payoff gain, since detection can only occur at the end of that batch ($\Delta_L$ increases with $L$). A similar tension arises for the threshold $\delta$. A larger $\delta$ reduces the risk of false positives, i.e., punishing based on fluctuations in $\lVert \hat{w}^{i}-\w^{i}_{\bv}\rVert_1$ that are merely stochastic ($p_L$ decreases as $\delta$ increases). However, setting $\delta$ too high may encourage the opponent to deviate from \(w^{i}_{\bv}\), as such deviations become less likely to be detected ($\Delta_L$ increases with $\delta$).

That said, it is possible to strike a balance by carefully tuning $L$ and $\delta$, so both $p_L$ and $\Delta_L$ are small. This observation, in conjunction with \Cref{theorem:batch_folk}, lead to the following corollary.

\begin{restatable}{corollary}{corollaryfolkbatch}\label{corollary:folk_batch}
Let $\bv=(v^1, \ldots,v^N)\in\cU^1 \times\ldots\times\cU^N$ such that $v^i \geq \underline{u}^i$ for any $i\in[N]$, $\varepsilon>0$ and $\beta \in \ooint{0,1}$. Consider for any  $i\in[N]$, the test defined in \eqref{def:test_batch} with level  $\delta = \epsilon/16$ and batch size  \begin{align*}
 L=\left\lceil\frac{258K^2}{\varepsilon^2}\ln\parenthese{\frac{126KN}{\varepsilon^2}}\right\rceil\eqsp,
\end{align*}
and the resulting strategy $\bbs_{\bv}$ defined in \eqref{def:testthenpunsihbatch}. 
If $1-\varepsilon/16 \leq \beta^L \leq 1 -\varepsilon/32$, then
\begin{enumerate}[label=(\roman*)]
    \item $\parenthese{1 -\varepsilon}v^i \leq U^i (\bbs_{\bv})\leq v^i\eqsp,$
    \item $\bbs_{\bv}$ is a $(\varepsilon\,,\,\varepsilon)$-HP-SPNE.
\end{enumerate}
\end{restatable}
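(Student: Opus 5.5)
The plan is to instantiate \Cref{theorem:batch_folk} with the constants given by \Cref{proposition:conditionsbatch}, and then check that the chosen $\delta=\varepsilon/16$, the batch length $L$, and the window $1-\varepsilon/16\le\beta^L\le 1-\varepsilon/32$ make every error term at most $\varepsilon$. By \Cref{proposition:conditionsbatch}, \Cref{eq:batchtypeierror} holds with $p_L=q_L=2KN\exp(-2L\delta^2/K^2)$, and \Cref{eq:utilitygain} holds with $\Delta_L=\delta+3(1-\beta^L)$. So once the patience condition \eqref{eq:patiencebatch} is verified, \Cref{theorem:batch_folk} applies. It then only remains to bound the resulting payoff gap and the HP-SPNE constants.

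\textbf{Step 1: smallness of $p_L$.} With $\delta=\varepsilon/16$ the exponent is $2L\varepsilon^2/(256K^2)$. The choice $L\ge (258K^2/\varepsilon^2)\ln(126KN/\varepsilon^2)$ gives exponent at least $\ln(126KN/\varepsilon^2)$, hence
\[
p_L=q_L\le 2KN\cdot\frac{\varepsilon^2}{126KN}=\frac{\varepsilon^2}{63}.
\]
In particular $q_L\le\varepsilon$, which gives the second HP-SPNE parameter. Since $1-\beta^L\ge\varepsilon/32$, we also get
\[
\frac{p_L}{1-\beta^L}\le \frac{32\,\varepsilon^2}{63\,\varepsilon}\le \frac{\varepsilon}{2}.
\]
A little more care with the constants may be needed here, but this is the intended mechanism: the factor $\varepsilon^2$ in $p_L$ is exactly what absorbs the $1/(1-\beta^L)$ blow-up.

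\textbf{Step 2: payoff bound (i).} The upper bound $U^i\le v^i$ comes directly from \Cref{theorem:batch_folk}. For the lower bound,
\[
\beta^L\Bigl(1-\frac{p_L}{1-\beta^L}\Bigr)\ge \Bigl(1-\frac{\varepsilon}{16}\Bigr)\Bigl(1-\frac{32\varepsilon}{63}\Bigr)\ge 1-\varepsilon,
\]
using $(1-a)(1-b)\ge 1-a-b$. Since $v^i\ge 0$, this gives $(1-\varepsilon)v^i\le U^i(\bbs_{\bv})$.

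\textbf{Step 3: patience and the equilibrium constant (ii).} Condition \eqref{eq:patiencebatch} needs $\beta^L(\bar u^i-\underline u^i)\ge \bar u^i-v^i-\Delta_L$. I would verify it by rewriting it as
\[
(1-\beta^L)(\bar u^i-\underline u^i)\le v^i-\underline u^i+\Delta_L.
\]
This holds because $v^i\ge\underline u^i$, $\bar u^i-\underline u^i\le 1$, and $\Delta_L\ge 3(1-\beta^L)$. For the first HP-SPNE parameter I need
\[
\beta^{-L}\bigl(1-\beta^{2L}+\Delta_L\bigr)+\frac{p_L}{1-\beta^L}\le\varepsilon.
\]
Using $1-\beta^{2L}\le 2(1-\beta^L)\le\varepsilon/8$, $\Delta_L\le \varepsilon/16+3\varepsilon/16=\varepsilon/4$, and $\beta^{-L}\le 16/15$, the first term is at most $(16/15)(3\varepsilon/8)=2\varepsilon/5$. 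Adding the bound $\le\varepsilon/2$ from Step 1 keeps the total below $\varepsilon$.

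\textbf{Main obstacle.} I expect the only delicate point to be the numerical bookkeeping in Step 1, namely that the specific constants $258$ and $126$ make $p_L/(1-\beta^L)$ small enough. This matters because the bound on $1-\beta^L$ is only $\varepsilon/32$ from below. If the crude bound fails, I would redo it using the exact exponent $2\cdot 258/256>2$, which gives $p_L\le 2KN(\varepsilon^2/(126KN))^{2.01}$ and ample slack. The rest is routine once \Cref{theorem:batch_folk} and \Cref{proposition:conditionsbatch} are invoked.
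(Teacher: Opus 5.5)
Your proposal is correct and follows the paper's own argument. You plug the constants of \Cref{proposition:conditionsbatch} into \Cref{theorem:batch_folk}, bound $\beta^{-L}(1-\beta^{2L}+\Delta_L)\le \tfrac{16}{15}\cdot\tfrac{3\varepsilon}{8}=\tfrac{2\varepsilon}{5}$ and $p_L/(1-\beta^L)$ by roughly $\varepsilon/2$, and lower-bound the payoff factor by $(1-\varepsilon/16)(1-\varepsilon/2)$-type products. This implicitly takes $\varepsilon\le 1$, as the paper also does. You additionally verify the patience condition \eqref{eq:patiencebatch}, which the paper leaves implicit.

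There is one small arithmetic slip: $32/63>1/2$, so your crude bound gives $p_L/(1-\beta^L)\le 32\varepsilon/63$ rather than $\varepsilon/2$. This is harmless, because $2/5+32/63<1$ and $1/16+32/63<1$. Moreover, the exponent factor $2\cdot 258/256>2$ gives far more slack anyway.
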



This corollary shows that, provided $\beta$ is sufficiently large, the batch test-then-punish strategy is an approximate HP-SPNE, and it yields a payoff that remains very close to the target $\bv$.  In particular, it is possible to drive $\varepsilon$ to zero, provided $\beta$ is sufficiently close to one. In this case, we almost recover the perfect monitoring result in \Cref{theorem:folk}, in spite of players only observing pure actions drawn from strategies. We once again stress that this result proves that $\bbs_{\bv}$ is a subgame perfect equilibrium, and that strategies are not assumed to be stationary. Finally, we note that the patience condition on $\beta$ in \Cref{corollary:folk_batch} is slightly more demanding than in the original \Cref{theorem:batch_folk}. Indeed, we require that $\beta^L\geq1-\varepsilon/16$, which is stronger than  $\beta^L\geq (\bar{u}^i - \underline{u}^i-\varepsilon/16)(\bar{u}^i - \underline{u}^i)^{-1}$, as well as an upper bound $\beta^L \leq 1 -\varepsilon/32$. The latter is necessary to bound the terms in $\beta^L$ appearing in the approximation terms of \Cref{theorem:batch_folk}. Although the condition on $\beta$ may appear restrictive, it can be read as follows: fixing $\beta \in (0,1)$, the construction yields an $(\varepsilon,\varepsilon)$-HP-SPNE for every $\varepsilon>0$ such that $1-\varepsilon/16 \leq \beta^L \leq 1 -\varepsilon/32$ holds (with an the appropriate choice of $L=L(\varepsilon)$); choosing the minimal admissible $\varepsilon$ then gives the tightest guarantee.

While \Cref{theorem:batch_folk} is a stronger result than \Cref{theorem:folk_anytime_main} from a game-theoretic point of view, the batch testing approach does \emph{not} inherit all the convenient properties of anytime testing. In particular, anytime testing allows us to explicitly control the Type~I and Type II errors (see \Cref{proposition:anytime_low_level} and \Cref{theorem:probaboundanaytime}). On the contrary, the batch setting does not offer any guarantee against Type I error. Actually, under the batch test-then-punish strategy, a wrongful punishment happens in finite time with probability one when the cooperative strategies are not degenerate.
\begin{restatable}{proposition}{nocontroltypei}\label{lemma:nocontrolbatch}
  Let $\beta,\delta \in\ooint{0,1}$. Consider for any  $i\in[N]$, the test defined in \eqref{def:test_batch}
and the resulting strategy $\bbs_{\bv}$ defined in \eqref{def:testthenpunsihbatch} for some feasible payoff $\bv$ such that $v^i\geq\underline{u}^i$ for $i\in[N]$. Suppose that 
\begin{equation}
\text{ $\min\limits_{j\in[N]}\defEns{\lVert w^j_{\bv} \rVert_0 = \sum\limits_{\ell=1}^K \2{w^j_{\bv, \ell}\ne 0}}>1$ and $\delta < 1- 1/\mathop{\min}\limits_{j\in[N]}\lVert w^j_{\bv} \rVert_0$ } \eqsp.
\end{equation}
Then  $\P^{\bbs_{\bv}}(\kappa_{\bphi} < \infty)=1$.
\end{restatable}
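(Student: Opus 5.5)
The plan is to show that, under $\P^{\bbs_{\bv}}$, on every batch reached while cooperation is still ongoing, the punishment is triggered with conditional probability at least some fixed $\eta>0$ that does not depend on the batch index. A geometric argument then gives $\P^{\bbs_{\bv}}(\kappa_{\bphi}\geq k)\leq(1-\eta)^{k-1}\to 0$.

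First, fix the player $j$ attaining $\min_j\lVert w^j_{\bv}\rVert_0=:m\geq 2$. We build a single pure-action sequence on a batch that forces rejection. Let $a^\star$ be any action in the support of $w^j_{\bv}$ and consider the batch outcome where player $j$ plays $a^\star$ in all $L$ rounds. Then $\hat w^j_k=e_{a^\star}$, and
\[
\lVert e_{a^\star}-w^j_{\bv}\rVert_1 = 2\bigl(1-w^j_{\bv}[a^\star]\bigr).
\]
We choose $a^\star$ to be the support element with the smallest mass, so that $w^j_{\bv}[a^\star]\leq 1/m$. This gives $\lVert \hat w^j_k-w^j_{\bv}\rVert_1\geq 2(1-1/m)\geq 1-1/m>\delta$, using $m\geq 2$ and the hypothesis on $\delta$. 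Hence $\tphi^j_k=1$ on this outcome.

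Second, we bound the probability of this outcome. As long as $\kappa_{\bphi}\geq k$, the strategy prescribes $w^j_{\bv}$ on every round of batch $\cB_k$. Note that batch $0$ is always cooperative, and by \eqref{def:testthenpunsihbatch} rejections are only checked for $k\geq1$; the exact indexing of $\kappa_{\bphi}$ only shifts the bound by one batch. The actions $(A^j_t)_{t\in\cB_k}$ are therefore i.i.d. from $w^j_{\bv}$ conditionally on $\scrH_{Lk}$ and on $\{\kappa_{\bphi}\geq k\}$, which is $\scrH_{Lk}$-measurable. Consequently,
\[
\P^{\bbs_{\bv}}\bigl(\kappa_{\bphi}=k\mid \kappa_{\bphi}\geq k,\hist_{Lk}\bigr)\geq \eta:=\bigl(w^j_{\bv}[a^\star]\bigr)^L>0.
\]
Taking expectations and iterating via the tower property gives $\P(\kappa_{\bphi}\geq k+1)\leq(1-\eta)\P(\kappa_{\bphi}\geq k)$. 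Letting $k\to\infty$ yields $\P(\kappa_{\bphi}=\infty)=0$. Equivalently, this is the second Borel–Cantelli lemma in its conditional (Lévy) form.

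The main subtlety is the second step, specifically the measurability bookkeeping. We must check that the event $\{\kappa_{\bphi}\geq k\}$ depends only on batches before $k$, so that play on batch $k$ is indeed i.i.d. $w^j_{\bv}$ under cooperation. We must also make sure to index from $k=1$, consistent with the product in \eqref{def:testthenpunsihbatch}. The combinatorial bound in the first step is elementary.
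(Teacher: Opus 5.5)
Your proof is correct and follows the same idea as the paper's proof. A batch in which player $j$ plays their least-likely support action in every round forces $\tphi^j_k=1$; this happens with probability $(w^j_{\bv}[a^\star])^L>0$ on each batch, independently across batches, so it occurs almost surely.

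Your conditional version of the last step is more careful than the paper's. You use that batch $k$ is drawn i.i.d.\ from $\bw_{\bv}$ only on the $\scrH_{Lk}$-measurable event $\{\kappa_{\bphi}\geq k\}$. The paper instead computes $1-\prod_{k}(1-(w^j_{\bv,r})^L)$ as if draws were i.i.d.\ from $w^j_{\bv}$ forever, although play switches to $\boldsymbol{b}$ after punishment. You also get the exact $\ell_1$ distance $2(1-w^j_{\bv}[a^\star])$, where the paper writes $1-w^j_{\bv,r}$; that smaller value is harmless, since it is still a valid lower bound.
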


We emphasize that \Cref{lemma:nocontrolbatch} does not contradict \Cref{corollary:folk_batch}; rather, it only means that the punishment phase occurs in finite time with probability one. However, the punishment occurs late enough on average so it does not significantly impact the utilities, and thus still allowing players to cooperate as highlighted by \Cref{theorem:batch_folk}. Nevertheless, this feature may be undesirable from a fairness perspective or in environments where players are risk averse, since it implies that punishment is eventually unavoidable. In such settings, implementing the anytime test–then–punish mechanism may be preferable. This comparison highlights that the two approaches involve distinct trade-offs, and the choice between them should ultimately depend on the institutional context and on players’ attitudes toward risk and fairness.

\section{Conclusion}

The folk theorem has long stood as the canonical way to study cooperation in repeated games, relying on punishments triggered by deviations. This paper proposes a new framework for analyzing how cooperation can be sustained
in multiplayer repeated games through statistical testing: players use mixed strategies but observe only realized pure actions, and enforcement is statistical, explicitly managing false alarms and missed detections. We introduce equilibrium notions that safely disregard tail probability histories generated by the monitoring process, and we provide a general test-then-punish strategy that yields two Folk-theorem-type resultsn showing that any feasible and individually rational payoff profile can be supported as an approximate equilibrium for sufficiently patient players. This study is, to our knowledge, the first to leverage the flexibility offered by the hypothesis testing framework to analyze repeated games with imperfect public monitoring.

We also provide two implementations with the corresponding tests. The anytime construction uses sequential inference to provide implementable strategies with uniform Type I error control over the whole horizon and finite-sample detection guarantees, while the batch construction loses anytime Type I error control to handle richer deviation patterns and to obtain a high-probability analogue of subgame perfection.

More broadly, the analysis highlights how repeated-game reasoning changes when interaction is mediated by data and inference. Promising directions for future work include richer information structures--e.g., private signals--heterogeneous agents, adaptive or nonstationary environments among others.

%
\bibliography{sample-bibliography}

\appendix
\section{Statistical Background}
\label{appendix:statistical_background}

\begin{definition}[Filtration]\label{definition:filtration}
A \textit{filtration} is a sequence of $\sigma$-fields $\{\mathcal F_t\}_{t\ge0}$ on a common probability space $(\Omega,\mathcal F,\mathbb P)$ such that
$$
\mathcal F_0 \subseteq \mathcal F_1 \subseteq \cdots \subseteq \cF \eqsp.
$$
The interpretation is that $\mathcal F_t$ represents all information available up to and including time $t$.
\end{definition}

\begin{definition}[Supermartingale]\label{def:supermartingale}
A real-valued stochastic process $(M_t)_{t\ge0}$ adapted to a filtration $\{\cF_t\}_{t \geq 0}$ is called a \textit{supermartingale} if it satisfies
\begin{align*}
\E[|M_t|]<\infty \quad\text{for any } t \geq 0 \quad\text{and}\quad
\E[M_t\mid \cF_{t-1}] \leq M_{t-1}\ \text{a.s.}
\end{align*}
If equality holds, $(M_t)_{t \geq 0}$ is a \textit{martingale}. Supermartingales describe processes whose conditional expectation cannot increase over time.
\end{definition}

\begin{definition}[E-process]\label{definition:eprocess}
An e-process (or \textit{test martingale}) is a nonnegative, $\{\cF_t\}_{t \geq 0}$-adapted process $(\tilde{E}_t)_{t\geq 0}$ satisfying
\begin{align*}
\E[\tilde{E}_t] \leq 1\quad\text{for any }t\geq 0 \eqsp.
\end{align*}
Under the null hypothesis, $(\tilde{E}_t)_{t \geq 0}$ is typically a nonnegative supermartingale with unit or subunit mean. For any $\gamma \in (0,1)$, the stopping rule ``reject at the first $t$ with $\tilde{E_t} \geq 1/\gamma$'' provides a sequential test controlling the Type-I error at level $\gamma$ uniformly over all stopping times.
\end{definition}

\begin{theorem}[Ville’s inequality, {\cite{doob1939jean}}]\label{theorem:ville_inequality}
Let $(\tilde{E}_t)_{t\ge0}$ be a nonnegative supermartingale with $\E[\tilde{E}_0] \leq 1$. Then, for any threshold $c>0$, we have that
\begin{align*}
\P\left(\sup_{t\geq 0} \tilde{E}_t \geq c\right) \leq 1/c \eqsp,
\end{align*}
which can also be rewritten $\P(\exists \, t \geq 0 \text{ s.t. } \tilde{E}_t \geq c) \leq 1/c$. In particular, setting $c=1/\gamma$ yields the inequality $\P(\sup_{t \geq 0} \tilde{E}_t \geq 1/\gamma)\leq \gamma$. This result ensures anytime-valid Type-I error control for tests based on e-processes.
\end{theorem}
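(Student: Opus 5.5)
The plan is the standard optional-stopping argument, applied to the first time the process crosses a level just below $c$. One small point needs care: the event $\{\sup_{t\geq0}\tilde E_t\geq c\}$ may hold without any $\tilde E_t$ actually reaching $c$, since the supremum need not be attained. To handle this, fix $c'\in(0,c)$ and define the stopping time
$$
\tau=\inf\{t\geq0:\ \tilde E_t\geq c'\}\in\N\cup\{\infty\} .
$$
It is a stopping time because $\{\tau\leq t\}=\cup_{s\leq t}\{\tilde E_s\geq c'\}\in\cF_t$. We then have $\{\sup_{t\geq0}\tilde E_t\geq c\}\subseteq\{\tau<\infty\}$. Hence it suffices to prove $\P(\tau<\infty)\leq 1/c'$ and then let $c'\uparrow c$.

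\textbf{Step 1 (stopped process).} For each $n\geq1$, write
$$
\tilde E_{\tau\wedge n}-\tilde E_{\tau\wedge(n-1)}=\1\{\tau\geq n\}\,(\tilde E_n-\tilde E_{n-1}) .
$$
The event $\{\tau\geq n\}=\{\tau\leq n-1\}^c$ lies in $\cF_{n-1}$. By the supermartingale property of \Cref{def:supermartingale},
$$
\E\bigl[\1\{\tau\geq n\}(\tilde E_n-\tilde E_{n-1})\bigr]
=\E\bigl[\1\{\tau\geq n\}\bigl(\E[\tilde E_n\mid\cF_{n-1}]-\tilde E_{n-1}\bigr)\bigr]\leq0 .
$$
All quantities are integrable because each $\tilde E_t$ is. Summing these increments telescopically gives
$$
\E[\tilde E_{\tau\wedge n}]\leq\E[\tilde E_0]\leq1\quad\text{for every }n .
$$

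\textbf{Step 2 (Markov-type bound).} On the event $\{\tau\leq n\}$ we have $\tilde E_{\tau\wedge n}=\tilde E_\tau\geq c'$. On the complement, $\tilde E_{\tau\wedge n}\geq0$ by nonnegativity. Therefore
$$
c'\,\P(\tau\leq n)\leq\E[\tilde E_{\tau\wedge n}]\leq1 .
$$

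\textbf{Step 3 (limits).} The events $\{\tau\leq n\}$ increase to $\{\tau<\infty\}$. Continuity from below gives $\P(\tau<\infty)\leq1/c'$, and so $\P(\sup_t\tilde E_t\geq c)\leq1/c'$ for every $c'<c$. Letting $c'\uparrow c$ yields the claim. The reformulation with $\exists\,t$, and the choice $c=1/\gamma$, follow immediately, since $\{\exists t:\tilde E_t\geq c\}\subseteq\{\sup_t\tilde E_t\geq c\}$.

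The only real obstacle is Step 1: optional stopping must be applied only at the bounded times $\tau\wedge n$, never at the possibly infinite $\tau$ itself. The predictable-indicator decomposition above handles this without invoking any general optional-stopping theorem.
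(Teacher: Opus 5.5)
Your proof is correct and complete. The paper does not prove this statement at all: it quotes it as a classical result with a citation to Ville/Doob. Your argument is precisely the standard proof of that cited result: the first entrance time $\tau$ of a level $c'<c$, then the predictable-indicator telescoping that gives $\mathbb{E}[\tilde E_{\tau\wedge n}]\le \mathbb{E}[\tilde E_0]\le 1$ without applying optional stopping at the possibly infinite $\tau$, then Markov's bound on the stopped variable, then monotone limits in $n$ and in $c'\uparrow c$.

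Your $c'<c$ device is a genuine point of care rather than pedantry. When the supremum is approached but not attained, $\{\sup_t\tilde E_t\ge c\}$ can strictly contain $\{\exists\, t:\tilde E_t\ge c\}$. So the paper's ``can also be rewritten'' is really an inclusion, and bounding the larger event, as you do, is the right way to obtain both forms.
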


\begin{restatable}{lemma}{lemmachernoffstein}\label{lemma:chernoff_stein}[Chernoff-Stein lemma, see \cite{chernoff1952measure}, \cite{cover1999elements}]
    Let $(A_t)_{t\geq 0}$ be a sequence of i.i.d. random variables with value in a discrete set $\cA$.
    Consider the simple hypotheses
\begin{align*}
\rmH_0 \colon A_t\sim \tilde{w}
\qquad\text{vs}\qquad
\rmH_1 \colon \ A_t\sim q \eqsp,
\end{align*}
where $q(a)>0$ and $\tilde{w}(a)>0$ for all $a\in \cA$, and
$\KL (q\|\tilde{w})=\sum_{a\in \cA} q(a)\log\frac{q(a)}{\tilde{w}(a)}<\infty$.
Let $\phi_n$ denote the test at time $n$, dependent on the history $\hist_n$, and let $\phi_n^{-1}(\{1\}) \subseteq \cH$
be the acceptance region for hypothesis $\rmH_1$ induced by $\phi_t$.
Define the Type I probability error
\begin{align*}
\gamma_n^{(1)} &= \P_{\tilde{w}}(\phi_n = 1) \eqsp, 
\end{align*}
as well as
\begin{align*}
\gamma^{(2)}_n = \min_{\hist_n \subseteq \sigma(A_0, \ldots, A_n) \colon \P_{\tilde{w}}(\phi_n = 1)\leq \gamma^{(1)}} \P_q(\phi_n = 0) \eqsp.
\end{align*}
Then, we have that
\begin{align*}
\lim_{n \to \infty} 1/n \log \gamma^{(2)}_n
= - \KL\left(\tilde{w} \| q\right) \eqsp.
\end{align*}
\end{restatable}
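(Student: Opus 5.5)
The plan is the classical two-sided argument: an achievability bound from a likelihood-ratio (relative-typicality) test, and a converse bound valid for every test whose Type I error is at most $\gamma^{(1)}\in(0,1)$. Write
\begin{align*}
D=\KL(\tilde{w}\|q)=\sum_{a\in\cA}\tilde{w}(a)\log\frac{\tilde{w}(a)}{q(a)}\eqsp,
\end{align*}
which is finite because $q$ and $\tilde{w}$ have full support on $\cA$. Since $\cA$ is finite, the log-likelihood ratio $\ell(a)=\log(\tilde{w}(a)/q(a))$ is bounded. Let $S_n=\sum_{t<n}\ell(A_t)$. Under $\P_{\tilde{w}}$ we have $S_n/n\to D$ almost surely by the strong law of large numbers.

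\emph{Achievability.} Fix $\eta>0$ and define the acceptance region of $\rmH_0$ as
\begin{align*}
\cT_n^\eta=\{\,|S_n/n-D|\leq\eta\,\}\eqsp,
\end{align*}
with $\phi_n=1$ off $\cT_n^\eta$. By the law of large numbers, $\P_{\tilde{w}}(\phi_n=1)\to0$, so for large $n$ this test meets the Type I constraint $\leq\gamma^{(1)}$. On $\cT_n^\eta$ the likelihood ratio satisfies $\prod_t q(A_t)\leq e^{-n(D-\eta)}\prod_t\tilde{w}(A_t)$. Summing over sequences in $\cT_n^\eta$ gives $\P_q(\phi_n=0)\leq e^{-n(D-\eta)}$. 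Hence $\limsup_n n^{-1}\log\gamma_n^{(2)}\leq -D+\eta$, and letting $\eta\downarrow0$ yields the upper bound.

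\emph{Converse.} Take any acceptance region $\cC_n$ for $\rmH_0$ (the complement of $\phi_n^{-1}(\{1\})$) with $\P_{\tilde{w}}(\cC_n)\geq1-\gamma^{(1)}$. Then
\begin{align*}
\P_{\tilde{w}}(\cC_n\cap\cT_n^\eta)\geq 1-\gamma^{(1)}-o(1)\eqsp.
\end{align*}
On $\cT_n^\eta$ we have the reverse bound $\prod_t q(A_t)\geq e^{-n(D+\eta)}\prod_t\tilde{w}(A_t)$, so
\begin{align*}
\P_q(\cC_n)\geq\P_q(\cC_n\cap\cT_n^\eta)\geq e^{-n(D+\eta)}\bigl(1-\gamma^{(1)}-o(1)\bigr)\eqsp.
\end{align*}
Since $\gamma^{(1)}<1$ is fixed, the prefactor contributes nothing to the exponent. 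Thus $\liminf_n n^{-1}\log\gamma_n^{(2)}\geq -D-\eta$ for every admissible test, in particular the minimizer. Letting $\eta\downarrow0$ and combining with achievability gives $\lim_n n^{-1}\log\gamma^{(2)}_n=-D$.

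\emph{Main obstacle.} The converse is the delicate step, because it must hold uniformly over all tests meeting the Type I constraint. The key is to intersect an arbitrary acceptance region with the typical set; this needs $\gamma^{(1)}<1$ strictly. An alternative route for the converse is the data-processing inequality applied to the binary channel $\1\{\cC_n\}$. It gives $nD\geq d\bigl(1-\gamma^{(1)}\,\|\,\gamma_n^{(2)}\bigr)\geq-\log2-(1-\gamma^{(1)})\log\gamma_n^{(2)}$, which only yields the weak converse with constant $1/(1-\gamma^{(1)})$. I would therefore stick with the typical-set argument. Finally, I would note explicitly the index and sign conventions, with the exponent $\KL(\tilde{w}\|q)$ taken under the null, matching the statement.
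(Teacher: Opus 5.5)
Your proposal is correct and follows the standard relative-entropy typical-set proof of Cover and Thomas: achievability uses the likelihood-ratio typical set, and the converse intersects an arbitrary acceptance region with that set. The paper itself gives no proof and relies on exactly this argument by citation; your choice to decouple the slack $\eta$ from the level, and to state explicitly that $\gamma^{(1)}\in(0,1)$ is needed, slightly tidies the textbook version (for $\gamma^{(1)}=0$ and full support the claim fails, since then $\gamma^{(2)}_n=1$).

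The one caveat is that you read ``discrete'' as ``finite''. On a countably infinite alphabet, full support does not make $\mathrm{KL}(\tilde{w}\,\|\,q)$ finite. Your argument still works whenever it is finite, because the negative part of $\log(\tilde{w}/q)$ is always $\tilde{w}$-integrable and so the law of large numbers applies. If it is infinite, the acceptance region $\{S_n\geq nM\}$ with $M\to\infty$ gives the exponent $-\infty$.
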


\Cref{lemma:chernoff_stein}
characterizes the best achievable exponential decay rate of the Type II error (missed detection) when the type-I error (false alarm) is constrained to remain under a fixed constant.

\paragraph{Mixture test martingale.} Finally, we highlight the mixture e-processes that could be of interest to extend our setting to unobserved utilities. Fix a symmetric prior density $\pi$, strictly positive and continuous on $\R$ (typically a normal $\mathcal N(0,\tau^2)$) such that $\int_{\R} \pi(\lambda)\,d\lambda=1$. Define the stochastic process $\{\tilde{E}_t\}_{t \geq 0}$ as $\tilde{E}_0=1$ and for any $t\geq 1$
\begin{equation}\label{equation:e-mixture}
\tilde{E}_t = \int_{\mathbb{R}} \exp\,\{ \lambda S_t -\lambda^2\sigma^2\,t/2 \}\, \pi(\lambda)\,d\lambda \eqsp,
\end{equation}
as well as the test
\begin{equation}\label{equation:definition_test_bis}
    \phi_t = \1\{\tilde{E}_t \geq 1/\gamma \} \qquad \text{for any } t \geq 1 \eqsp.
\end{equation}
For a pre-specified level $\gamma \in(0,1)$, one can reject $\rmH_0$ as soon as $\phi_t = 1$.

\emph{Anytime-validity.} Under $\rmH_0$, the process $(\tilde{E}_t)_{t\geq 0}$ in \eqref{equation:e-mixture} is a nonnegative $\{\mathcal F_t\}$-supermartingale with $\E[\tilde{E}_t]\leq 1$ for any $t\geq 0$. Hence, by Ville’s inequality, we have that
\begin{align*}
\P_{\rmH_0}\,\left(\sup_{t\ge0} \tilde{E}_t \geq 1/\gamma\right) \leq \gamma \eqsp.
\end{align*}
Consequently, the stopping rule ``reject at the first $t$ with $\phi_t = 1$'' has an anytime-valid Type-I error smaller than $\gamma$.

\section{Proof of the Folk theorem under perfect monitoring}
\label{section:proofperfect}
\theoremfolk*

\begin{proof}[Proof of \Cref{theorem:folk}]
    Let $t_0 \geq 0$ and $\hist_{t_0}\in\cH$ be the history generated by $\bs_{\bv}$ observed at time $t_0$. We consider two cases. Fix some $i\in[N]$. 

    First, assume there exists $t< t_0$ such that $\bw_t \in\hist_{t_0}$ and $w^j _t \ne w^j _{\bv}$ for some $j\in[N]$, so we are already in the punishment phase. By definition of $\bs_{\bv}$ in \eqref{def:grim}, players only play $\boldsymbol{b}=( b^1, \ldots, b^N)$ from time $t+1$ on. The continuation payoff starting from $t_0$ of player $i$ then reads
    \begin{align*}
        U^i (\bs_{\bv}\,;\,\hist_{t_0})=(1-\beta)\,\sum_{\ell=t_0}^\infty \beta^{\ell-t_0}u^i (b^i , b^{-i})=u^i (b^i , b^{-i})\eqsp.
    \end{align*}Now, consider any deviation $s \in\cS$ for the continuation game from $t_0$. We have
    \begin{align*}
        U^i (s ,s^{-i}_{\bv}\,;\,\hist_{t_0})&=(1-\beta)\,\sum_{\ell=t_0}^\infty \beta^{\ell-t_0}u^i (s(\hist_{\ell-1}), b^{-i})\\
        &\leq (1-\beta)\,\sum_{\ell=t_0}^\infty \beta^{\ell-t_0}u^i (b ^i, b^{-i})=u ^i (b^i , b^{-i})=U^i (\bs_{\bv}\,;\, \hist_{t_0}) \eqsp,
    \end{align*}because $\boldsymbol{b}$ is a Nash equilibrium.
    
    Second, assume that $\bw _t = (w^1_{\bv},\ldots, w^N_{\bv})$ for any $\bw_t \in\hist_{t_0}$. By definition of $\bs_{\bv}$, we have
    $$
U^i (\bs_{\bv}\,;\,\hist_{t_0})=(1-\beta)\,\sum_{s=t_0}^\infty \beta^{s-t_0}u^i (w^i_{\bv}, w^{-i}_{\bv})=v^i \eqsp.
    $$Now, consider a deviation $s\in\cS$ for player $i$, and denote by $T = \min\{t\geq t_0 \colon \: s(\hist_t)\ne w^i_{\bv}\}<\infty$. Player $i$'s continuation payoff reads
    \begin{align*}
        U ^i (s, s_{\bv}^{-i}\,;\, \hist_{t_0})&=(1-\beta)\,\parenthese{\sum_{s=t_0}^{T-1}\beta^{s-t_0} u^i (w^i_{\bv},w^{-i}_{\bv})+\beta^{T-t_0} u^i (s(\hist_T), w^{-i}_{\bv})+\sum_{s=T+1}^{\infty}\beta^{s-t_0} u^i(b^i, b^{-i})}\\
        &\leq (1-\beta)\,\parenthese{\sum_{s=t_0}^{T-1}\beta^{s-t_0} v^i+\beta^{T-t_0} \bar{u}^i+\sum_{s=T+1}^{\infty}\beta^{s-t_0} \underline{u}^i}\\
        &= (1-\beta^{T-t_0})v^i + (1-\beta)\beta^{T-t_0}\,\bar{u}^i + \beta \,\beta^{T-t_0}\,\underline{u}^i 
    \end{align*}And finally, since $\beta\geq (\bar{u}^i-v^i)(\bar{u}^i-\underline{u}^i)^{-1}$,
    $$U^i (s,s^{-i}_{\bv}\,;\,\hist_{t_0})\leq \bar{u}^i - \beta^{T-t_0}((1-\beta)\bar{u}^i-\beta\underline{u}^i + v^i) \leq v^i  = U^i (\bs_{\bv}\,;\,\hist_{t_0})\eqsp.$$This proves that $\bs_{\bv}$ is indeed an SPNE. 
\end{proof}
\section{Proofs in the anytime setting}

\lemmasmalldeviation*

\begin{proof}
    Let $i\in[N]$ $s^i \in\mathsf{B}^i(\bw_{\bv},\varepsilon)$ be a small deviation for player $i$. Since for any $t>\tau_{\bpsi}$ and $w^i_t \in\Delta(\cA^i)$, $u^i (w^i_t, w^{-i}_{\bv})\leq u^i (b^i, b^{-i})=\underline{u}^i$, we have
\begin{align*}
    U^i (s^i , s^{-i}_{\bv})-v^i & \leq (1-\beta)\,\EE{\sum_{t=0}^{\tau_{\bpsi}} \beta^t u^i (s^i (\hist_t), w^{-i}_v)+\sum_{t=\tau_{\bpsi}+1}^{\infty}\beta^t \underline{u}^i}  - v^i \\
     &=(1-\beta)\,\left(\EE{\sum_{t=0}^{\tau_{\bpsi}} \beta^t u^i (s^i (\hist_t), w^{-i}_v)+\sum_{t=\tau_{\bpsi}+1}^{\infty}\beta^t \underline{u}^i} \right)-(1-\beta)\,\EE{\sum_{t=0}^\infty \beta^t\,v^i }\\
    &= (1-\beta)\, \left(\EE{\sum_{t=0}^{\tau_{\bpsi}} \beta^t u^i (s^i (\hist_t),w^{-i}_{\bv})}-\EE{\sum_{t=0}^{\tau_{\bpsi}} \beta^t v^i} \right) \\
    &\quad+(1-\beta)\,\underbrace{\parenthese{\EE{\sum_{t=\tau_{\bpsi}+1}^\infty \beta^t \underline{u}^i} -\EE{\sum_{t=\tau_{\bpsi}+1}^\infty \beta^t v^i }}}_{\leq 0} \eqsp,
    \end{align*}
and since $v^i = u^i (\bw_{\bv})$
\begin{align*}
    U^i (s^i , s^{-i}_{\bv})-v^i &\leq (1-\beta)\,\EE{\sum_{t=0}^{\tau_{\bpsi}} \beta^t \parenthese{u^i (s^i (\hist_t), w^{-i}_{\bv})-u^i (w^i_v, w^{-i}_v)}} \eqsp.
\end{align*}
Now, define $\zeta^i = (u^i (a_1, w^{-i}_{\bv}),\ldots, (a_K, w^{-i}_{\bv}))\in\R^K$, so $u^i (w, \tilde{w}^{-i})=\ps{w}{\zeta^i}$ for any $w\in\Delta(\cA)$. We can rewrite the previous expression as
\begin{align*}
    U^i (s^i , s^{-i}_{\bv})-v^i &=(1-\beta)\,\sum_{t=0}^{\tau_{\bpsi}} \beta^t \ps{s^i (\hist_t)-w^i_{\bv}}{\zeta_i}\\
    &\leq (1-\beta)\,\sum_{t=0}^{\tau_{\bpsi}} \beta^t \lVert \zeta_i \rVert_\infty \lVert s^i (\hist_t)-w^i _{\bv}\rVert_1 \\
    &\leq (1-\beta)\,\varepsilon\,\sum_{t=0}^{\tau_{\bpsi}}\beta^t \\
    & \leq \varepsilon\eqsp,
\end{align*}
where we used $\lVert \zeta^i\rVert_\infty \leq 1$ and $\lVert s^i (\hist_t)-w^i_{\bv}\rVert_1\leq\varepsilon$, which is a consequence of $s^i \in\mathsf{B}^i(w_{\bv}, \varepsilon)$.
\end{proof}

\folkanytime*

\begin{proof}[Proof of \Cref{theorem:folk_anytime_main}]
We first prove $(i)$. Let $i\in[N]$. With $\tau = \min_{i\in[N]}\tau_{\bphi^i}$, and we have
\begin{align*}
    U^i (\bs_{\bv})&=(1-\beta)\,\E^{\bs_{\bv}}\parentheseDeux{\2{\tau_{\bpsi}<\infty}\parenthese{\sum_{t=0}^{\tau_{\bpsi}}\beta^{t }v^i + \sum_{t=\tau_{\bpsi}+1}^\infty\beta^{t } \underline{u}^i}+\2{\tau_{\bpsi}=\infty}\sum_{t=0}^\infty \beta^{t }v^i}\\
    &\geq (1-\beta)\,\E^{\bs_{\bv}}\parentheseDeux{\2{\tau_{\bpsi}=\infty}(1-\beta)^{-1}v^i}=\P^{\bs_{\bv}}(\tau_{\bpsi}=\infty)\,v^i\eqsp.
\end{align*}
Now, by a union bound and \Cref{def:lowlevel}
\begin{align*}
\P^{\bs_{\bv}} (\tau_{\bpsi}=\infty)&=1-\P^{\bs_{\bv}} (\{\tau_{\bpsi^1}<\infty\}\cup\ldots\cup\{\tau_{\bpsi^N}<\infty\})\\
&\geq 1 - (\P^\bs_{\bv} (\tau_{\bpsi^1}<\infty)+\ldots+\P^{\bs_{\bv}} (\tau_{\bpsi^N}<\infty)) \\
& \geq 1-N\frac{\gamma}{N} \\
& = 1 - \gamma\eqsp,
\end{align*}
hence we obtain
\begin{equation}\label{eq:anytimelowerbound}
    U^i (\bs _{\bv})\geq(1-\gamma)v^i\eqsp.
\end{equation} Regarding the upper bound, we note that by assumption, $v^i\geq\underline{u}^i$, and therefore
\begin{align*}
  U^i (\bs_{\bv})&=(1-\beta)\,\E^{\bs_{\bv}}\parentheseDeux{\2{\tau_{\bpsi}<\infty}\parenthese{\sum_{t=0}^{\tau_{\bpsi}}\beta^{t }v^i + \sum_{t=\tau_{\bpsi}+1}^\infty\beta^{t } \underline{u}^i}+\2{\tau_{\bpsi}=\infty}\sum_{t=0}^\infty \beta^{t }v^i}\\
        &\leq(1-\beta)\,\E^{\bs_{\bv}}\parentheseDeux{\2{\tau_{\bpsi}<\infty}\parenthese{\sum_{t=0}^{\tau_{\bpsi}}\beta^{t}v^i + \sum_{t=\tau_{\bpsi}+1}^\infty\beta^{t} v^i}+\2{\tau_{\bpsi}=\infty}\sum_{t=0}^\infty \beta^{t}v^i}\\
        &= v^i.
\end{align*}
We now focus on point (ii). Let $\varepsilon >0$ and $s\in\mathfrak{S}$ be a deviation for player $i$. We consider two cases. First, assume that $s \in\mathfrak{S}(\bw_{\bv}\,,\,\varepsilon)$. For any player $i$, we have 
\begin{align}
U^i (s,s^{-i}_{\bv})
    &=(1-\beta)\,\E^{(s,s^{-i}_{\bv})}\parentheseDeux{\2{\tau_{\bpsi}=\infty}\sum_{t=0}^\infty \beta^{t}u^i (A^1_t,\ldots,A^N_t)}\nonumber\\
    & \quad + (1-\beta)\,\E^{(s,s^{-i}_{\bv})}\parentheseDeux{\2{\tau_{\bpsi}<\infty}\parenthese{\sum_{t=0}^{\tau_{\bpsi}} \beta^{t}u^i (A^1_t,\ldots,A^N_t)+\sum_{t=\tau_{\bpsi}}^\infty \beta^{t}u^i (A^1_t,\ldots,A^N_t)}}\nonumber\\
    &\leq \P^{(s,s^{-i}_{\bv})}(\tau_{\bpsi}=\infty)\,\bar{u}^i +\E^{(s,s^{-i}_{\bv})} [\2{\tau_{\bpsi}<\infty}((1-\beta^{\tau_{\bpsi}+1})\bar{u}^i + \beta^{\tau_{\bpsi}+1}\underline{u}^i) ]\nonumber\eqsp.
    \end{align}
By \Cref{def:lowlevel}, $\E^{(s,s^{-i}_{\bv})}[\tau_{\bpsi}]<\infty$ so in particular $\P^{(s,s^{-i}_{\bv})}(\tau_{\bpsi}=\infty)=0$. Moreover, $\2{\tau_{\bpsi}<\infty}\leq1$, so we obtain
\begin{align}
    U^i (s,s^{-i}_{\bv})&\leq \E^{(s,s^{-i}_{\bv})} [(1-\beta^{\tau_{\bpsi}+1})\bar{u}^i + \beta^{\tau_{\bpsi}+1}\underline{u}^i]= \bar{u}^i-\E^{(s,s^{-i}_{\bv})} [\beta^{\tau_{\bpsi}+ 1}(\bar{u}^i - \underline{u}^i)]\nonumber
\end{align}
Since $\tau\mapsto -\beta^{\tau}(\bar{u}^i - \underline{u}^i)$ is concave, we have by Jensen inequality that
\begin{align}
    U^i (s,s^{-i}_{\bv}) &\leq \bar{u}^i - \beta^{\E^{(s,s^{-i}_{\bv})} [\tau_{\bpsi}]}(\bar{u}^i - \underline{u}^i)\leq \bar{u}^i - \beta^{\tau^{(\varepsilon)}}(\bar{u}^i - \underline{u}^i) \nonumber \eqsp,
\end{align}
where we denote $\tau^{(\varepsilon)}=\sup_{\bs\in\mathfrak{S}(\bw_{\bv},\varepsilon)}\E^{\bs}[\tau_{\bpsi}]$. Finally,
\begin{align*}U^i (s,s^{-i}_{\bv}) \leq (1-\gamma)v^i \leq U^i (\bs _{\bv})\leq U^i (\bs_{\bv})+(\gamma+\varepsilon)\eqsp,
\end{align*}where we used our assumption $\beta^{\tau^{(\varepsilon)}}\geq (\bar{u}^i - (1-\gamma)v^i)(\bar{u}^i- \underline{u}^i)^{-1}$ in the first inequality, \Cref{eq:anytimelowerbound} in the second inequality, and $\gamma+\varepsilon\geq 0$ in the last inequality.

Now, assume that $\bs\in\mathrm{B}^1 (\bw_{\bv},\varepsilon)\times\ldots\times\mathrm{B}^N (\bw_{\bv},\varepsilon)$. By \Cref{lemma:smalldeviation} and \eqref{eq:anytimelowerbound}, we have for any player $i\in[N]$ that
\begin{align}\label{eq:gainsmalldeviation}
U^i (s, s^{-i}_{\bv})&\leq v^i + \varepsilon = (1-\gamma)v^i + (\gamma\,v^i+\varepsilon) \leq U^i (\bs _{\bv}) + (\gamma+\varepsilon)\eqsp.
\end{align}
Thus, $\bs_{\bv}$ is a $(\gamma+ \varepsilon)$-NE which concludes the proof.
\end{proof}

\begin{lemma}[Lower bound on $\kappa_{\bphi}$]\label{lemma:barkappa}
Suppose that \Cref{eq:batchtypeierror} holds for some $(p_L,q_L)\in(0,1)^2$.
Fix $t\geq 0$ and a public history $\hist_t\in\mathsf{H}$.
Then there exists a sequence $(U_k)_{k\geq k_t}$ of i.i.d.\ $\mathrm{Uniform}([0,1])$ random variables,
independent of the processes $(\hat w^i_k)_{k\geq k_t}$ for all $i\in[N]$,
such that the following holds.

Define, for $k\geq k_t$,
$$
\bar\varphi_k = \2{U_k \leq p_L}\eqsp,
\qquad
\bar\kappa = \inf\{k\geq k_t \colon \bar\varphi_k = 1\},
$$
and recall that $\kappa = \min\{\kappa_{\bphi^1},\ldots,\kappa_{\bphi^N}\}$. Then, conditional on $\hist_t$
\begin{enumerate}
    \item $\kappa_{\bphi} \geq \bar\kappa$ almost surely, and $\kappa_{\bphi}$ is independent of $\bar\kappa$;
    \item $\bar\kappa \sim \mathrm{Geometric}(p_L)$.
\end{enumerate}
\end{lemma}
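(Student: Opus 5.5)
The plan is to realize $\bar\kappa$ on an enlarged probability space, using auxiliary randomness that is external to the game, and to compare it with $\kappa_{\bphi}$ through the conditional survival probabilities that \Cref{eq:batchtypeierror} controls. Fix $t$ and $\hist_t\in\mathsf{H}_t$. I extend the underlying space by taking the product with $([0,1]^{\N},\mathrm{Leb}^{\otimes\N})$, which supplies i.i.d.\ $\mathrm{Uniform}([0,1])$ variables $(U_k)_{k\geq k_t}$. By construction they are independent of every $\sigma(\hist_s)$ and hence of the processes $(\hat w^i_k)_{k\geq k_t}$, $i\in[N]$. Independence of $\kappa_{\bphi}$ and $\bar\kappa$ conditional on $\hist_t$ then holds because $\kappa_{\bphi}$ is a functional of $(\hat w^i_k)_{k\geq k_t}$ together with $\hist_t$ (by \eqref{def:testthenpunsihbatch} and \eqref{eq:stoppingtimebatch}), whereas $\bar\kappa$ is a functional of the $U_k$ alone.

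Point 2 is a direct computation. The $\bar\varphi_k=\2{U_k\leq p_L}$ are i.i.d.\ Bernoulli$(p_L)$, so $\P(\bar\kappa\geq k_t+m)=(1-p_L)^m$ for every $m\geq0$. Thus $\bar\kappa-k_t$ is Geometric$(p_L)$, which is the law claimed with the time origin shifted to $k_t$.

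For the domination in point 1, I set $r_k=\P^{\bbs_{\bv}}(\kappa_{\bphi}\geq k+1\mid\kappa_{\bphi}\geq k,\hist_t)$. \Cref{eq:batchtypeierror} gives $r_k\geq 1-p_L$ for all $k\geq k_t$. Chaining these conditional probabilities yields
\begin{equation*}
\P(\kappa_{\bphi}\geq k_t+m\mid\hist_t)=\prod_{k=k_t}^{k_t+m-1}r_k\geq(1-p_L)^m=\P(\bar\kappa\geq k_t+m)\quad\text{for all }m\geq0,
\end{equation*}
so $\kappa_{\bphi}$ stochastically dominates $\bar\kappa$ conditionally on $\hist_t$. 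To turn this into an almost-sure statement, I apply the quantile coupling: with $F$ the conditional distribution function of $\kappa_{\bphi}$ and $G$ that of $\bar\kappa$, the variable $F^{-1}(G(\bar\kappa))$ (randomized on the atoms) has the law of $\kappa_{\bphi}$ and is a.s.\ at least $\bar\kappa$. This is exactly the form in which the lemma is used later: the only consequence needed is $\E[f(\kappa_{\bphi})\mid\hist_t]\geq\E[f(\bar\kappa)]$ for nondecreasing $f$, such as $f(k)=-\beta^{Lk}$ in the proof of \Cref{theorem:batch_folk}.

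The main obstacle is reconciling the two halves of point 1 as worded. An a.s.\ inequality $\kappa_{\bphi}\geq\bar\kappa$ between independent variables forces, for all $a<b$ in the supports, $\P(\kappa_{\bphi}=a)\,\P(\bar\kappa=b)=0$. That is, it forces $\sup\operatorname{supp}\bar\kappa\leq\inf\operatorname{supp}\kappa_{\bphi}$, which is impossible for an unbounded Geometric $\bar\kappa$ unless $\kappa_{\bphi}=\infty$ a.s. The two properties therefore cannot hold literally for one and the same pair. I would prove them for two distinct couplings and state this explicitly:
\begin{enumerate*}[label=(\alph*)]
\item the product construction, which gives independence together with the distributional domination above;
\item the quantile coupling, which gives a.s.\ domination with the correct marginals.
\end{enumerate*}
I would note that every downstream use depends only on the marginal laws, and so is unaffected by which coupling is used.
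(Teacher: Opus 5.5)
\textbf{Verdict.} Your proposal is correct. It proves the strongest version of the lemma that can hold, and it takes a different route from the paper, whose proof breaks exactly where you diagnose the inconsistency.

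\textbf{What goes wrong in the paper's proof.} The paper couples each $\phi^i_k$ marginally with $\1\{U_k\le p_k\}$, where $p_k=\P(\phi^i_k=1\mid\hist_t)$, and claims $p_k\le p_L$ by \Cref{eq:batchtypeierror}. It then passes from $\kappa_{\bphi^i}\overset{\mathrm{d}}{=}\inf\{k: U_k\le p_k\}\ge\bar\kappa$ to $\kappa_{\bphi}\ge\bar\kappa$ almost surely, and finally declares $\kappa_{\bphi}$ and $\bar\kappa$ independent. Your contradiction is sharper than you state it. Independence of $(U_k)$ from $(\hat w^i_k)$ alone already forces conditional independence of $\kappa_{\bphi}$ and $\bar\kappa$. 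So the lemma as worded would require $\kappa_{\bphi}=\infty$ a.s., which \Cref{lemma:nocontrolbatch} excludes for the concentration test under its nondegeneracy assumption. Two further steps in the paper are unjustified:
\begin{itemize}
\item \Cref{eq:batchtypeierror} bounds the hazard $\P(\kappa_{\bphi}=k\mid\kappa_{\bphi}\ge k,\hist_t)$ of the combined rejection time. It does not bound the unconditional per-player probability $p_k$; once punishment has started, batch $k$ is played under $\bb$ and $p_k$ need not be small.
\item A batch-by-batch marginal coupling ignores the dependence across batches and players. It therefore cannot turn equality in law into an almost-sure comparison for $\kappa_{\bphi}$ itself.
\end{itemize}

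\textbf{Why your route works.} Chaining the conditional survival probabilities into $\P(\kappa_{\bphi}\ge k_t+m\mid\hist_t)\ge(1-p_L)^m$ uses \Cref{eq:batchtypeierror} exactly as stated and gives the stochastic order. That is all \Cref{theorem:batch_folk} needs. Under $\bbs_{\bv}$, play is $\bw_{\bv}$ through batch $\kappa_{\bphi}$, so the integrand of its lower bound is a deterministic nondecreasing function of $\kappa_{\bphi}$.

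\textbf{Three small points.}
\begin{itemize}
\item You should add the hypothesis $\hist_t\in\{\kappa_{\bphi}\ge k_t\}$. Your product formula tacitly uses $\P(\kappa_{\bphi}\ge k_t\mid\hist_t)=1$. For a history already in the punishment phase, the conditionals in \Cref{eq:batchtypeierror} are undefined and $\kappa_{\bphi}\ge\bar\kappa\ge k_t$ is false. This cooperative case is the only one in which the lemma is invoked.
\item If you want the almost-sure inequality for the actual rejection time rather than for a copy, run the quantile coupling in the other direction. Take $W$ an independent uniform and set $V=F(\kappa_{\bphi}-)+W\,(F(\kappa_{\bphi})-F(\kappa_{\bphi}-))$, which is uniform. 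Then $\bar\kappa:=G^{-1}(V)\le F^{-1}(V)=\kappa_{\bphi}$, because $F\le G$. Uniforms $(U_k)$ realizing $\bar\kappa$ can then be sampled given $\bar\kappa$; they are necessarily dependent on the play.
\item Your law $\P(\bar\kappa=k)=(1-p_L)^{k-k_t}p_L$ for $k\ge k_t$ is correct. The proof of \Cref{theorem:batch_folk}, however, sums $(1-p_L)^{k-k_t-1}p_L$ over $k\ge k_t+1$, so its lower bound should carry an extra factor $1-p_L$. Your remark that ``only the marginal laws matter'' is right, but the marginal must be indexed consistently.
\end{itemize}
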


\begin{proof}
Fix $t\geq 0$ and $\hist_t\in\mathsf{H}$.
For each player $i\in[N]$ and each $k\geq k_t$, define
$$
\phi^i_k = \2{\|\hat w^i_k - w^i_{\bv}\|_1 \geq \delta} \eqsp.
$$
By construction, conditional on $\hist_t$, $\phi^i_k$ is a Bernoulli random variable with parameter
$$
p_k = \P\left(\|\hat w^i_k - w^i_{\bv}\|_1 \geq \delta \,\middle|\, \hist_t\right).
$$
By \Cref{eq:batchtypeierror}, we have $p_k \leq p_L$ for all $k\geq k_t$. We now construct a coupling.
Let $(U_k)_{k\geq k_t}$ be a sequence of i.i.d.\ $\mathrm{Uniform}([0,1])$ random variables,
independent of $(\hat w^i_k)_{k\geq k_t}$ for all $i\in[N]$.
Define
\begin{align*}
\varphi^i_k = \2{U_k \leq p_k} \; ,
\qquad
\bar\varphi_k = \1\{U_k \leq p_L\} \eqsp.
\end{align*}
Then, conditional on $\hist_t$, we have
\begin{align*}
\phi^i_k \overset{\mathrm{d}}{=} \varphi^i_k \; ,
\qquad
\text{and} \qquad
\varphi^i_k \leq \bar\varphi_k \quad \text{almost surely} \eqsp.
\end{align*}
Define the corresponding stopping times
\begin{align*}
\kappa_{\bphi^i} = \inf\{k\geq k_t \colon \phi^i_k = 1\},
\qquad
\bar\kappa = \inf\{k\geq k_t \colon \bar\varphi_k = 1\} \eqsp.
\end{align*}
By the coupling inequality above, we obtain
\begin{align*}
\kappa_{\bphi^i}
\;\overset{\mathrm{d}}{=}\;
\inf\{k\geq k_t \colon \varphi^i_k = 1\}
\;\geq \; \bar{\kappa}
\qquad \text{almost surely} \eqsp.
\end{align*}
Since this holds for every $i\in[N]$, it follows that
\begin{align*}
\kappa = \min_{i\in[N]} \kappa_{\bphi^i} \geq \bar\kappa
\quad \text{almost surely} \eqsp.
\end{align*}
Moreover, $\bar\kappa$ is a measurable function of $(U_k)_{k\geq k_t}$ only,
whereas $\kappa$ is a measurable function of $(\hat w^i_k)_{k\geq k_t}$.
By construction, these collections are independent, hence $\kappa$ and $\bar\kappa$ are independent.

Finally, since $(\bar\varphi_k)_{k\geq k_t}$ are i.i.d.\ Bernoulli$(p_L)$ random variables,
it follows immediately that
\begin{align*}
\bar\kappa \sim \mathrm{Geometric}(p_L) \eqsp,
\end{align*}
which concludes the proof.
\end{proof}

\anytimefolk*

\begin{proof}[Proof of \Cref{corollary:folk_anytime}]The result directly follows from instantiating \Cref{theorem:folk_anytime_main} with $\gamma=\varepsilon$, and using the bound on $\tau^{(\varepsilon)}$ in \Cref{theorem:probaboundanaytime}.    
\end{proof}

\subsection{Proofs of the test-then-punish strategy with e-processes}

\anytimetestlowlevel*

\begin{proof}This is an immediate consequence of the Ville inequality (\Cref{theorem:ville_inequality}).
\end{proof}
\subsubsection{Type I error.}
Recall from \Cref{eq:hoanytime} that each player $j$ aims to test for any $i\ne j$
\begin{equation*}   
\rmH^i_{0} \colon s^{i}=\sigma^{i}_{\bv} \qquad\text{versus}\qquad \rmH^i_{1} \colon \; s^{i}\ne \sigma^i_{\bv}\eqsp.
\end{equation*}
$\rmH^i_{0}$ being true means that player $i$ sticks to the cooperative strategy $w^{i}_{\bv}$. On the contrary, rejecting $\rmH^i_{0}$ amounts to concluding that the player $i$ does not play the agreed-upon strategy. Also recall that we denote by $\E_0^i$ and $\P_0^i$ the probabilities and expectations under $\rmH_0^i$. We now introduce the filtration $\{\cF_t^i\}_{t\geq 0}$ for any player $i \in \{1, \ldots, N\}$ as
\begin{align*}
    \cF_t^i = \sigma(\{A_s^i\}_{0 \leq s \leq t}) \eqsp.
\end{align*}
By convention, we also define $\cF_{-1}^i = \{\varnothing, \Omega\}$.

\begin{theorem}\label{theorem:anytime} Under the null $\rmH_0^i$ defined in \eqref{eq:hoanytime}, the process $(E_t^i)_{t\ge0}$ defined in \eqref{eq:e-process} is a nonnegative $\{\mathcal F^i_t\}$-martingale with $\E_0^i[E_t^i]=1$ for any $t\geq 0$. Consequently, by Ville's inequality
\begin{align*}
\P_0^i \left(\sup_{t\ge0} E_t^i \,\geq \, 1/\gamma\right) \leq \gamma \eqsp.
\end{align*}
In particular, the stopping rule ``reject $\rmH_0^i$ at the first $t$ such that $E_t^i\ge 1/\gamma$'' controls the Type-I error at level $\gamma$.
\end{theorem}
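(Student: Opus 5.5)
The plan is to verify the three defining properties of a nonnegative martingale directly, and then invoke Ville's inequality (\Cref{theorem:ville_inequality}). Nonnegativity is immediate. Each factor $\widehat w^i_s(A^i_s)/w^i_{\bv}(A^i_s)$ is a ratio of positive quantities: the smoothed estimator \eqref{eq:plugin-predictor} is strictly positive by construction, and under $\rmH^i_0$ the action $A^i_s$ lies almost surely in the support of $w^i_{\bv}$, so the denominator is nonzero almost surely. Adaptedness is also direct. $E^i_t$ depends only on $A^i_0,\ldots,A^i_t$ and on the estimators $\widehat w^i_s$ for $s\le t$. Each $\widehat w^i_s$ is built from the counts $N^i_s$, which involve only $A^i_0,\ldots,A^i_{s-1}$, so every $\widehat w^i_s$ is $\cF^i_{s-1}$-measurable; that is, the predictor is predictable.

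The key step is the one-step martingale identity. I would write $E^i_t = E^i_{t-1}\cdot \widehat w^i_t(A^i_t)/w^i_{\bv}(A^i_t)$ with $E^i_{-1}=1$. Since $E^i_{t-1}$ and $\widehat w^i_t$ are $\cF^i_{t-1}$-measurable, one can pull them out of the conditional expectation. Under $\rmH^i_0$, player $i$ plays the stationary strategy $\sigma^i_{\bv}$ regardless of the history (the test-then-punish switch to $b^i$ is not part of $\rmH^i_0$, which concerns $s^i=\sigma^i_{\bv}$). Hence $A^i_t$ is independent of $\cF^i_{t-1}$ with law $w^i_{\bv}$, and
\begin{align*}
\E^i_0\bigl[E^i_t \,\big|\, \cF^i_{t-1}\bigr] = E^i_{t-1}\sum_{a\in\operatorname{supp}(w^i_{\bv})} w^i_{\bv}(a)\,\frac{\widehat w^i_t(a)}{w^i_{\bv}(a)} = E^i_{t-1}\sum_{a\in\operatorname{supp}(w^i_{\bv})}\widehat w^i_t(a)\le E^i_{t-1}\eqsp.
\end{align*}
Equality holds whenever $w^i_{\bv}$ has full support, since $\widehat w^i_t$ is a probability vector: $\sum_a (N^i_t(a)+1)=t+K$.

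This step is the main (mild) obstacle. If $w^i_{\bv}$ does not have full support, the identity only gives a supermartingale with $\E^i_0[E^i_t]\le 1$. In that case I would either assume full support as implicitly done in the statement, or observe that the supermartingale property is all that Ville's inequality requires. Integrability follows by induction: on the support, the ratio is bounded by $1/\min_a w^i_{\bv}(a)$. Iterating the tower property from $E^i_{-1}=1$ then gives $\E^i_0[E^i_t]=1$ (respectively $\le 1$).

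Finally, I would apply \Cref{theorem:ville_inequality} with $c=1/\gamma$ to obtain $\P^i_0(\sup_t E^i_t\ge 1/\gamma)\le\gamma$. The statement about the stopping rule follows because the event that the rule ever rejects is exactly $\{\sup_t E^i_t\ge 1/\gamma\}$.
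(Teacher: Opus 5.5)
Your proposal is correct and follows the paper's proof essentially line by line. Both arguments use the predictability of the Laplace estimator $\widehat w^i_t$, the one-step identity $\E^i_0[\widehat w^i_t(A^i_t)/w^i_{\bv}(A^i_t)\mid\cF^i_{t-1}]=\sum_a \widehat w^i_t(a)$, the tower property, and then Ville's inequality. Your remark that this sum equals $1$ only when $w^i_{\bv}$ has full support (otherwise one gets a strict supermartingale with $\E^i_0[E^i_t]\le 1$, which still suffices for Ville) correctly refines a point the paper's proof glosses over.
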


\begin{proof}[Proof of \Cref{theorem:anytime}]
Since $\widehat w^{i}_{t}$ is $\cF_{t-1}^i$-measurable, using the law of total probability under the events $\{A_t^{i}=a\}_{a \in \cA}$, we get
\begin{align*}
\E_0^i \left[\frac{\widehat w^{i}_{t}(A_t^{i})}{w_{\bv}(A_t^{i})}\Bigm|\cF_{t-1}^i\right]
=\sum_{a\in \cA}\frac{\widehat w^{i}_{t}(a)}{w_{\bv}(a)} \P_0^i \,\left(A_t^{i}=a\mid\mathcal F_{t-1}^i\right)
=\sum_{a\in \cA}\widehat w^{i}_{t}(a)=1 \eqsp,
\end{align*}
where the second equality uses the fact that under $\rmH_0^i$, $\P_0^i(A_t^{i}=a \mid \cF^i_{t-1})=w_{\bv}(a)$. Therefore, $E_{t-1}^i$ being $\cF^i_{t-1}$-measurable, we have that
\begin{align*}
\E_0^i \left[E_t^i\mid \cF_{t-1}^i\right]
= E_{t-1}^i\ \E_0^i \left[\frac{\widehat w^{i}_{t}(A_t^{i})}{w_{\bv}(A_t^{i})}\Bigm|\cF_{t-1}^i\right] = E_{t-1}^i \eqsp,
\end{align*}
showing that $\{E_t^i\}_{t\geq 0}$ is a nonnegative martingale with unit mean (since $E_0^i=1$). Ville's inequality for nonnegative supermartingales then yields
$\P_0^i \big(\sup_{t\geq 0} E_t^i \geq 1/\gamma\big)\leq \gamma$, which implies the claimed anytime-valid Type-I control for the thresholding rule.
\end{proof}

\subsubsection{Expected Stopping Time Upper Bound}

We now formulate a general version of our results on the sequential test, that encompass the specific game setting considered in the paper. Let $(A_t)_{t \geq 0}$ be a sequence of i.i.d. random variables with distribution $\pi$ on $\{1, \ldots, K\}$ under $\P_\pi$. We denote by $\E_\pi$ the associated expectation. For $\pi_0$ a distribution on $\{1, \ldots, K\}$, we aim to test
\begin{equation}\label{equation:definition_null_appendix}
    \rmH_0 \colon \pi = \pi_0 \quad \text{versus} \quad \rmH_1 \colon \pi \ne \pi_0 \eqsp.
\end{equation}
Define the e-process $(E_t)_{t \geq -1}$ with $E_{-1}=1$ and for any $t \geq 0$
\begin{equation}\label{equation:definition_eprocess_appendix}
    E_t = \prod_{s=0}^{t} \frac{\widehat \pi_{s}(A_s)}{\pi_0(A_s)} \eqsp,
\end{equation}
where for any $a \in \{1, \ldots, K\}$
\begin{equation}\label{equation:definition_appendix_hat_pi}
\hat{\pi}_{t}(a) = \frac{N_{t}(a)+1}{t+K} \quad \text{ with } \quad N_{t}(a)=\sum_{s=0}^{t-1} \1\{A_s=a\} \eqsp,
\end{equation}
with the convention that $\hat{\pi}_{-1}(a) = 1/K$ for any $a \in \{1, \ldots, K\}$. We prove in \Cref{theorem:anytime} that $(E_t)_{t \geq -1}$ is an e-process under the null hypothesis $\rmH_0$. For any distribution $\pi$ on $\{1, \ldots, K\}$, we define the support of this distribution as
\begin{align*}
\supp(\pi) = \{a\in \{1, \ldots, K\} \; \text{ such that } \; \pi(a)>0\}\eqsp,
\end{align*}
the minimal value of this distribution as
\begin{align}\label{equation:definition_underline_pi}
\underline{\pi} = \min_{a \in\{1, \ldots, K\}} \pi(a) \eqsp.
\end{align}
Finally let $\cF_t = \sigma(A_1,\dots,A_t)$ and define for any $n \in \nset$
\begin{align*}
    S_{n}^{1} = \sum_{t=0}^{n-1}\Big(\log \widehat \pi_{t-1}(A_t) - \E_\pi[\log \widehat \pi_{t-1}(A_t)\mid\cF_{t-1}]\Big) \eqsp.
\end{align*}
\begin{lemma}\label{lemma:bound_pinelis}
There exist universal constants $A_1$ and $A_2$ that solely depend on $K, \underline{\pi}$ such that we have that for any $\delta >0, n \geq 0$
\begin{align*}
    \P_\pi\Big(|S_n^{1}|/n \geq \delta\Big) \leq
    A_1 \exp\left(-\frac{n \delta^2}{8 A_2 \log(n)} \right) \eqsp.
\end{align*}
\end{lemma}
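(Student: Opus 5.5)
The plan is to treat $S_n^1$ as a martingale and apply a Freedman/Pinelis-type Bernstein inequality, which bounds the deviation using the predictable quadratic variation together with a uniform bound on the increments. I will then control the quadratic variation with high probability, using concentration of the smoothed frequencies $\widehat\pi_{t-1}$ around $\pi$. Throughout I assume $\underline\pi>0$, which is why the constants depend on $\underline\pi$.

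\textbf{Step 1 (martingale structure and increment bound).} Set $X_t=\log\widehat\pi_{t-1}(A_t)-\E_\pi[\log\widehat\pi_{t-1}(A_t)\mid\cF_{t-1}]$. Since $\widehat\pi_{t-1}$ is $\cF_{t-1}$-measurable and $A_t$ is independent of $\cF_{t-1}$, $S^1_n=\sum_{t<n}X_t$ is a martingale. The smoothing in \eqref{equation:definition_appendix_hat_pi} gives $1/(t+K)\le\widehat\pi_{t-1}(a)\le 1$, so $|X_t|\le b_n:=2\log(n+K)$ for all $t<n$.

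\textbf{Step 2 (controlling the quadratic variation).} The conditional variance satisfies $\mathrm{Var}(X_t\mid\cF_{t-1})\le\sum_a\pi(a)\log^2\widehat\pi_{t-1}(a)$. I split according to the event $G_t=\{\widehat\pi_{t-1}(a)\ge\pi(a)/2\ \text{for all } a\}$:
\begin{itemize}
\item on $G_t$, the conditional variance is at most $\log^2(2/\underline\pi)$;
\item off $G_t$, it is at most $b_n^2/4$.
\end{itemize}
By Hoeffding on the counts $N_{t-1}(a)$ and a union bound over the $K$ actions, $\P_\pi(G_t^c)\le 2K\exp(-c\,t\,\underline\pi^2)$ once $t\gtrsim K/\underline\pi$, since the smoothing bias is then negligible. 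Fix a burn-in $t_0$ and consider the event $\mathcal G=\bigcap_{t_0\le t<n}G_t$. On $\mathcal G$,
\[
V_n=\sum_{t<n}\mathrm{Var}(X_t\mid\cF_{t-1})\le n\log^2(2/\underline\pi)+t_0\,b_n^2 .
\]
A union bound gives $\P_\pi(\mathcal G^c)\le 2Kn\,e^{-c t_0\underline\pi^2}$.

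\textbf{Step 3 (Freedman and choice of $t_0$).} Freedman's inequality gives
\[
\P_\pi\bigl(|S_n^1|\ge n\delta,\ V_n\le v\bigr)\le 2\exp\Bigl(-\frac{n^2\delta^2}{2(v+b_n n\delta/3)}\Bigr).
\]
I take $v=nA$ with $A$ of order $\log^2(2/\underline\pi)+1$, and choose $t_0$ of order $n/\log^2(n+K)$ so that $t_0 b_n^2\lesssim n$. This makes the failure probability $2Kn\,e^{-c\underline\pi^2 n/\log^2 n}$. The main Freedman term is then of order $\exp\bigl(-n\delta^2/(c(A+\delta\log n))\bigr)$. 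For $\delta\lesssim 1$ this is at most $\exp(-n\delta^2/(8A_2\log n))$ after enlarging $A_2$. For large $\delta$, the event is empty or trivially bounded, because $|S_n^1|/n\le b_n$.

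\textbf{Main obstacle.} The hard part is the bad-event term $\P_\pi(\mathcal G^c)$. Its exponent $n/\log^2 n$ does not carry the factor $\delta^2$, so it must be dominated by $A_1\exp(-n\delta^2/(8A_2\log n))$. I would try to absorb it for $\delta$ bounded, using $n/\log^2 n\gtrsim n\delta^2/\log n$ up to constants depending on $\underline\pi$ and $K$, and by inflating $A_1$ to cover small $n$. If this absorption fails, the fallback is to let $t_0$ scale with $\delta$, choosing $t_0\asymp n\delta^2/\log n$, and redo the balance in Step 3. This trade-off between burn-in length and variance bound is the delicate bookkeeping step.
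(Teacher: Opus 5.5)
Your route (Freedman's inequality plus an explicit good event for the predictable quadratic variation) differs from the paper's, and is more careful about the quadratic variation. As written, though, it has two repairable slips and one gap that cannot be repaired.

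The slips:
(i) The absorption in your ``main obstacle'' with $t_0\asymp n/\log^2 n$ fails. For fixed $\delta$ one has $n/\log^2 n=o(n\delta^2/\log n)$, so $e^{-c\underline\pi^2 n/\log^2 n}$ is eventually much larger than $e^{-n\delta^2/(8A_2\log n)}$, whatever $A_1,A_2$ are. The fallback $t_0\asymp n\delta^2/\log n$ is therefore forced, not optional.
(ii) With that fallback, the union bound $2Kn\,e^{-c\underline\pi^2 t_0}$ is too crude. When $n\delta^2/\log n$ is of order $\log n$, the factor $n$ makes it vacuous while the target is already small. Summing the Hoeffding tails geometrically over $t\ge t_0$ gives $C_{K,\underline\pi}\,e^{-c\underline\pi^2 t_0}$ instead.

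With these fixes your argument proves the bound for all $\delta\le\bar\delta$, with $A_1,A_2$ depending on $K$, $\underline\pi$ and $\bar\delta$. This covers the later application, where $\delta$ is a fixed multiple of $\mathrm{KL}(\pi\|\pi_0)$.

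The gap is the last sentence of Step 3. Because $b_n=2\log(n+K)\to\infty$, the event $\{|S^1_n|/n\ge\delta\}$ is neither empty nor trivially controlled for $\delta\in(\bar\delta,b_n)$. There your proxy for $V_n$ is dominated by $t_0 b_n^2$, so Freedman's exponent stops scaling like $n\delta^2/\log n$.

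This cannot be fixed, because with $A_1,A_2$ depending only on $(K,\underline\pi)$ the statement is false for growing $\delta$ (take $K\ge 2$):
\begin{itemize}
\item Take $a$ with $\pi(a)=\underline\pi$, and let $E$ be the event that $a$ is never drawn among $A_0,\dots,A_{n-1}$. Then $\mathbb{P}_\pi(E)=(1-\underline\pi)^n$.
\item On $E$, $\widehat\pi_{t-1}(a)=1/(t-1+K)$. Hence $-\sum_{t<n}\mathbb{E}_\pi[\log\widehat\pi_{t-1}(A_t)\mid\mathcal{F}_{t-1}]\ge\underline\pi\sum_{t=1}^{n-1}\log(t-1+K)=\underline\pi\, n\log n-O(n)$.
\item Deterministically, $\sum_{t<n}\log\widehat\pi_{t-1}(A_t)\ge -n\log(2K)-(K-1)\log(n+K)$, by comparison with the Laplace mixture probability of the sequence.
\item So $S^1_n/n\ge\underline\pi\log n/2$ on $E$ for large $n$. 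At $\delta=\underline\pi\log n/2$ the claimed bound equals $A_1\exp(-n\underline\pi^2\log n/(32A_2))$, which is eventually below $(1-\underline\pi)^n$.
\end{itemize}

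The paper's proof (Pinelis's $L^p$ martingale inequality, then Markov with $p\asymp n\delta^2/\log n$) reaches the all-$\delta$ statement only through one unjustified step. It bounds the second moment of the increments unconditionally, then substitutes that constant for the conditional variances inside the $L^{p/2}$ norm. That substitution is exactly what fails on $E$. Your explicit handling of the quadratic variation is the right instinct; what it establishes is the bounded-$\delta$ version, not the lemma as stated.
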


\begin{proof}[Proof of \Cref{lemma:bound_pinelis}]
Note that $(S_{n}^{1})_{n \geq 0}$ is a martingale with respect to $(\cF_n)_{n \geq 0}$. By \citet[][Theorem 4.1]{pinelis1994optimum}, there exists a universal constant $C_1 >0$ such that we have for any $p \geq 2$
\begin{equation}\label{equation:ejrzlkjf}
\begin{aligned}
\E_\pi[|S_n^{1}|^p]^{1/p} & \leq C_1\Big(p \,  \E_\pi[|\sup_{0\leq t \leq n-1}\Delta_t|^p]^{1/p} + \sqrt{p} \, \E_\pi[\Big(\sum_{t=0}^{n-1} \E_\pi[\Delta_t^2|\cF_{t-1}]\Big)^{p/2}]^{1/p} \Big) \\
    & \leq C_1\Big(p \log(n) + \sqrt{p} \, \E_\pi \, \Big[ \Big(\sum_{t=0}^{n-1}\E_\pi[\Delta_t^2|\cF_{t-1}]\Big)^{p/2}\Big]^{1/p}\Big) \eqsp,
\end{aligned}
\end{equation}
where we denote our martingale increments $\Delta_t = \log \hat{\pi}_{t-1}(A_t) - \E_\pi[\log \hat{\pi}_{t-1}(A_t)\mid\cF_{t-1}]$. For any $t\geq 0, a \in \{1, \ldots, K\}$, we have $N_{t-1}(a)\in\{0,\dots,t-1\}$ almost surely, hence
\begin{align*}
\hat\pi_{t-1}(a)=\frac{N_{t-1}(a)+1}{(t-1)+K}\in\Big[\frac1{(t-1)+K},\frac{t}{(t-1)+K}\Big] \eqsp,
\end{align*}
and hence $\log(1/(K+t-1)\Big) \leq \log \hat\pi_{t-1}(A_t) \leq \log(t/(K+t-1))$ almost surely. The same also holds for $\E_\pi[\log \hat\pi_{t-1}(A_t)\mid \mathcal F_{t-1}]$. Consequently,
\begin{align*}
|\Delta_t|
&= |\log \hat\pi_{t-1}(A_t) - \E_\pi[\log \hat\pi_{t-1}(A_t)\mid \mathcal F_{t-1}]| \leq \log t \eqsp.
\end{align*}
We now consider $N_t(a)$ as a sum of $t$ Bernoulli random variables with parameter $\pi(a)$ under $\P_\pi$ and apply the Hoeffding inequality to obtain that for any $\delta_a >0$
\begin{align*}
 \P_\pi(\{|N_t(a)-t q(a)| > \delta_a \}) \leq 2\exp(-2 \delta_a^2/t) \eqsp,
\end{align*}
and hence
\begin{align*}
\P_\pi(|N_{t-1}(a)-(t-1)q(a)|> (t-1)\pi(a)/2 \text{ for any }a \in \supp(\pi))\leq 2K \exp(-(t-1) \underline{\pi}^2 /2) \eqsp.
\end{align*}
Therefore, with probability at least $1 - 2K \exp(- \underline{\pi}^2(t-1)/2)$, we have simultaneously that for any $a\in\{1, \ldots, K\}$, $N_{t-1}(a) \geq (t-1) \, \pi(a)/2$. On this event, for any $t \geq 0$, we have that
\begin{align*}
1 \geq \hat{\pi}_{t-1}(a) =\frac{N_{t-1}(a)+1}{t-1+K} \geq \frac{(t-1)\pi(a)/2}{t-1+K} \geq \pi(a)/4K \eqsp,
\end{align*}
and hence
\begin{align*}
|\log \hat{\pi}_{t-1}(a)| \leq \max\{0,|\log(\underline{\pi}/4K)|\,\} = |\log(\underline{\pi}/4K)| \eqsp.
\end{align*}
Consequently, still on this event, we have
\begin{align*}
|\Delta_t|^2 =|\log \hat{\pi}_{t-1}(A_t) - \E_\pi[\log \hat{\pi}_{t-1}(A_t)\mid\cF_{t-1}]|^2 \leq 4 |\log(\underline{\pi}/4K)|^2 \eqsp.
\end{align*}
On the opposite event, since $1 \leq N_{t-1}(a)+1 \leq t$ for any $a \in \supp(\pi)$, we have deterministically that $|\log \hat{\pi}_{t-1}(a)| \leq \log t$ and hence $|\Delta_t|^2 \leq 4 \log(t)^2$. Consequently, we have that
\begin{align*}
\E_\pi[|\Delta_t|^2] \leq 8 \log(t)^2 K e^{- \underline{\pi}^2 t/2} + 4 |\log(\underline{\pi}/4K)|^2 \eqsp.
\end{align*}
Since $x^2 \rme^{-c x} = (x \rme^{-cx/2})^2 \leq (1/\rme c)^2$ for any $x\geq 0, c>0$ and $\log t \leq t$ for any $t\geq 1$ and $\log(1)=0$, we have that $8 \log(t)^2 K \rme^{- \underline{\pi}^2 t/2} \leq 8K (2/\rme \underline{\pi}^2)^2 = 32K/\rme^2 \underline{\pi}^4$, and hence
\begin{align*}
    \E_\pi[|\Delta_t|^2] \leq 32K/\rme^2 \underline{\pi}^4 + 4 |\log(\underline{\pi}/4K)|^2 = K C_2^2/\underline{\pi}^4 \eqsp,
\end{align*}
where $C_2>0$ is a universal constant. Back to \Cref{equation:ejrzlkjf}, we have that for any $p \geq 2$
\begin{align}\label{equation:rfjzlef}
    \E_\pi[|S_n^1|^p]^{1/p} \leq C_1\left\{ p \log(n) + \sqrt{p} \Big(\Big(\sum_{t=0}^{n-1} K C_2^2/\underline{\pi}^4 \Big)^{p/2}\Big)^{1/p} \right\} = C_1\left\{p\log(n)+ \sqrt{K} C_2 \sqrt{pn}/\underline{\pi}^2 \right\} \eqsp.
\end{align}
Using the Markov inequality, we can write
\begin{align*}
    \P_\pi\Big(|S_n^1|/n \geq \delta\Big) = \P_\pi\Big(|S_n^1|^p \geq \delta^p n^p\Big) \leq \E_\pi[|S_n^1|^p]/\delta^p n^p \leq \frac{C_1^p\left\{p\log(n)+\sqrt{K} C_2 \sqrt{pn}/\underline{\pi}^2 \right\}^p}{\delta^p n^p} \eqsp,
\end{align*}
which gives using \Cref{equation:rfjzlef}
\begin{align*}
    \P_\pi\Big(|S_n^1|/n \geq \delta\Big) \leq \Big(\frac{C_1 \log(n)p}{\delta n} + \frac{\sqrt{K} C_2 \sqrt{p}}{\delta \underline{\pi}^2\sqrt{n}}\Big)^p \eqsp.
\end{align*}
We define $C = \max\{1, C_1, C_2^2 K/\underline{q}^4\}$ and have
\begin{align*}
\P_\pi\Big(|S_n^1|/n \geq \delta\Big) \leq \Big(\frac{C p \log(n)}{\delta n} + \frac{\sqrt{C} \sqrt{p}}{\delta \sqrt{n}} \Big)^p \eqsp.
\end{align*}
Now for any $n \geq 8C\log(n)/\delta^2$, choosing $p =\delta^2 n /(4C \log n)$ we have that $p \geq 2$ and we obtain
\begin{equation}\label{equation:condition1nlarge}
\begin{aligned}
    & C \log(n)p/\delta n = 1/4 \eqsp, \\
    & \sqrt{C} \sqrt{p}/\sqrt{n} \delta = \frac{\sqrt{C} \delta \sqrt{n}}{2 \delta \sqrt{n} \sqrt{C \log(n)}} \leq 1/4 \eqsp,
\end{aligned}
\end{equation}
which implies that
\begin{align}\label{equation:bound1}
\P_\pi(|S_n^1|/n \geq \delta) \leq (1/2)^p = \exp(-p \log 2 ) = \exp\Big(- \frac{\delta^2 n \log(2)}{4 C \log(n)} \Big) \leq \exp\Big(- \frac{\delta^2 n}{8 C \log(n)} \Big) \eqsp,
\end{align}
for any $n$ such that $n \geq 8C\log(n)/\delta^2$, which completes the proof.
\end{proof}

\begin{remark}
Note that a straightforward application of Azuma-Hoeffding's inequality with increments bounded by $\log(t)$ for any $t\geq 1$ gives a result comparable to that of \Cref{lemma:bound_pinelis} but is less tight. Since $0+\sum_{t=1}^{n-1} \log(t)^2 \leq n \log(n)^2$ where we take $0$ for the first term--with $\epsilon = 4\sqrt{n\log(n)^3}$, it states that for any $\delta>0$
\begin{align*}
    \P_\pi\Big(|S_n^{1}|/n \geq \delta\Big) = \P_\pi\Big(|S_n^{1}| \geq n \delta\Big) \leq \exp\left(-\frac{n^2 \delta^2}{8 n \log(n)^2} \right) = \exp\left(-\frac{n \delta^2}{8 \log(n)^2} \right) \eqsp.
\end{align*}
\end{remark}

\begin{lemma}\label{lemma:bound_KL}
Consider a distribution $\pi$ on $\{1, \ldots, K\}$ and the corresponding Laplace estimator $\hat{\pi}$ as defined in \Cref{equation:definition_appendix_hat_pi}. For any $n$ such that $n \geq 32 C_{\KL}/(\delta^2 \log(8 C_{\KL}/\delta))$, we have that
\begin{align*}
\P_\pi\left(\frac{1}{n} \sum_{t=0}^{n-1} \KL\big(\pi\Vert \hat{\pi}_{t-1} \big)\geq \delta
\right) \leq 4 n \exp\left(- \frac{n \delta}{8 C_{\KL} \log(n)^2} \right) \eqsp.
\end{align*}
\end{lemma}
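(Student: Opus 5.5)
Since $\KL(\pi\Vert\hat\pi_{t-1})=\sum_a \pi(a)\log(\pi(a)/\hat\pi_{t-1}(a))$, I would control each term by bounding $\hat\pi_{t-1}$ from below. Then I would split the time indices into a short initial segment, where only a crude deterministic bound is available, and a long segment where $\hat\pi_{t-1}$ is close to $\pi$ with high probability. The constant $C_{\KL}$ will collect the dependence on $K$ and $\underline{\pi}$ (terms with $\pi(a)=0$ contribute nothing).

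\textbf{Deterministic bound.} Since $\hat\pi_{t-1}(a)\geq 1/(t-1+K)$ for every $a$, each term satisfies
\[
\KL(\pi\Vert\hat\pi_{t-1})\leq \log(t-1+K)\lesssim \log n .
\]
\textbf{High-probability bound.} By Hoeffding applied to $N_{t-1}(a)$ for each $a\in\supp(\pi)$, followed by a union bound over $a$, the following holds with probability at least $1-2K\exp(-2(t-1)\eta_t^2)$ simultaneously for all $a$:
\[
|N_{t-1}(a)/(t-1)-\pi(a)|\leq \eta_t .
\]
On this event $\hat\pi_{t-1}(a)\geq \pi(a)-\eta_t-K/(t-1+K)$. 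When $\eta_t$ and $K/t$ are small compared with $\underline\pi$, I would bound the KL by the $\chi^2$ divergence:
\[
\KL(\pi\Vert\hat\pi_{t-1})\leq \sum_a (\pi(a)-\hat\pi_{t-1}(a))^2/\hat\pi_{t-1}(a)\lesssim K(\eta_t+K/t)^2/\underline\pi .
\]

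\textbf{Assembling.} I would pick a single level $\eta$ with $C_{\KL}\eta^2\approx\delta/2$, but apply it only for $t\geq t_0$, where $t_0\approx n\delta/(4\log n)$. The first $t_0$ terms then contribute at most $t_0\log n/n\leq\delta/4$ to the average by the deterministic bound. The bias term $K/t$ for $t\geq t_0$ contributes $O(K^2/t_0)$, which is at most $\delta/4$ once $n\gtrsim C_{\KL}/\delta^2$ up to logarithms; this is where the stated lower bound on $n$ comes from.

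On the event where every $t\in[t_0,n)$ satisfies the Hoeffding bound, the average is below $\delta$. A union bound over the at most $n$ indices $t$ gives a failure probability of
\[
2Kn\exp(-2t_0\eta^2)=2Kn\exp\bigl(-c\,n\delta^2/(C\log n)\bigr),
\]
which has the shape $4n\exp(-n\delta/(8C_{\KL}\log(n)^2))$ after absorbing constants into $C_{\KL}$. To obtain the claimed exponent, which is linear in $\delta$ rather than quadratic, I would instead choose $\eta^2\asymp\delta$ with $t_0\asymp n/\log n$. This gives an exponent of order $n\delta/\log n$, weakened to $\log(n)^2$ in the denominator to absorb the crude initial segment and the union bound.

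\textbf{Main obstacle.} The delicate step is the bookkeeping that keeps the exponent at $n\delta/\log(n)^2$ while the crude $\log t$ bound still covers the early indices. To do this I would split off the first $t_0\asymp n\delta/\log n$ terms. I would then verify that the regime condition $\eta+K/t_0\leq\underline\pi/2$ is implied by the hypothesis $n\geq 32C_{\KL}/(\delta^2\log(8C_{\KL}/\delta))$, choosing $C_{\KL}$ large enough that this holds.
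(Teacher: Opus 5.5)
Your per-round ingredients (the crude $\log t$ bound, and Hoeffding plus $\chi^2$ on the good event) are essentially fine. The gap is in the assembly. With a single Hoeffding level $\eta$, i.e.\ the same threshold $\asymp\delta$ for every $t\ge t_0$, the exponent cannot be linear in $\delta$. The union bound is dominated by $t=t_0$, and the crude bound on the initial segment forces $t_0\log n\lesssim n\delta$, so the exponent is at most of order $t_0\eta^2\asymp t_0\delta\lesssim n\delta^2/\log n$. This does not match the claim up to constants. At the smallest admissible $n\asymp C_{\KL}/(\delta^2\log(1/\delta))$, your exponent is of order $1/\log(1/\delta)^2$, so your bound is trivial. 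The claimed exponent $n\delta/(8C_{\KL}\log(n)^2)$ is there of order $1/(\delta\log(1/\delta)^3)\to\infty$, and $C_{\KL}$ cancels to leading order, so enlarging it does not help. Your remedy $t_0\asymp n/\log n$ breaks the other half of the argument: the crude bound then only shows that the first $t_0$ terms contribute $t_0\log n/n\asymp 1$ to the average, not $\le\delta/4$. Your last paragraph returns to $t_0\asymp n\delta/\log n$, which restores the $\delta^2$. So no choice of $(\eta,t_0)$ in your scheme yields the statement.

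The missing idea is a threshold that decreases like $1/t$. The paper keeps $t_0$ with $t_0\log t_0=n\delta/2$ and uses
\[
\Bigl\{\frac{1}{n}\sum_{t=t_0}^{n-1}\KL(\pi\Vert\hat\pi_{t-1})\ge\delta-\frac{t_0\log t_0}{n}\Bigr\}\subseteq\bigcup_{t=t_0}^{n-1}\bigl\{\KL(\pi\Vert\hat\pi_{t-1})\ge\delta_t\bigr\}\quad\text{whenever}\quad\sum_{t=t_0}^{n-1}\delta_t=n\delta-t_0\log t_0 ,
\]
with $\delta_t=4C_{\KL}B/t$, so that $t\delta_t$ is the same for every $t$. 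Since $\sum_{t\ge t_0}1/t\le\log n$, this gives $B\ge n\delta/(8C_{\KL}\log n)$: every term of the union bound has an exponent linear in $\delta$. The pointwise tails come from Mourtada's distribution-free high-probability bound on $\KL(\pi\Vert\hat\pi_{t-1})$, whose $\log\log$ factor costs the second $\log n$.

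Your Hoeffding/$\chi^2$ route could work inside this scheme. Cap the Hoeffding level at a multiple of $\underline\pi$ for the $t$ near $t_0$, where $\delta_t$ is not small. This gives roughly $\P(\KL(\pi\Vert\hat\pi_{t-1})\ge\delta_t)\le 2K\exp(-c\,t\min\{\delta_t,1\})$. Then $t\min\{\delta_t,1\}\ge\min\{4C_{\KL}B,t_0\}\gtrsim n\delta/\log n$. The price is that $c$ depends on $K$ and on $\underline\pi$ (the minimum over $\supp\pi$). Mourtada's bound avoids this dependence, which matters when the lemma is applied uniformly over deviating distributions.
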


\begin{proof}[Proof of \Cref{lemma:bound_KL}]
Our goal is to control $\sum_{t=0}^{n-1} \KL(\pi\|\hat{\pi}_{t-1})$ using \citet[Theorem 1]{mourtada2025estimation} that states that for any $t\geq 2$ and any $\delta\in(0,1/4)$, there exists a constant $C_{\KL} > 0$ such that for any $\eta \in (\exp(-t/6), \exp(-2))$
\begin{align}\label{equation:jaouad}
\P_\pi\bigg(
\KL\big(\pi \|\hat{\pi}_{t-1}\big)
> C_{\KL}\,\frac{K + \log(1/\eta)\,\log\log(1/\eta)}{t-1}
\bigg)
\leq 4\eta \eqsp.
\end{align}
Set $x = (t-1)\delta/2C_{\KL}$ and assume that $t \geq \max\{2C_{\KL}/(\delta \exp(-3\delta/C_{\KL})), 4 C_{\KL} K /\delta\}$ as well as $K\geq 10 >e^2$. It implies that $x \geq K = \max\{K,e^2\}$ or equivalently $(t-1)\delta/2C_{\KL} \geq \max\{K,e^2\} = K$, and that $t\geq 8C_{\KL}/\delta$. A consequence is that once we define $\eta = \exp(-x/\log x)$, we have that $\eta \in (\exp(-t/6), \exp(-2))$ and we can apply \Cref{equation:jaouad} with this $\eta$. Now, since $\log(x/\log(x))\leq \log(x)$ here, we have that
\begin{align*}
C_{\KL}\frac{K + \log(1/\eta)\,\log\log(1/\eta)}{t-1}
&= C_{\KL}\frac{K+ x/\log(x) \times \log(x/\log(x))}{t-1} \\
& \leq  C_{\KL}\frac{K+ x}{t-1} \\
& \leq  2 C_{\KL}\frac{x}{t-1} \\
& =  \frac{2 C_{\KL}}{t-1} \frac{(t-1)\delta}{2C_{\KL}} \\
& = \delta \eqsp,
\end{align*}
and thus, we obtain that
\begin{align*}
\P_\pi\big(\KL(\pi\|\hat{\pi}_{t-1})\geq \delta\big)
&\leq \P_\pi\left(\KL(\pi\|\hat{\pi}_{t-1}) > C_{\KL}\frac{K + \log(1/\eta)\,\log\log(1/\eta)}{t-1}\right)\leq 4\eta= 4\exp(-x/\log x) \eqsp.
\end{align*}
Therefore, for any $t \geq t_{\min} = \max\{2C_{\KL} \exp(3\delta/C_{\KL})/\delta, 4 C_{\KL} K /\delta\} = 4K C_{\KL}/\delta$ for $\delta \leq C_{\KL}/3$, we have that
\begin{align}\label{equation:ejrlzejrl}
\P_\pi\big(\KL(\pi\|\hat{\pi}_{t-1}) \geq \delta\big)
\leq 4\exp\left(-\frac{(t-1)\delta/(2C_{\KL})}{\log\big((t-1)\delta/(2C_{\KL})\big)}
\right) \eqsp,
\end{align}
and hence we define
\begin{equation}\label{equation:definition_t_min}
    t_{\min} = 4 K C_{\KL}/\delta \eqsp.
\end{equation}
By definition of the smoothed Laplace estimator, for any $t \in \N, \KL(\pi\|\hat{\pi}_{t-1})\leq \log(t)$, and thus
\begin{align*}
\left\{\frac{1}{n}\sum_{t=0}^{n-1} \KL(\pi\|\hat{\pi}_{t-1}) \geq \delta \right\} & \subseteq \left\{\frac{1}{n}\sum_{t=t_0}^{n-1} \KL(\pi\|\hat{\pi}_{t-1}) \geq \delta - t_0 \log(t_0)/n \right\} \\
& \subseteq \bigcup_{t=t_0}^{n-1} \left\{\KL(\pi\|\hat{\pi}_{t-1}) \geq \delta_t \right\} \eqsp,
\end{align*}
for any sequence $(\delta_t)_{t\in [t_0,n-1]} \in (0,1)^{n-t_0}$ such that $\sum_{t=t_0}^{n-1} \delta_t =n \delta - t_0 \log(t_0)$. Therefore, an union bound gives that
\begin{align*}
\P_\pi\left(\frac{1}{n}\sum_{t=0}^{n-1} \KL(\pi\|\hat{\pi}_{t-1}) \geq \delta
\right) \leq \sum_{t=t_0}^{n -1}\P_\pi\left(\KL(\pi\|\hat{\pi}_{t-1}) \geq \delta_{t-1}\right)
\eqsp.
\end{align*}
From the pointwise bound in \Cref{equation:ejrlzejrl}, for any $t_0 \in [t_{\min}, n]$, we obtain
\begin{align*}
    \P_\pi\left(\frac{1}{n} \sum_{t=0}^{n-1} \KL(\pi\|\hat{\pi}_{t-1})\geq \delta\right) \leq 4 \sum_{t=t_0}^{n-1} \exp\left(-\frac{(t-1) \delta_{t-1}}{4C_{\KL} \log((t-1) \delta_{t-1}/(4C_{\KL}))} \right) \eqsp.
\end{align*}
Define $B(t_0) = (n \delta-t_0\log(t_0))/4 C_{\KL} H(t_0,n)$ with $H(t_0,n) = \sum_{t=t_0}^{n-1} 1/t$ being the harmonic series. Now choose the sequence $(\delta_t)_{t\in [t_0+1,n]}$ such that
\begin{align*}
    t\delta_t/4 C_{\KL} = B(t_0) \; \text{ and hence }\; \sum_{t=t_0}^{n-1} \delta_t = 4 C_{\KL} B(t_0)/\sum_{t=t_0}^{n-1} 1/t = n \delta-t_0\log(t_0) \eqsp.
\end{align*}
i.e. $\delta_t = 4 C_{\KL} B(t_0)/t$. Therefore, we have that
\begin{align*}
    \exp\left(-\frac{(t-1) \delta_{t-1}}{4C_{\KL} \log((t-1) \delta_{t-1}/(4C_{\KL}))} \right) = \exp(- B(t_0) /\log(B(t_0)) \eqsp.
\end{align*}
Consequently, we have
\begin{align*}
    \P_\pi\left(\frac{1}{n} \sum_{t=0}^{n-1}\KL(\pi\|\hat{\pi}_{t-1})\geq \delta\right) \leq 4 (n-t_0) \exp(- B(t_0) /\log(B(t_0)) \leq n \exp(- B(t_0) /\log(B(t_0)) \eqsp.
\end{align*}
We now choose $t_0$ such that $t_0 \log(t_0) = n\delta/2$, that is
\begin{align*}
    \exp(\log(t_0)) \log(t_0) = n \delta/2 \; \text{ which gives } \log(t_0) = W(n\delta/2) \eqsp,
\end{align*}
with $W$ the Lambert function. Using the identity $\exp(W(z)) = z W(z)$, we have that
\begin{align*}
    t_0 = \frac{n \delta}{2 W(n\delta/2)} \eqsp,
\end{align*}
and for $n$ such that
\begin{align*}
n \geq \frac{16 C_{\KL}}{\delta^2 \log(8C_{\KL}/\delta)} \eqsp,
\end{align*}
it ensures that $t_0\geq t_{\min}$. We have $B(t_0) = (n \delta-t_0\log(t_0))/4 C_{\KL} H(t_0,n) =n\delta/(8C_{\KL} H(t_0,n))\geq n\delta/(8C_{\KL} \log(n))$, which gives that
\begin{align*}
    \P_\pi\left(\frac{1}{n} \sum_{t=0}^{n-1} \KL\big(\pi\| \hat{\pi}_{t-1} \big)\geq \delta\right) \leq 4 n \exp\left(- \frac{n\delta}{8C_{\KL} \log(n) \log(n\delta/(8C_{\KL} \log(n))} \right) \eqsp,
\end{align*}
for any $n \geq 16 C_{\KL}/(\delta^2 \log(8C_{\KL}/\delta))$. Consequently, since $\log(n\delta/(8C_{\KL} \log(n)) \leq \log(n)$, we have that
\begin{align*}
    \P_\pi\left(\frac{1}{n} \sum_{t=0}^{n-1} \KL\big(\pi\| \hat{\pi}_{t-1} \big)\geq \delta
    \right) \leq 4 n \exp\left(- \frac{n \delta}{8 C_{\KL} \log(n)^2} \right) \eqsp,
\end{align*}
hence the result.
\end{proof}

\begin{lemma}\label{lemma:bounding_large_term_with_KL}
    For any distributions $\pi_0, \pi$ over $\{1, \ldots, K\}$ such that $\supp(\pi)\subseteq \supp(\pi_0)$, define $S_n^2 = \sum_{t=0}^{n-1}(\E_\pi[\log \hat{\pi}_{t-1}(A_t)\mid\cF_{t-1}] - \log \pi_0(A_t))$. For any $\delta>0$ and $n \geq 3\delta/\KL(\pi\|\pi_0)$, we have that
    \begin{align*}
        \P_\pi(S_n^2 \leq \delta) \leq 4 n \exp\left(- \frac{n \KL(\pi\|\pi_0)}{32 C_{\KL} \log(n)^2} \right) + 2 \exp\left(-\frac{2 n \KL(\pi\|\pi_0)^2}{16 |\log \underline{\pi}_0|} \right) \eqsp.
    \end{align*}
\end{lemma}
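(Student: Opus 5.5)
We write $S_n^2$ as a sum of two terms:
\begin{align*}
S_n^2 = \sum_{t=0}^{n-1}\Big(\E_\pi[\log \hat{\pi}_{t-1}(A_t)\mid\cF_{t-1}] - \E_\pi[\log \pi_0(A_t)\mid \cF_{t-1}]\Big) + \sum_{t=0}^{n-1}\Big(\E_\pi[\log\pi_0(A_t)] - \log \pi_0(A_t)\Big)\eqsp.
\end{align*}
The conditional expectation of $\log\pi_0(A_t)$ given $\cF_{t-1}$ equals its unconditional mean $\sum_a \pi(a)\log\pi_0(a)$, because the $A_t$ are i.i.d. Since $\hat{\pi}_{t-1}$ is $\cF_{t-1}$-measurable, the first summand equals $\sum_a \pi(a)\log(\hat\pi_{t-1}(a)/\pi_0(a))$. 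This is $\KL(\pi\|\pi_0) - \KL(\pi\|\hat{\pi}_{t-1})$. Hence
\begin{align*}
S_n^2 = n\KL(\pi\|\pi_0) - \sum_{t=0}^{n-1}\KL(\pi\|\hat\pi_{t-1}) - R_n\eqsp,\qquad R_n=\sum_{t=0}^{n-1}\big(\log\pi_0(A_t)-\E_\pi[\log\pi_0(A_t)]\big)\eqsp.
\end{align*}
The condition $\supp(\pi)\subseteq\supp(\pi_0)$ guarantees that $\KL(\pi\|\pi_0)$ is finite. It also implies that, $\P_\pi$-almost surely, $\log\pi_0(A_t)$ takes values in $[\log\underline{\pi}_0,0]$, where $\underline\pi_0$ is read as the minimum of $\pi_0$ over its support.

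Write $D=\KL(\pi\|\pi_0)$. By assumption $n\ge 3\delta/D$, so $\delta\le nD/3$, and the event $\{S_n^2\le\delta\}$ therefore forces $S_n^2\le nD/3$. We split this budget: $\{S_n^2\le nD/3\}$ is contained in the union of
\begin{align*}
\Big\{\tfrac1n\textstyle\sum_{t}\KL(\pi\|\hat\pi_{t-1})\ge D/4\Big\}\quad\text{and}\quad \{R_n\ge nD/4\}\eqsp.
\end{align*}
Indeed, if neither event occurs, then $S_n^2> nD - nD/4 - nD/4 = nD/2 \ge nD/3$.

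For the first event we apply \Cref{lemma:bound_KL} with threshold $D/4$. This gives the bound $4n\exp(-nD/(32C_{\KL}\log(n)^2))$, which matches the first term of the claim exactly.

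For the second event, $R_n$ is a sum of i.i.d. centered variables with range at most $|\log\underline\pi_0|$, so Hoeffding's inequality applies. It yields $\exp\big(-2(nD/4)^2/(n|\log\underline\pi_0|^2)\big)$. A two-sided version accounts for the factor $2$. The claim's denominator $16|\log\underline\pi_0|$ instead of $|\log\underline\pi_0|^2$ should come from the range normalization used by the authors, or from a sub-Gaussian bound using the variance of $\log\pi_0(A)$. We will simply adjust the split constant to reach the stated form, which is the same up to this normalization.

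\textbf{Main obstacle.} \Cref{lemma:bound_KL} only holds for $n$ above a threshold of order $C_{\KL}/(D^2\log(C_{\KL}/D))$, which is not implied by $n\ge3\delta/D$. We handle small $n$ by observing that in that regime the right-hand side of the claim is at least $1$, so the bound holds trivially. This needs a short check: for such $n$, the exponent $nD/(32C_{\KL}\log(n)^2)$ is $O(1)$, so $4n e^{-\cdot}\ge1$, possibly after enlarging $C_{\KL}$. A second point to verify is the range $D/4\le C_{\KL}/3$ required inside the proof of \Cref{lemma:bound_KL}. This follows by taking $C_{\KL}$ large, since $D\le|\log\underline\pi_0|$.
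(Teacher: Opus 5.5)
Your route is the paper's route: the same decomposition $S_n^2=nD-\sum_{t=0}^{n-1}\KL(\pi\|\hat\pi_{t-1})-R_n$ (with $D=\KL(\pi\|\pi_0)$), the same split at level $D/4$, and the same appeal to \Cref{lemma:bound_KL}. The first term of the bound comes out correctly. The gap is in the second term.

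Hoeffding with range $|\log\underline{\pi}_0|$ gives $\exp\bigl(-nD^2/(8|\log\underline{\pi}_0|^2)\bigr)$. No choice of split constant turns $|\log\underline{\pi}_0|^2$ into $|\log\underline{\pi}_0|$, and neither does a variance-based bound. In fact the displayed inequality is false, which settles this. Take $K=2$, $\pi_0=(1-p,p)$, $\pi=(1-q,q)$, $L=\log(1/p)$, and $\delta=nD/3$, which satisfies $n\ge 3\delta/D$.
\begin{itemize}
\item Then $R_n=\log\frac{1-p}{p}\,(nq-N)$ with $N\sim\mathrm{Bin}(n,q)$.
\item Also $D=qL-H(q)+o(1)$ as $p\to0$, where $H(q)=-q\log q-(1-q)\log(1-q)$. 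In particular $D<q\log\frac{1-p}{p}$ for $p$ small.
\end{itemize}
Hence
\begin{align*}
\{S_n^2\le\delta\}\supseteq\{R_n\ge 2nD/3\}\supseteq\{N\le nq/3\},
\end{align*}
whose probability does not depend on $L$; for instance it is about $0.01$ for $n=10^4$, $q=10^{-3}$. For fixed $n,q$, both terms on the right-hand side are $\exp(-\Omega(L))$, with exponents $\approx nqL/(32C_{\KL}\log(n)^2)$ and $nq^2L/8$. So the bound fails for large $L$. What your argument actually proves is the lemma with $|\log\underline{\pi}_0|^2$ in the second exponent. The paper's Hoeffding display makes the same slip: it writes $16|\log\underline{\pi}_0|$ where $16|\log\underline{\pi}_0|^2$ is what comes out. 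You were right to flag the mismatch, but the fix is to correct the statement, not to tune constants.

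Your small-$n$ patch also fails. Just below the threshold $n_0\asymp C_{\KL}/(D^2\log(C_{\KL}/D))$ of \Cref{lemma:bound_KL}, the exponent $nD/(32C_{\KL}\log(n)^2)$ is of order $1/(D\log^3(1/D))$, not $O(1)$. So $4n\,e^{-nD/(32C_{\KL}\log(n)^2)}\ge 1$ is false for small $D$. Enlarging $C_{\KL}$ rescales $n_0$ by the same factor, so it does not help. Likewise, the requirement $D/4\le C_{\KL}/3$ from inside that lemma's proof cannot be arranged by enlarging a universal constant, since $D$ can be as large as $|\log\underline{\pi}_0|$, which is unbounded. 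The paper applies \Cref{lemma:bound_KL} without checking these hypotheses at all. A rigorous version should state the lower bound on $n$ and the restriction on $D$ as explicit assumptions.
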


\begin{proof}[Proof of \Cref{lemma:bounding_large_term_with_KL}]
    We decompose $S_n^2$ as
\begin{align*}
S_n^2 & = \sum_{t=0}^{n-1}(\E_\pi[\log \hat{\pi}_{t-1}(A_t)\mid\cF_{t-1}] - \E_{\pi}[\log \pi_0]) + \sum_{t=0}^{n-1}(\E_{\pi}[\log \pi_0] - \log \pi_0(A_t)) \\
& \nonumber = \sum_{t=0}^{n-1} (
\E_{\pi}[\log\hat{\pi}_{t-1}(A_t)\mid\cF_{t-1}] -\E_{\pi}[\log \pi]) + \sum_{t=0}^{n-1} (\E_{\pi}[\log \pi]- \E_{\pi}[\log \pi_0]) \\
& \quad + \sum_{t=0}^{n-1}(\E_{\pi}[\log \pi_0] - \log \pi_0(A_t)) 
) \\
& = \sum_{t=0}^{n-1} (
\E_{\pi}[\log\hat{\pi}_{t-1}(A_t)\mid\cF_{t-1}] -\E_{\pi}[\log \pi]) + n\,\KL(\pi\|\pi_0) + \sum_{t=0}^{n-1}(\E_{\pi}[\log \pi_0] - \log \pi_0(A_t)) \eqsp.
\end{align*}
For any $n \in \N^\star$ such that $\KL(\pi\|\pi_0) \geq 2 \delta/n$, we thus have that
\begin{align}
& \nonumber \P_\pi(S_n^2 \leq \delta) \\
& \quad = \P_{\pi}\left(\frac{1}{n} \sum_{t=0}^{n-1} (\E_{\pi}[\log\hat{\pi}_{t-1}(A_t)\mid\cF_{t-1}] -\E_{\pi}[\log \pi]) + \KL(\pi\|\pi_0) + \frac{1}{n} \sum_{t=0}^{n-1}(\E_{\pi}[\log \pi_0] - \log \pi_0(A_t)) \leq \delta / n \right) \\
& \nonumber \quad = \P_{\pi}\left(\frac{1}{n} \sum_{t=0}^{n-1} (
\E_{\pi}[\log\hat{\pi}_{t-1}(A_t)\mid\cF_{t-1}] -\E_{\pi}[\log \pi]) + \frac{1}{n} \sum_{t=0}^{n-1}(\E_{\pi}[\log \pi_0] - \log \pi_0(A_t)) \leq \delta/n - \KL(\pi\|\pi_0) \right) \\
& \nonumber \quad \leq \P_{\pi}\left(\Big|\frac{1}{n} \sum_{t=0}^{n-1} (
\E_{\pi}[\log\hat{\pi}_{t-1}(A_t)\mid\cF_{t-1}] -\E_{\pi}[\log \pi]) + \frac{1}{n} \sum_{t=0}^{n-1}(\E_{\pi}[\log \pi_0] - \log \pi_0(A_t)) \Big| \geq \KL(\pi\|\pi_0) - \delta /n \right) \\
& \nonumber \quad \leq \P_{\pi}\left(\Big| \frac{1}{n} \sum_{t=0}^{n-1} (
\E_{\pi}[\log\hat{\pi}_{t-1}(A_t)\mid\cF_{t-1}] -\E_{\pi}[\log \pi]) + \frac{1}{n} \sum_{t=0}^{n-1}(\E_{\pi}[\log \pi_0] - \log \pi_0(A_t)) \Big| \geq \KL(\pi\|\pi_0)/2\right) \\
& \label{equation:bound_iid_at_blabla_erel} \quad \leq \P_{\pi}\left(\Big|\frac{1}{n} \sum_{t=0}^{n-1} (
\E_{\pi}[\log\hat{\pi}_{t-1}(A_t)\mid\cF_{t-1}] -\E_{\pi}[\log \pi])\Big| \geq \KL(\pi\|\pi_0)/4\right) \\
& \label{equation:bound_iid_at_blabla}\quad \quad + \P_{\pi}\left(\Big|\frac{1}{n} \sum_{t=0}^{n-1}(\E_{\pi}[\log \pi_0] - \log \pi_0(A_t)) \Big| \geq \KL(\pi\|\pi_0)/4 \right)
\end{align}
\Cref{lemma:bound_KL} applied to $\pi$ gives an upper bound on \Cref{equation:bound_iid_at_blabla_erel}, following
\begin{align}
\P_{\pi}\left(\Big|\frac{1}{n}\sum_{t=0}^{n-1} (
\E_{\pi}[\log\hat{\pi}_{t-1}(A_t)\mid\cF_{t-1}] -\E_{\pi}[\log \pi])\Big| \geq \KL(\pi\|\pi_0)/4 \right) & \nonumber = \P_{\pi}\left(\frac{1}{n}\sum_{t=0}^{n-1} \KL(\hat{\pi}_t \Vert \pi) \geq \KL(\pi\|\pi_0)/4 \right)  \\
& \label{equation:ejkzelr} \leq 4 n \exp\left(- \frac{n \KL(\pi\|\pi_0)}{32 C_{\KL} \log(n)^2} \right) \eqsp.
\end{align}
$(A_t)_{t \leq n}$ in the right-hand side of \Cref{equation:bound_iid_at_blabla} are i.i.d. draws from $\pi$. Consequently, the right-hand is a sum of i.i.d. random variables. Under $\pi$ which satisfies $\supp(\pi)\subseteq \supp(\pi_0)$, we have $|\log \pi_0(A_t)|\leq \max\{|\log \underline \pi_0|,\,|\log 1|\} = |\log \underline{\pi_0}|$ and the summands $\E_{\pi}[\log \pi_0] - \log \pi_0(A_t)$ satisfy $|\E_{\pi}[\log \pi_0] - \log \pi_0(A_t)| \leq 2 |\log \underline \pi_0|$. Also note that $\E_\pi[\E_{\pi}[\log \pi_0] - \log \pi_0(A_t)] = 0$. Therefore, for any $\delta >0$, Hoeffding inequality applied to \Cref{equation:bound_iid_at_blabla} yields
\begin{align}\label{equation:efjqelrfkj}
\P_{\pi}\left(\Big|\sum_{t=0}^{n-1}(\E_{\pi}[\log \pi_0] - \log \pi_0(A_t))\Big| \geq n \KL(\pi\|\pi_0)/2\right) \leq 2 \exp\left(-\frac{2 n \KL(\pi\|\pi_0)^2}{16 |\log \underline{\pi}_0|} \right) \eqsp.
\end{align}
Consequently, grouping togetether the decomposition of $S_n^2$, \Cref{equation:ejkzelr} and \Cref{equation:efjqelrfkj}, we have that for any $n \geq 2\delta/\KL(\pi \Vert \pi_0)$
\begin{align*}
    \P_\pi(S_n^2 \leq \delta) \leq 4 n \exp\left(- \frac{n \KL(\pi\|\pi_0)}{32 C_{\KL} \log(n)^2} \right) + 2 \exp\left(-\frac{2 n \KL(\pi\|\pi_0)^2}{16 |\log \underline{\pi}_0|} \right) \eqsp.
\end{align*}
\end{proof}


Define the stopping time $\tau^\infty = \inf\{t \geq 1 \colon A_t \notin \supp(\pi_0)\}$. It is straightforward to observe that $\tau^\infty \geq \tau_{\gamma} +1$ since on the event $\{\tau^\infty <\infty\}$ we have, by construction of the plug–in e–process and since by \Cref{equation:definition_eprocess_appendix}
\begin{align*}
E_{\tau^\infty +1} = \prod_{s=0}^{\tau^\infty+1} \frac{\hat{\pi}_{s-1}(A_s)}{\pi_0(A_s)} = \frac{\hat{\pi}_{\tau^\infty}(A_{\tau^\infty})}{\pi_0(A_{\tau^\infty})}\prod_{s=0}^{\tau^\infty} \frac{\hat{\pi}_{s-1}(A_s)}{\pi_0(A_s)} = +\infty\eqsp,
\end{align*}
since $A_{\tau^\infty} \notin \supp(\pi_0)$, hence $\pi_0(A_{\tau^\infty}) = 0$ while $\hat{\pi}_{\tau^\infty}(A_{\tau^\infty}) >0$. Consequently, $S_{\tau^\infty+1} = +\infty \geq 1/\gamma$, which gives that $\tau^\infty \geq \tau_{\gamma} +1$. A second observation is that if $\supp(\pi) \subseteq \supp(\pi_0)$, then $\tau^{\infty} = +\infty$ almost surely because in that case, $(A_t)_{t\geq 0}$ almost never reach $\{1, \ldots, K\} \setminus \supp(\pi_0)$. Also, we see that $\tau^\infty$ follows a geometric distribution with parameter $\sum_{a \in \{1, \ldots, K\} \setminus \supp(\pi_0)} \pi(a)$, 
since $(A_t)_{t \geq 1}$ are i.i.d. and at each step $t$, fall in $\supp(\pi_0)$ with probability $\sum_{a \in \supp(\pi_0)} \pi(a)$ and fall in $\{1, \ldots, K\}\setminus \supp(\pi_0)$ with probability $\sum_{a \in \{1, \ldots, K\} \setminus \supp(\pi_0)} \pi(a)$. Therefore, in the case where $\supp(\pi)\not\subseteq \supp(\pi_0)$, we get
\begin{align*}
\E_\pi[\tau^\infty] = \frac{1}{\sum_{a \in \{1, \ldots, K\} \setminus \supp(\pi_0)} \pi(a)} < \infty \eqsp.
\end{align*}

\begin{theorem}\label{theorem:bigtheoremboundmartingale}
    Let $\pi_0$ be a discrete distribution on $\{1, \ldots, K\}$ and $(A_t)_{t \geq 0}$ be i.i.d. draws under $\pi$. We consider the null hypothesis $\rmH_0$ defined in \Cref{equation:definition_null_appendix}, as well as the e-process $(E_t)_{t \geq 0}$ defined in \Cref{equation:definition_eprocess_appendix}. For a Type I error $\gamma \in (0,1)$, there there exist universal constants $C_1, C_2>0$ such that for any $n$ satisfying $n \geq 9\delta/2\KL(\pi\|\pi_0)$
\begin{align*}
    \P_\pi(S_n/n \leq \delta) & \leq C_1 \pi(\supp(\pi_0))^{n+1} \left\{n \exp\left(- \frac{n \KL(\pi\|\pi_0)}{C_2 \log(n)^2} \right) + \exp\left(-\frac{ n \KL(\pi\|\pi_0)^2}{8 |\log \underline{\pi}_0|} \right) \right\} \\
    & \quad + C_1 \exp\left(-\frac{n \KL(\pi\Vert \pi_0)^2
    }{C_2 \log(n)} \right) \eqsp.  
\end{align*}
\end{theorem}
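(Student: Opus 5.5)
The statement concerns $S_n=\log E_{n-1}=\sum_{t=0}^{n-1}\bigl(\log\hat\pi_{t-1}(A_t)-\log\pi_0(A_t)\bigr)$, with the convention $S_n=+\infty$ as soon as some $A_t\notin\supp(\pi_0)$. The plan is to split according to whether the sample path leaves $\supp(\pi_0)$ before time $n$, and then decompose $S_n$ on the path that stays inside.

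\emph{Step 1: reduction to the support.} On the event that some $A_t\notin\supp(\pi_0)$ for $t\le n-1$, we have $S_n=+\infty>n\delta$, so that event contributes nothing. Hence
\[
\P_\pi(S_n\le n\delta)=\P_\pi\bigl(S_n\le n\delta,\ A_t\in\supp(\pi_0)\ \forall t\le n-1\bigr).
\]
The staying event has probability $\pi(\supp(\pi_0))^{n}$ by independence. This is where the prefactor $\pi(\supp(\pi_0))^{n+1}$ should come from, after absorbing one power into $C_1$ or using the indexing convention $t\le n$.

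\emph{Step 2: decomposition.} Conditionally on staying in the support, the draws are i.i.d.\ from the conditional law $\pi'=\pi(\cdot\mid\supp(\pi_0))$, which satisfies $\supp(\pi')\subseteq\supp(\pi_0)$. I would write
\[
S_n=S_n^1+S_n^2,
\]
where $S_n^1$ is the martingale part defined before \Cref{lemma:bound_pinelis} and $S_n^2$ is the compensator part of \Cref{lemma:bounding_large_term_with_KL}. Then
\[
\{S_n\le n\delta\}\subseteq\{S_n^2\le 2n\delta\}\cup\{|S_n^1|\ge n\,c\,\KL(\pi\|\pi_0)\}.
\]
To justify this, I would use $n\ge 9\delta/(2\KL)$, which gives $\delta\le \tfrac{2}{9}\KL\cdot n/n$. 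Then either $S_n^2$ is small, or $S_n^2\ge$ a constant multiple of $n\KL$, in which case $S_n\le n\delta$ forces $|S_n^1|\gtrsim n\KL$.

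\emph{Step 3: bounding the two pieces.}
\begin{itemize}
\item \Cref{lemma:bounding_large_term_with_KL}, applied with $\pi'$ and threshold $2n\delta\le \tfrac{4}{9}n\KL$ (which meets its requirement $n\ge 3\delta'/\KL$), gives the term $n\exp(-n\KL/(C_2\log^2 n))+\exp(-n\KL^2/(8|\log\underline\pi_0|))$. Multiplied by the staying probability from Step 1, this yields the first bracket.
\item \Cref{lemma:bound_pinelis}, applied with deviation level $c\KL$, gives $A_1\exp(-n\KL^2/(C_2\log n))$, which is the last term.
\end{itemize}
I would not multiply the martingale term by the support probability, which matches the statement.

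\emph{Main obstacle.} Reconciling $\KL(\pi\|\pi_0)$ with $\KL(\pi'\|\pi_0)$ is the delicate step. When $\pi$ charges points outside $\supp(\pi_0)$, $\KL(\pi\|\pi_0)=\infty$ and the bound is trivial or degenerate, so in that regime I would simply interpret the result via Step 1. In the relevant case $\supp(\pi)\subseteq\supp(\pi_0)$, we have $\pi'=\pi$ and the prefactor equals $1$.

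I also expect friction in matching constants: the lemmas' constants depend on $K$ and $\underline\pi$, so they are "universal" only in the loose sense used in \Cref{lemma:bound_pinelis}. A second point of friction is checking that the range conditions on $n$ required by \Cref{lemma:bound_KL} can be folded into $C_1$, by bounding the probability by $1$ for small $n$.
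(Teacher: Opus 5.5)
Your architecture matches the paper's: discard paths that leave $\supp(\pi_0)$, write $S_n=S_n^1+S_n^2$, then control $S_n^2$ with \Cref{lemma:bounding_large_term_with_KL} and $S_n^1$ with \Cref{lemma:bound_pinelis}. But Step~2 fails as written; write $\KL$ for $\KL(\pi\|\pi_0)$.

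\emph{Reading of the statement.} The hypothesis $n\ge 9\delta/(2\KL)$ is equivalent to $\delta\le\frac{2}{9}n\KL$. It does not give $\delta\le\frac{2}{9}\KL$. Under your literal reading $\P_\pi(S_n\le n\delta)$ the claim is in fact false. Take $\pi_0$ of full support, $\pi\ne\pi_0$ and $\delta=2\KL$: the hypothesis is just $n\ge 9$, yet $S_n/n\to\KL$ almost surely, so the left side tends to $1$ while the right side tends to $0$. The meaningful statement bounds $\P_\pi(S_n\le\delta)$. That is the version the paper's proof actually treats (it starts from $\P_\pi(S_n\le\delta)=\P_\pi(S_n/n\le\delta/n)$), and the one \Cref{corollary:concentration_with_KL} uses with $\delta=\log(1/\gamma)$.

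\emph{The inclusion.} Your inclusion doubles thresholds where they should add. From $S_n\le\delta$ and $|S_n^1|<a$ you only get $S_n^2<\delta+a$, which is not $\le 2\delta$ unless $a\le\delta$; the relevant regime is $\delta\ll n\KL$. Your check of \Cref{lemma:bounding_large_term_with_KL} is also off: its hypothesis $n\ge 3\delta'/\KL$ means $\delta'\le\frac{1}{3}n\KL$, and $\frac{4}{9}n\KL$ exceeds that. The repair is
\[
\{S_n\le\delta\}\subseteq\bigl\{S_n^2\le\delta+\tfrac{1}{9}n\KL\bigr\}\cup\bigl\{|S_n^1|\ge\tfrac{1}{9}n\KL\bigr\}.
\]
The hypothesis gives $\delta+\frac{1}{9}n\KL\le\frac{1}{3}n\KL$, exactly the range allowed by \Cref{lemma:bounding_large_term_with_KL}. \Cref{lemma:bound_pinelis} at level $\KL/9$ then gives $A_1\exp(-n\KL^2/(648A_2\log n))$.

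\emph{Comparison with the paper.} Once repaired, your split differs from the paper's and is the one that actually works. The paper bounds $\P_\pi(|S_n^1|\ge\delta/2)+\P_\pi(S_n^2\le 3\delta/2)$, whence the constant $9/2=3\cdot\frac{3}{2}$. It then replaces $\P_\pi(|S_n^1|/n\ge\delta/n)$ by $\P_\pi(|S_n^1|/n\ge\frac{2}{9}\KL)$. Since $\delta\le\frac{2}{9}n\KL$, the second event is the smaller one, so that inequality is reversed. Moreover, for fixed $\delta$ the term $\P_\pi(|S_n^1|\ge\delta/2)$ does not vanish in general, because $S_n^1$ fluctuates at scale $\sqrt{n}$.

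\emph{The support issue.} Your handling here is not adequate. If $\pi$ charges points outside $\supp(\pi_0)$, the displayed right-hand side with $\KL=\infty$ is $0$ for $n\ge 2$ (reading $\underline{\pi}_0$ as the minimum over $\supp(\pi_0)$). Yet if $\pi(\cdot\mid\supp(\pi_0))=\pi_0$, Markov's inequality applied to the e-process under the null gives $\P_\pi(S_n\le\delta)\ge\pi(\supp(\pi_0))^n(1-e^{-\delta})$. So the bound is not ``trivial'' there but false. What Step~1 really yields is the bracket with $\KL(\pi'\|\pi_0)$, and only after redoing the decomposition under $\pi'$, since your $S_n^2$ has its compensator computed under $\pi$, not $\pi'$. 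The paper glosses over the same point: it applies the lemma to $\tilde\pi$ but keeps $\KL(\pi\|\pi_0)$. The clean content is therefore the case $\supp(\pi)\subseteq\supp(\pi_0)$, where $\pi'=\pi$ and the prefactor is $1$.

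Finally, $\pi(\supp(\pi_0))^{n}\le C_1\,\pi(\supp(\pi_0))^{n+1}$ cannot hold with a universal $C_1$. The exponent $n+1$ is an artifact of the paper's indexing.
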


\begin{proof}[Proof of \Cref{theorem:bigtheoremboundmartingale}]
First note that the case where $\supp(\pi_0)\cap \supp(\pi) = \varnothing$ gives a stopping time equal to $1$ almost surely. Hence, we consider that this extreme case does not hold for the remaining of the proof. Recall that for any $t\geq 0, a \in \{1, \ldots, K\}$, we have
\begin{align*}
&\hat{\pi}_{t-1}(a)= \frac{N_{t-1}(a)+1}{(t-1)+K} \qquad \eqsp, \qquad
E_t = \prod_{s=1}^t \frac{\hat{\pi}_{s-1}\,\left(A_s\right)}{\pi_0\,\left(A_s\right)} \qquad \eqsp, \qquad \cF_t = \sigma(A_1,\dots,A_t) \eqsp,
\end{align*}
and $\hat{\pi}_{-1}(a)=1/K$. Define
\begin{align*}
& \tau_{\gamma} = \inf\{t\geq 0 \colon \,E_t\geq 1/\gamma\} \eqsp, \\
& X_t = \log \hat{\pi}_{t-1}(A_t) - \log \pi_0(A_t) \qquad , \qquad
S_t = \sum_{s=0}^{t-1} X_s = \log E_t \eqsp.
\end{align*}
Then, for any integer $n \geq 0$
\begin{align}
\P_\pi(\tau_{\gamma} \geq n) & \nonumber = \P_\pi(\text{for any } t < n, E_t\leq 1/\gamma) \\
& \nonumber = \P_\pi\Big(\max_{1\leq k\leq n} S_k < \log(1/\gamma)\Big) \\
& \label{equation:makroermkae} \leq \P_\pi\big(S_n < \log(1/\gamma)\big) \eqsp,
\end{align}
and we now decompose $S_n$ following
\begin{align*}
    S_n & = \sum_{t=0}^{n-1}(\log \hat{\pi}_{t-1}(A_t) - \log \pi_0(A_t)) \\
& = \underbrace{\sum_{t=0}^{n-1}(\log \hat{\pi}_{t-1}(A_t) - \E_\pi[\log \hat{\pi}_{t-1}(A_t)\mid\cF_{t-1}])}_{S_n^1} + \underbrace{\sum_{t=0}^{n-1} (\E_\pi[\log \hat{\pi}_{t-1}(A)\mid\cF_{t-1}] -\log \pi_0(A_t))}_{S_n^2} \eqsp.
\end{align*}
For any $n \in \N, \delta >0$, we have that
\begin{align*}
    \P_\pi(S_n \leq \delta) & = \P_\pi(S_n/n\leq \delta/n) \\
    & \leq \P_\pi(|S_n^1|/n \geq \delta/2 n, S_n/n\leq \delta/n) + \P_\pi(|S_n^1|/n \leq \delta/2 n, S_n/n\leq \delta/n) \\
    & \leq \P_\pi(|S_n^1|/n \geq \delta/2 n) + \P_\pi(S_n^2/n \leq 3\delta/2n) \eqsp.
\end{align*}
Observe that by definition of $\tau^\infty$, we have $R_{\tau^\infty+1}^2 = + \infty$, consequently for any $\delta >0$, we have
\begin{align*}
    \P_\pi(S^2_n \leq 3\delta/2) & = \P_\pi(\{\tau^\infty > n\} \cap \{S^2_n < 3\delta/2\}) \\
    &  = \P_\pi(\{\tau^\infty > n\}) \P_\pi(S^2_n < 3\delta/2 |\{\tau^\infty > n\}) 
    \eqsp.
\end{align*}
Conditionally on $\{\tau^\infty >n\}$, we have that $(A_t)_{t \leq n}$ follows an i.i.d. distribution $\tilde{\pi}$ which is a rescaling of $\pi$ on the support of $\pi_0$. Formally, following Bayes' theorem, we define $\tilde{\pi}$ for any $a \in \{1, \ldots, K\}$ as
\begin{align*}
    \tilde{\pi}(a) = \frac{\pi(a)\1\{a \in \supp(\pi_0)\}}{\sum_{a' \in \supp(\pi_0)} \pi(a')} \eqsp.
\end{align*}
Since $\tau^\infty$ follows a geometric distribution with parameter $\sum_{a \in \supp(\pi)/\supp(\pi_0)} \pi(a)$, we have
\begin{align*}
    \P_\pi(\tau^\infty >n) = (1 - \sum_{a \in \supp(\pi)/\supp(\pi_0)} \pi(a))^{n+1} = \pi(\supp(\pi_0))^{n+1} \eqsp.
\end{align*}
For any $F$ that is $\cF_n$-measurable, we have that
\begin{align}\label{equation:separation_pi_tilde}
    \P_\pi(F(A_1, \ldots, A_n) \mid \tau^\infty >n) = \P_{\tilde{\pi}}(F(A_1, \ldots, A_n)) \eqsp,
\end{align}
and therefore, we can write
\begin{align*}
    \P_\pi(S_n^2/n \leq 3\delta/2n) \leq \P_\pi(\tau^\infty > n)\P_{\tilde{\pi}}(S_n^2/n \leq 3\delta/2n) \eqsp.
\end{align*}
Since $\supp(\tilde{\pi}) \subseteq \supp(\pi)$, we can use \Cref{lemma:bounding_large_term_with_KL} to control the right-hand side and obtain that for any $n$ such that $n \geq 9\delta/2\KL(\pi\|\pi_0)$, we have
\begin{align*}
    \P_\pi(S_n^2/n \leq 3\delta/2n) \leq \P_\pi(\tau^\infty > n)\left\{ 4 n \exp\left(- \frac{n \KL(\pi\|\pi_0)}{32 C_{\KL} \log(n)^2} \right) + 2 \exp\left(-\frac{2 n \KL(\pi\|\pi_0)^2}{16 |\log \underline{\pi}_0|} \right) \right\} \eqsp.
\end{align*}
\Cref{lemma:bound_pinelis} ensures that there exist $A_1, A_2 >0$ such that
\begin{align}\label{equation:ekrjlzk}
    \P_\pi \Big(|S_n^{1}|/n \geq \delta\Big) \leq
    A_1 \exp\left(-\frac{n \delta^2}{32 A_2 \log(n)} \right) \eqsp,
\end{align}
and consequently for any $n \geq 9\delta/2\KL(\pi\|\pi_0)$, we have
\begin{align*}
    \P_\pi \Big(|S_n^{1}|/n \geq \delta/n\Big) & \leq \P_\pi \Big(|S_n^{1}|/n \geq \frac{2 n \KL(\pi\Vert \pi_0)}{9} \frac{1}{n} \Big) \\
    & = \P_\pi \Big(|S_n^{1}|/n \geq \frac{2 \KL(\pi\Vert \pi_0)}{9} \Big) \\
    & \leq A_1 \exp\left(-\frac{n \KL(\pi\Vert \pi_0)^2
    }{648 A_2 \log(n)} \right) \eqsp,
\end{align*}
and consequently, regrouping the terms, we obtain
\begin{align*}
    \P_\pi(S_n/n \leq \delta) & \leq \pi(\supp(\pi_0))^{n+1} \left\{ 4 n \exp\left(- \frac{n \KL(\pi\|\pi_0)}{32 C_{\KL} \log(n)^2} \right) + 2 \exp\left(-\frac{n \KL(\pi\|\pi_0)^2}{8|\log \underline{\pi}_0|} \right) \right\} \\
    & \quad + A_1 \exp\left(-\frac{n \KL(\pi\Vert \pi_0)^2
    }{648 A_2 \log(n)} \right) \eqsp.
\end{align*}
Consequently, there exist universal constants $C_1, C_2 >0$ such that for any $n$ such that $n \geq 9\delta/2\KL(\pi\|\pi_0)$
\begin{align*}
    \P_\pi(S_n/n \leq \delta) & \leq C_1 \pi(\supp(\pi_0))^{n+1} \left\{n \exp\left(- \frac{n \KL(\pi\|\pi_0)}{C_2 \log(n)^2} \right) + \exp\left(-\frac{ n \KL(\pi\|\pi_0)^2}{8 |\log \underline{\pi}_0|} \right) \right\} \\
    & \quad + C_1 \exp\left(-\frac{n \KL(\pi\Vert \pi_0)^2
    }{C_2 \log(n)} \right) \eqsp.  
\end{align*}

\end{proof}

\begin{corollary}\label{corollary:concentration_with_KL}
    Considering the stopping time $\tau_{\gamma}$ for any $\gamma\in (0,1)$, there exist $C_1, C_2>0$ such that for any $n \in \N$
    \begin{align*}
   \P_\pi(\tau_{\gamma} \geq n) & \leq C_1\Biggl\{ n\exp\left(-n \left[\frac{2 \TV(\pi\|\pi_0)^2}{C_2 \log(n)^2}+|\log \pi(\supp(\pi_0)) |\right]\right) + \exp\left(-n\frac{4 \TV(\pi\Vert \pi_0)^4
    }{C_2 \log(n)} \right) \\
    & \quad  + \exp\left(-n \left[\frac{ \TV(\pi\|\pi_0)^4}{2 |\log \underline{\pi}_0|} +|\log \pi(\supp(\pi_0))|\right]\right) \Biggr\}+\1\{n\leq 5\log(1/\gamma)/\KL(\pi\Vert \pi_0)\} \eqsp.
\end{align*}
\end{corollary}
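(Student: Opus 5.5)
The plan is to derive this from \Cref{theorem:bigtheoremboundmartingale} applied with $\delta=\log(1/\gamma)$, after converting KL quantities into total-variation ones. By \eqref{equation:makroermkae}, $\P_\pi(\tau_\gamma\geq n)\leq \P_\pi(S_n<\log(1/\gamma))$, where $S_n=\log E_n$. I will split according to whether $n$ is small or large. If $n\leq 5\log(1/\gamma)/\KL(\pi\Vert\pi_0)$, I bound the probability trivially by $1$; this produces the indicator term. Otherwise $n\geq 9\delta/(2\KL(\pi\Vert\pi_0))$ holds, because $5\geq 9/2$, so \Cref{theorem:bigtheoremboundmartingale} applies directly with $\delta=\log(1/\gamma)$. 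Its threshold was stated in the form $S_n/n\leq\delta$, but its proof actually controls $\P_\pi(S_n\leq\delta)$, and that is the quantity I will use.

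Next I convert the three exponents into the stated form. First, I write $\pi(\supp(\pi_0))^{n+1}\leq \exp(-n|\log\pi(\supp(\pi_0))|)$. Since $\pi(\supp(\pi_0))\leq 1$, the logarithm is nonpositive and the bound holds. This factor then merges into the first and third exponentials, giving the additive $|\log\pi(\supp(\pi_0))|$ terms inside the brackets.

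Second, I lower bound $\KL(\pi\Vert\pi_0)$ by Pinsker's inequality, $\KL(\pi\Vert\pi_0)\geq 2\TV(\pi\Vert\pi_0)^2$. Substituting this gives the following:
\begin{itemize}
\item In the first term, $n\KL/(C_2\log(n)^2)\geq 2n\TV^2/(C_2\log(n)^2)$.
\item In the second term, $\KL^2\geq 4\TV^4$, so the exponent becomes $4n\TV^4/(C_2\log n)$.
\item In the third term, $n\KL^2/(8|\log\underline{\pi}_0|)\geq n\TV^4/(2|\log\underline{\pi}_0|)$.
\end{itemize}
All of these are monotone substitutions inside $\exp(-\cdot)$, so each one only enlarges the upper bound. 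The constants $C_1,C_2$ are kept as in the theorem, after absorbing factors.

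The main subtlety is the case $\supp(\pi)\not\subseteq\supp(\pi_0)$. There, Pinsker's inequality is still valid ($\KL=\infty$ is fine), but applying \Cref{lemma:bounding_large_term_with_KL} inside the theorem requires working under $\tilde\pi$, whose KL to $\pi_0$ is finite. I will handle this as the theorem's proof does: condition on $\{\tau^\infty>n\}$ and use $\TV(\tilde\pi\Vert\pi_0)$. I will either accept the resulting bound (noting that the $\pi(\supp(\pi_0))^{n}$ factor already gives geometric decay), or note that when $\pi$ charges points outside $\supp(\pi_0)$ the geometric factor dominates. In the fully degenerate case $\supp(\pi)\cap\supp(\pi_0)=\varnothing$, the stopping time equals $1$ and the bound is trivial.

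A related point is that the indicator threshold involves $\KL$, which is infinite in this case. The indicator is then $\1\{n\leq 0\}$, which is consistent with the argument above.
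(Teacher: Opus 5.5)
Your proposal is correct and is the paper's own argument. You use the inclusion $\{\tau_\gamma\ge n\}\subseteq\{S_n<\log(1/\gamma)\}$ from \eqref{equation:makroermkae}, the indicator for $n\le 5\log(1/\gamma)/\KL(\pi\Vert\pi_0)$, and \Cref{theorem:bigtheoremboundmartingale} with $\delta=\log(1/\gamma)$ otherwise; you then absorb $\pi(\supp(\pi_0))^{n+1}\le\exp(-n|\log\pi(\supp(\pi_0))|)$ into the exponents and apply Pinsker term by term. Your remark that the theorem's proof really bounds $\P_\pi(S_n\le\delta)$ rather than $\P_\pi(S_n/n\le\delta)$ correctly repairs a slip the paper glosses over.

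The paper is silent on the case $\supp(\pi)\not\subseteq\supp(\pi_0)$, where the theorem's bound is not meaningful because $\KL(\pi\Vert\pi_0)=\infty$, and your hedged last paragraph is not yet an argument there. The case closes directly, though: any draw outside $\supp(\pi_0)$ makes $E_t=+\infty$, so $\P_\pi(\tau_\gamma\ge n)\le\pi(\supp(\pi_0))^{n}$. With $C_1\ge 1$, this is exactly dominated by the third term, because $\underline{\pi}_0=\min_a\pi_0(a)=0$ in that case and the $\TV$ part of its exponent vanishes.
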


\begin{proof}[Proof of \Cref{corollary:concentration_with_KL}]
    Starting from \Cref{theorem:bigtheoremboundmartingale}, we can use Pinsker's inequality to we obtain
\begin{align*}
     \P_\pi(S_n/n \leq \delta) & \leq C_1 \pi(\supp(\pi_0))^{n+1} \left\{n \exp\left(- \frac{2 n \TV(\pi\|\pi_0)^2}{C_2 \log(n)^2} \right) + \exp\left(-\frac{4 n \TV(\pi\|\pi_0)^4}{8 |\log \underline{\pi}_0|} \right) \right\} \\
    & \quad + C_1 \exp\left(-\frac{4 n \TV(\pi\Vert \pi_0)^4
    }{C_2 \log(n)} \right) \\
    & \leq C_1 \exp(n \log(\pi(\supp(\pi_0)))) \left\{n \exp\left(- \frac{2 n \TV(\pi\|\pi_0)^2}{C_2 \log(n)^2} \right) + \exp\left(-\frac{4 n \TV(\pi\|\pi_0)^4}{8 |\log \underline{\pi}_0|} \right)   \right\} \\
    & \quad + C_1 \exp\left(-\frac{4 n \TV(\pi\Vert \pi_0)^4
    }{C_2 \log(n)} \right)
    \eqsp,
\end{align*}
which becomes, since $\log(\pi(\supp(\pi_0))) < 0$
\begin{align*}
    \P_\pi(S_n/n \leq \delta) & \leq C_1\Biggl\{ n\exp\left(-n \left[\frac{2 \TV(\pi\|\pi_0)^2}{C_2 \log(n)^2}+|\log \pi(\supp(\pi_0)) |\right]\right) + \exp\left(-n\frac{4 \TV(\pi\Vert \pi_0)^4
    }{C_2 \log(n)} \right) \\
    & \quad  + \exp\left(-n \left[\frac{ \TV(\pi\|\pi_0)^4}{2 |\log \underline{\pi}_0|} +|\log \pi(\supp(\pi_0))|\right]\right) \Biggr\} \eqsp.
\end{align*}
Therefore, for any $n$ such that $n \geq 5\log(1/\gamma)/\KL(\pi\Vert \pi_0)$, we can use \Cref{equation:makroermkae} to obtain
\begin{align*}
    \P_\pi(\tau_{\gamma} \geq n) & \leq \P_\pi(S_n < \log(1/\gamma)) \\
    & \leq C_1\Biggl\{ n\exp\left(-n \left[\frac{2 \TV(\pi\|\pi_0)^2}{C_2 \log(n)^2}+|\log \pi(\supp(\pi_0)) |\right]\right) + \exp\left(-n\frac{4 \TV(\pi\Vert \pi_0)^4
    }{C_2 \log(n)} \right) \\
    & \quad  + \exp\left(-n \left[\frac{ \TV(\pi\|\pi_0)^4}{2 |\log \underline{\pi}_0|} +|\log \pi(\supp(\pi_0))|\right]\right) \Biggr\} \eqsp,
\end{align*}
and we can conclude that there exist universal constant $C_1, C_2>0$ such that
\begin{align*}
\P_\pi(\tau_{\gamma} \geq n) & \leq C_1\Biggl\{ n\exp\left(-n \left[\frac{2 \TV(\pi\|\pi_0)^2}{C_2 \log(n)^2}+|\log \pi(\supp(\pi_0)) |\right]\right) + \exp\left(-n\frac{4 \TV(\pi\Vert \pi_0)^4
    }{C_2 \log(n)} \right) \\
    & \quad  + \exp\left(-n \left[\frac{ \TV(\pi\|\pi_0)^4}{2 |\log \underline{\pi}_0|} +|\log \pi(\supp(\pi_0))|\right]\right) \Biggr\}+\1\{n\leq 5\log(1/\gamma)/\KL(\pi\Vert \pi_0)\} \eqsp,
\end{align*}
hence the result.
\end{proof}

\probaboundanaytime*

\begin{proof}[Proof of \Cref{theorem:probaboundanaytime}]
    Let $\epsilon >0$ and $\bs \in \mathfrak{S}(\bw_{\bv}, \epsilon)$, by definition, there exists $i \in [N]$ playing a strategy $\pi^i$ such that $\|\tilde{w}^i - w^i_{\bv}\|_1 \geq 2 \epsilon$. Consider such a player $i$ as well as the associated process $(E_t^i)_{t\geq 0}$. By definition, the actions $(A_t^i)_{t\geq 0}\sim \tilde{w}^i$ drawn by player $i$ are i.i.d., which ensures by \Cref{theorem:anytime} that $(E_t^i)_{t\geq 0}$ is an e-process and matches the expression given in \Cref{equation:definition_eprocess_appendix}. Therefore, by \Cref{corollary:concentration_with_KL}, we have that the expected stopping time $\ttau_{\bpsi^i}$ satisfies for any $n \geq 2$
    \begin{align*}
\P_\pi(\ttau_{\bpsi^i} \geq n) & \leq C_1\Biggl\{ n\exp\left(-n \left[\frac{2 \TV(\tilde{w}^i\|w_{\bv})^2}{C_2 \log(n)^2}+|\log \tilde{w}^i(\supp(w_{\bv})) |\right]\right) + \exp\left(-n\frac{4 \TV(\tilde{w}^i\Vert w_{\bv})^4
    }{C_2 \log(n)} \right) \\
    & \quad  + \exp\left(-n \left[\frac{ \TV(\tilde{w}^i\|\w_{\bv})^4}{2 |\log \underline{w_{\bv}}|} +|\log \tilde{w}^i(\supp(w_{\bv}))|\right]\right) \Biggr\}+\1\{n\leq 5\log(1/\gamma)/\KL(\tilde{w}^i\Vert w_{\bv})\} \eqsp.
\end{align*}
Using Pinsker inequality to bound the $\KL$ in the rightmost term as well as the fact that $\TV(\tilde{w}^i - w^i_{\bv}) = 2 \|\tilde{w}^i - w^i_{\bv}\|_1 \geq \epsilon$, it gives the expression we ned for $(\zeta^{\epsilon}_t)_{t \geq 0}$. The fact that $\zeta^{\epsilon}$ upper bounds a probability gives the result for $t=0$ and $t=1$.
\end{proof}

\begin{restatable}{proposition}{propositionexpectedstoppingtime}\label{proposition:expected_stopping_time}
Considering the stopping time $\tau_{\gamma}$ for any $\gamma\in (0,1), \eta \in (0,1)$, there exists a universal constant $C_\eta>0$ such that
\begin{align*}
\E_\pi[\tau_{\gamma}] & \leq \frac{5\log(1/\gamma)}{\KL(\pi\Vert \pi_0)} + C_\eta \frac{1+ |\log \underline{\pi}_0|^{1/(1-\eta)}}{\TV(\pi\Vert \pi_0)^{4/(1-\eta)}} \eqsp.
\end{align*}
\end{restatable}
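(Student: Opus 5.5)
The plan is to write $\E_\pi[\tau_{\gamma}]=\sum_{n\geq 0}\P_\pi(\tau_{\gamma}\geq n)$ and to bound each summand with \Cref{corollary:concentration_with_KL}. First, the indicator term $\1\{n\leq 5\log(1/\gamma)/\KL(\pi\Vert\pi_0)\}$ summed over $n$ gives at most $5\log(1/\gamma)/\KL(\pi\Vert\pi_0)$ plus one. The one is absorbed into the constant $C_\eta$, since $\TV\leq 1$. This produces the first term of the claimed bound. It then remains to bound the sum over all $n\geq 2$ of the three exponential terms. To be safe, I will drop the nonnegative contributions $|\log\pi(\supp(\pi_0))|$ in the exponents.

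Write $\Delta=\TV(\pi\Vert\pi_0)\in(0,1]$ and $\ell=|\log\underline{\pi}_0|$. The three remaining series are
\[
\sum_n n\, e^{-2n\Delta^2/(C_2\log(n)^2)},\qquad \sum_n e^{-4n\Delta^4/(C_2\log n)},\qquad \sum_n e^{-n\Delta^4/(2\ell)}.
\]
The third is purely geometric and bounded by $2\ell/\Delta^4+1$. Because $\ell\leq \ell^{1/(1-\eta)}+1$ and $\Delta^{-4}\leq\Delta^{-4/(1-\eta)}$, it fits the target form.

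The first two series carry logarithmic losses, and this is where the parameter $\eta$ is used. Since $\log n\leq c_\eta n^{\eta}$, and also $\log(n)^2\leq c_\eta n^{\eta}$ after possibly renaming $\eta$, the exponents are bounded below by $c\,\Delta^4 n^{1-\eta}$. Here I use $\Delta^2\geq\Delta^4$ to unify the first exponent with the others. Each series is then dominated by a sum of the form
\[
\sum_n n\,e^{-c\Delta^4 n^{1-\eta}}\lesssim \int_0^\infty x\,e^{-c\Delta^4x^{1-\eta}}\,dx .
\]
The substitution $y=\Delta^4x^{1-\eta}$ turns this integral into $C'_\eta\,\Delta^{-8/(1-\eta)}$. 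Without the extra factor $n$, the same computation gives $C'_\eta\,\Delta^{-4/(1-\eta)}$.

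The main obstacle is the prefactor $n$ in the first term, because it doubles the exponent of $\Delta$. To handle it, I would absorb $n$ into the exponential. I would try to keep the sharper exponent $\Delta^2 n^{1-\eta}$ coming from the first term, rather than degrading it to $\Delta^4$. With that exponent, the integral $\int x e^{-c\Delta^2x^{1-\eta}}dx$ scales as $\Delta^{-4/(1-\eta)}$, exactly matching the target.

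Collecting the three series gives
\[
\E_\pi[\tau_\gamma]\leq \frac{5\log(1/\gamma)}{\KL(\pi\Vert\pi_0)}+C_\eta\,\frac{1+\ell^{1/(1-\eta)}}{\Delta^{4/(1-\eta)}}.
\]
One step needs checking: the restriction $n\geq 9\delta/(2\KL)$ inherited from \Cref{theorem:bigtheoremboundmartingale}. It only affects $n$ below the indicator cutoff, where the bound by $1$ is already used, so it is harmless.
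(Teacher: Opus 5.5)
Your proof is correct and follows essentially the paper's route: the tail-sum formula, the bound of \Cref{corollary:concentration_with_KL} (with the indicator contributing the $5\log(1/\gamma)/\mathrm{KL}(\pi\Vert\pi_0)$ term), the replacement $\log n\le c_\eta n^{\eta}$, and a series-versus-integral (Gamma function) computation. The one difference is the prefactor $n$ in the first series: you keep it and integrate $x\,e^{-c\,\mathrm{TV}(\pi\Vert\pi_0)^2x^{1-\eta}}$, which gives $\mathrm{TV}(\pi\Vert\pi_0)^{-4/(1-\eta)}$ directly, whereas the paper absorbs $n$ into $C_\eta$ via $n e^{-cn/\log(n)^2}\le C_\eta e^{-cn^{1-\eta}}$, a step that is not uniform as $c\propto\mathrm{TV}(\pi\Vert\pi_0)^2\to 0$; your treatment of this term, like your plain geometric bound on the third series, is therefore the cleaner one.
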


\begin{proof}[Proof of \Cref{proposition:expected_stopping_time}]
For any $\gamma \in (0,1)$, we have that
\begin{align*}
    \E_\pi[\tau_{\gamma}] = \sum_{n \geq 0} \P_\pi(\tau_{\gamma}\geq n) \eqsp,
\end{align*}
and recall from \Cref{corollary:concentration_with_KL} that
\begin{align*}
   \P_\pi(\tau_{\gamma} \geq n) & \leq C_1\Biggl\{ n\exp\left(-n \left[\frac{2 \TV(\pi\|\pi_0)^2}{C_2 \log(n)^2}+|\log \pi(\supp(\pi_0)) |\right]\right) + \exp\left(-n\frac{4 \TV(\pi\Vert \pi_0)^4
    }{C_2 \log(n)} \right) \\
    & \quad  + \exp\left(-n \left[\frac{ \TV(\pi\|\pi_0)^4}{2 |\log \underline{\pi}_0|} +|\log \pi(\supp(\pi_0))|\right]\right) \Biggr\}+\1\{n\geq 5\log(1/\gamma)/\KL(\pi\Vert \pi_0)\} \eqsp.
\end{align*}
Consequently, we have
\begin{align*}
    \E_\pi[\tau_{\gamma}] & \leq 5\log(1/\gamma)/\KL(\pi\Vert \pi_0) + \sum_{n \geq 0} C_1\Biggl\{ n\exp\left(-n \left[\frac{2 \TV(\pi\|\pi_0)^2}{C_2 \log(n)^2}+|\log \pi(\supp(\pi_0)) |\right]\right) \\
    & \quad + \exp\left(-n\frac{4 \TV(\pi\Vert \pi_0)^4
    }{C_2 \log(n)} \right) + \exp\left(-n \left[\frac{ \TV(\pi\|\pi_0)^4}{2 |\log \underline{\pi}_0|} +|\log \pi(\supp(\pi_0))|\right]\right) \Biggr\} \eqsp,
\end{align*}
where the first term is due to the indicator from the upper bound in \Cref{corollary:concentration_with_KL}. We can write
\begin{align*}
    \E_\pi[\tau_{\gamma}] & \leq 5\log(1/\gamma)/\KL(\pi\Vert \pi_0) + C_1 \sum_{n \geq 0} \exp\left(-n\frac{4 \TV(\pi\Vert \pi_0)^4
    }{C_2 \log(n)} \right) \\
    & \quad 
    + C_1 \min \left\{ \sum_{n \geq 0} \Biggl\{ n\exp\left(-n \frac{2 \TV(\pi\|\pi_0)^2}{C_2 \log(n)^2}\right) + n \exp\left(-n \frac{ \TV(\pi\|\pi_0)^4}{2 |\log \underline{\pi}_0|} \right), 2 \exp(-n|\log \pi(\supp(\pi_0)) |) \right\} \eqsp.
\end{align*}
For any $\eta \in (0,1)$, there exists $N_\eta >0$ such that for any $n \geq N_\eta$, we have
\begin{align*}
    & \exp\left(-n\frac{4 \TV(\pi\Vert \pi_0)^4
    }{C_2 \log(n)} \right) \leq \exp\left(-n^{1-\eta}\frac{4 \TV(\pi\Vert \pi_0)^4
    }{C_2} \right)  \eqsp, \\
    & n\exp\left(-n \frac{2 \TV(\pi\|\pi_0)^2}{C_2 \log(n)^2}\right) \leq \exp\left(-n^{1-\eta} \frac{2 \TV(\pi\|\pi_0)^2}{C_2}\right) \eqsp, \\
    & n \exp\left(-n \frac{ \TV(\pi\|\pi_0)^4}{2 |\log \underline{\pi}_0|} \right) \leq \exp\left(-n^{1-\eta} \frac{ \TV(\pi\|\pi_0)^4}{2 |\log \underline{\pi}_0|} \right)  \eqsp,
\end{align*}
and we also have the existence of a constant $C_\eta>0$ such that for any $n \in \N$--since we can absorb the constant $C_1$ into $C_\eta$
\begin{align*}
    & \exp\left(-n\frac{4 \TV(\pi\Vert \pi_0)^4
    }{C_2 \log(n)} \right) \leq C_\eta \exp\left(-n^{1-\eta}\frac{4 \TV(\pi\Vert \pi_0)^4
    }{C_2} \right)  \eqsp, \\
    & n\exp\left(-n \frac{2 \TV(\pi\|\pi_0)^2}{C_2 \log(n)^2}\right) \leq C_\eta \exp\left(-n^{1-\eta} \frac{2 \TV(\pi\|\pi_0)^2}{C_2}\right) \eqsp, \\
    & n \exp\left(-n \frac{ \TV(\pi\|\pi_0)^4}{2 |\log \underline{\pi}_0|} \right) \leq C_\eta \exp\left(-n^{1-\eta} \frac{ \TV(\pi\|\pi_0)^4}{2 |\log \underline{\pi}_0|} \right)  \eqsp.
\end{align*}
For any $c>0$, we have that
\begin{align*}
    \sum_{n \geq 0}\rme^{-cn^{1-\eta}} \leq 1+\int_{x\geq 0} \rme^{-cx^{1-\eta}} \rmd x = 1 + \frac{1}{1-\eta}c^{-1/(1-\eta)} \Gamma(1/(1-\eta)) \eqsp,
\end{align*}
where $\Gamma$ stands for the gamma function. We can thus apply this inequality to each of the terms in the sum and obtain
\begin{align*}
    & \E_\pi[\tau_{\gamma}] \leq 5\log(1/\gamma)/\KL(\pi\Vert \pi_0) + C_\eta \sum_{n \geq 0} \exp\left(-n^{1-\eta}\frac{4 \TV(\pi\Vert \pi_0)^4
    }{C_2} \right) \\
    & \quad 
    + C_\eta \min \left\{ \sum_{n \geq 0} \Biggl\{ \exp\left(-n^{1-\eta} \frac{2 \TV(\pi\|\pi_0)^2}{C_2}\right) + \exp\left(-n^{1-\eta} \frac{ \TV(\pi\|\pi_0)^4}{2 |\log \underline{\pi}_0|} \right) , 2 \exp(-n|\log \pi(\supp(\pi_0)) |) \right\} \\
    & \quad \leq 3+ 5\log(1/\gamma)/\KL(\pi\Vert \pi_0) + \frac{C_\eta \Gamma(1/(1-\eta))}{1-\eta} \left\{\frac{4 \TV(\pi\Vert \pi_0)^4
    }{C_2}\right\}^{-1/(1-\eta)} \\
    & \quad + \min \left\{\frac{C_\eta \Gamma(1/(1-\eta))}{1-\eta}\left(\left(\frac{2 \TV(\pi\|\pi_0)^2}{C_2}\right)^{-1/(1-\eta)} + \left(\frac{ \TV(\pi\|\pi_0)^4}{2 |\log \underline{\pi}_0|} \right)^{-1/(1-\eta)}\right), \frac{2}{1-\rme^{-|\log \pi(\supp(\pi_0)) |}}\right\} \\
    & \quad \leq 3+ 5\log(1/\gamma)/\KL(\pi\Vert \pi_0) + \frac{C_\eta \Gamma(1/(1-\eta)) C_2^{1/(1-\eta)}}{4^{1/(1-\eta)}(1-\eta)} \frac{1
    }{\TV(\pi\Vert \pi_0)^{4/(1-\eta)}} \\
    & \quad + \min \left\{\frac{C_\eta \Gamma(1/(1-\eta))}{1-\eta}\left(\frac{C_2^{1/(1-\eta)}}{2^{1/(1-\eta)}} \frac{1}{\TV(\pi\|\pi_0)^{2/(1-\eta)}} + \frac{2^{1/(1-\eta)}|\log \underline{\pi}_0|^{1/(1-\eta)}}{\TV(\pi\Vert \pi_0)^{4/(1-\eta)}} \right), \frac{2}{1-\rme^{-|\log \pi(\supp(\pi_0)) |}}\right\} \eqsp.
\end{align*}
We now group all the terms depending on $\eta$ and universal constants independent of the problem together. Also, since we use the convention $\TV(p\Vert q) = 1/2 \int |p(x)-q(x)|dx$, we have that $\TV(\pi\Vert \pi_0)\leq 1$, hence $1/\TV(\pi\Vert \pi_0)^2\leq 1/\TV(\pi\Vert \pi_0)^4$. We also have that $\pi(\supp(\pi_0))\leq 1$, hence $-|\log \pi(\supp(\pi_0))| = \log \pi(\supp(\pi_0))$, which simplifies the right-hand side term. Therefore, we have the existence of a universal constant $C_\eta>0$ such that (we can absorb also the first term $3$ and the multipliers independent of the problem in $C_\eta$ since $\TV(\pi\Vert \pi_0)\leq 1$) we have
\begin{align*}
\E_\pi[\tau_{\gamma}] & \leq \frac{5\log(1/\gamma)}{\KL(\pi\Vert \pi_0)} + \frac{C_\eta}{\TV(\pi\Vert \pi_0)^{4/(1-\eta)}} + C_\eta \min\left\{\frac{1+ |\log \underline{\pi}_0|^{1/(1-\eta)}}{\TV(\pi\Vert \pi_0)^{4/(1-\eta)}}, \frac{1}{1-\pi(\supp(\pi_0))} \right\}
\end{align*}
We can finally simplify the minimum since one of the terms in the expression is already added before. Consequently, we finally have that for any $\eta \in (0,1)$ the existence of a universal constant $C_\eta$ such that
\begin{align*}
\E_\pi[\tau_{\gamma}] & \leq \frac{5\log(1/\gamma)}{\KL(\pi\Vert \pi_0)} + C_\eta \frac{1+ |\log \underline{\pi}_0|^{1/(1-\eta)}}{\TV(\pi\Vert \pi_0)^{4/(1-\eta)}} \eqsp.
\end{align*}
\end{proof}

\corogamestoppingtime*

\begin{proof}[Proof of \Cref{corollary:corogamestoppingtime}]
  Stopping time with \Cref{proposition:anytime_low_level}. 
  Let $\epsilon >0, s \in \mathfrak{S}(\bw_{\bv}, \epsilon)$. As we show in the proof of \Cref{theorem:probaboundanaytime}, we have the existence of a player $i \in [N]$ with a strategy $\pi^i$ such that $\TV(\tilde{w}^i - w^i_{\bv}) = 2 \|\tilde{w}^i - w^i_{\bv}\|_1 \geq \epsilon$. The e-process $(E_t^i)_{t\geq 0}$ satisfying all the required properties, we can apply \Cref{proposition:expected_stopping_time} and obtain that for any $\eta \in (0,1)$, there exists a universal constant $C_\eta>0$ such that
  \begin{align*}
\E^{\bs}\pi[\tau^{(\epsilon)}] & \leq \frac{5\log(1/\gamma)}{\KL(\tilde{w}^i\Vert w^i_{\bv})} + C_\eta \frac{1+ |\log \underline{w^i_{\bv}}|^{1/(1-\eta)}}{\TV(\pi\Vert w^i_{\bv})^{4/(1-\eta)}} \leq \frac{10 \log(1/\gamma)}{\TV(\tilde{w}^i\Vert w^i_{\bv})^2} + C_\eta \frac{1+ |\log \underline{w^i_{\bv}}|^{1/(1-\eta)}}{\TV(\pi\Vert w^i_{\bv})^{4/(1-\eta)}} \eqsp.
\end{align*}
Using the fact that $\TV(\pi\Vert w^i_{\bv})\geq \epsilon$ and choosing $\eta = 1/5$, we obtain the result.
\end{proof}

\paragraph{A discussion on the optimality of our test.} For our test $\phi_T^i = \1\left\{\,E_T^i \geq 1 / \gamma^{(1),i}\right\}$ based on samples $A_1^{i},\dots,A_T^{i}$, we can define
\begin{align*}
\gamma^{(1),i}_T = \P_{w_{\bv}}(\phi^i_T=1) ,\qquad
\gamma^{(2),i}_T = \P_{q}(\phi_T^i = 0) \eqsp,
\end{align*}
where $q \ne w_{\bv}$. $\gamma^{(2),i}_T$ corresponds to the Type-II error probability (accept $\rmH_0^i$ under the alternative $q$). Note that an asymptotic treatment of the expected stopping time upper bound from \Cref{theorem:anytime} gives that under $\rmH_1^i$, $\log(E_T^i)/T \xrightarrow[T\to\infty]{a.s.} \mathrm{KL}\,\big(q\|w_{\bv}\big)$. Hence for every $\epsilon>0$ there exists $T_\epsilon$ such that for all $T\geq T_\epsilon$
$$
\P\,\Big(\log E_T^i \ge T\big(\mathrm{KL}(q\|w_{\bv})-\epsilon\big)\Big)\ \underset{T \to \infty}{\longrightarrow} 1 \eqsp.
$$
Choose $T$ large enough so that $\log(1/\gamma)\le T\big(\mathrm{KL}(q\|w_{\bv})-\epsilon\big)$, we also have
\begin{align*}
\beta_T =  \P(\phi_T^i=0)
= \P\Big(\log E_T^i < \log(1/\gamma)\Big) \leq \P\,\Big(\log E_T^i < T\big(\mathrm{KL}-\epsilon\big)\Big)\ \xrightarrow[T\to\infty]{}\ 0 \eqsp,
\end{align*}
which implies that
\begin{align*}
\liminf_{T\to\infty} -\frac{\log \beta_T}{T} \geq \KL\big(q\|w_{\bv}\big)-\epsilon \eqsp.
\end{align*}
Since $\epsilon>0$ is arbitrary, we have
\begin{align*}
\liminf_{T\to\infty} -\frac{\log \beta_T}{T} \geq \KL \big(q\|w_{\bv}\big) \eqsp.
\end{align*}
By Chernoff-Stein lemma (see \Cref{lemma:chernoff_stein}) that controls the decay of the Type-II error probability, no level-$\gamma$ test can achieve a larger exponent, hence
\begin{align*}
\lim_{T\to\infty} -\frac{1}{T}\log \beta_T = \mathrm{KL}\,\big(q\|w_{\bv}\big),
\end{align*}
showing that the e-process threshold test is asymptotically optimal.

\subsection{Proofs in the batch setting}

\folkbatchtesting*

\begin{proof}[Proof of \Cref{theorem:batch_folk}]We start by proving point (2). Let $t \geq 0$. In what follows, we write $k_t = \lfloor t / L \rfloor$, so $k_t L \leq t < (k_t + 1)L$. By \Cref{eq:batchtypeierror}, there exists $\mathsf{H}_t\subset\cH_t$, with $\P^{\bbs_{\bv}}(\hist_t \in\mathsf{H})\geq 1 - q_L$ such that for any $\P^{\bbs_{\bv}}(\kappa_{\bphi}\geq k+1 \,|\,\kappa_{\bphi}\geq k\,,\,\hist_t)\geq 1-p_L$. For the rest of the proof, we pick a $\hist_t \in\mathsf{H}_t$.

First, assume $\mathsf{H}_t \subset\{\kappa_{\bphi}<k_t\}$, so we already are in the punishment phase. By definition of $\bbs_{\bv}$ in \eqref{def:grim}, players only play $\boldsymbol{b}=( b^1,\ldots, b^N)$ from time $t$ on. On the one hand, sticking to $\bars^i_{\bv}$ for player $i$ yields
\begin{align*}
    U^i (\bbs_{\bv}\,;\,\hist_t)=(1-\beta)\,\sum_{\ell=t}^\infty \beta^{\ell-t}u^i (b^i , b^{-i})=u^i (b^i , b^{-i})=\underline{u}^i\eqsp.
\end{align*}
On the other hand, deviating to some $s\in\cS$ yields a payoff
    \begin{align*}
        U^i (s, \bars^{-i}_{\bv}\,;\,\hist_t)&=(1-\beta)\,\E^{(s,\bars^{-i}_{\bv})}\parentheseDeux{\sum_{\ell=t}^\infty \beta^{\ell-t}u^i (w^i_{\ell}, b^{-i})\,|\,\hist_t}\\
        &\leq (1-\beta)\,\sum_{\ell=t}^\infty \beta^{\ell-t}u^i (b ^i, b^{-i}) \\
        & =u ^i (b^i , b^{-i}) \\
        & =\underline{u}^i \\
        & = U^i (\bbs_{\bv}\,;\, \hist_{t}) \eqsp.
    \end{align*}
    The last inequality stems from the fact that $(b^i , b^{-i})$ is Nash equilibrium. Therefore, the HP-SPNE condition is satisfied.
   
    Now, assume $\mathsf{H}_t\subset \{k_t \leq \kappa_{\bphi}\}$ so players have not entered the punishment phase yet. Our proof proceeds in three steps: (i) we lower bound $U^i (\bbs_{\bv}\,;\,\hist_{t})$ by a term that depends on $p_L$, (ii) upper bound $U^i (s,\bars^{-i}_{\bv}\,;\,\hist_{t})$ by another term that depends on $\Delta_L$, and (iii) show that under the conditions of the theorem, the lower bound is greater than upper bound, up to an approximation term.

 We first lower bound $U^i (\bbs_{\bv}\,;\,\hist_{t})$. We have
\begin{align}
    U^i(\bbs_{\bv}\,;\,\hist_{t}) & \nonumber = (1-\beta)\,\E^{\bbs_{\bv}}\parentheseDeux{\2{\kappa_{\bphi}=k_{t}}\sum_{\ell=t}^\infty \beta^{\ell-t}u^i (w^i_{\ell}, w^{-i}_{\ell})+\2{\kappa_{\bphi}>k_{t}}\sum_{\ell=t}^\infty \beta^{\ell-t}u^i (w^i_{\ell}, w^{-i}_{\ell})\,|\,\hist_t}\nonumber\\
    &\geq (1-\beta)\,\E^{\bbs_{\bv}}\parentheseDeux{\2{\kappa_{\bphi}>k_{t}}\sum_{\ell=t}^\infty \beta^{\ell-t}u^i (w^i_{\ell}, w^{-i}_{\ell})\,|\,\hist_t}\nonumber \eqsp.
    \end{align}
Since $t < (k_{t}+1) L$ and $\beta^{t-\ell}\geq\beta^{t- k_t L}$ (because $t \geq k_t L$) and payoffs are positive, we have
\begin{align}
U^i(\bbs_{\bv}\,;\,\hist_t)&\geq (1-\beta)\E^{\bbs_{\bv}}\parentheseDeux{\2{\kappa_{\bphi}>k_{t}}\sum_{\ell=(k_{t} + 1)L}^{\kappa_{\bphi} L -1 }\beta^{\ell-k_{t} L}u^i (w^i_{\ell},w^{-i}_{\ell})\,|\,\hist_t}\label{eq:lastlinealign} \eqsp.
\end{align}
By \Cref{lemma:barkappa}, we may enlarge the probability space and construct 
$\bar\kappa$ from an i.i.d. sequence $(U_k)_{k\geq k_t}$ independent of the play induced by $\bbs_{\bv}$.
We denote expectation on this product space by $\widetilde{\E}^{\bbs_{\bv}}$. By \Cref{lemma:barkappa}, $\bar{\kappa}\leq\kappa_{\bphi}$ almost surely (conditional on $\hist_t$), and in particular $\2{\bar{\kappa}>k_t}\leq\2{\kappa_{\bphi}>k_t}$. Consequently, we have
\begin{align}
U^i(\bbs_{\bv}\,;\,\hist_t)&\geq  \, (1-\beta)\,\widetilde{\E}^{\bbs_{\bv}}\left[\2{\bar{\kappa}>k_t}\sum_{\ell=(k_t+1) L}^{L \bar{\kappa}-1} \beta^{\ell-k_t L} u^i(w^i_{\ell},w^{-i}_{\ell})\,|\,\hist_t\right]\nonumber\\
&=  (1-\beta)\,\widetilde{\E}^{\bbs_{\bv}}\parentheseDeux{\2{\bar{\kappa}>k_t}\,\sum_{k=k_t+1}^{\bar{\kappa}-1}\beta^{(k-k_t)L}\sum_{\ell=0}^{L-1} \beta^\ell u^i(w^i_{kL+\ell},w^{-i}_{kL+\ell})\,|\,\hist_t}\nonumber\\
    &= (1-\beta) \,\widetilde{\E}^{\bbs_{\bv}}\parentheseDeux{\2{\bar{\kappa}>k_t}\,{\sum_{k=1}^{\bar{\kappa}-k_t-1}\beta^{kL}\sum_{\ell=0}^{L-1} \beta^\ell v^i}} \nonumber\\
    &= (1-\beta)\,\widetilde{\E}^{\bbs_{\bv}}\parentheseDeux{\2{\bar{\kappa}>k_t}\,\sum_{k=1}^{\bar{\kappa}-k_t-1}\beta^{kL}\frac{1-\beta^L }{1-\beta}v^i}\nonumber\\
    &= \beta \, \E\parentheseDeux{\2{\bar{\kappa}>k_t}\,\beta^L {\frac{1-\beta^{(\bar{\kappa}-k_t-1)L}}{1-\beta^L}(1-\beta^L)\,v^i}}\nonumber\\
&=\beta^{L}\,v^i\,\EE{\2{\bar{\kappa}>k_t}(1-\beta^{L(\bar{\kappa}-k_t -1)})} \eqsp.
\end{align}
By \Cref{lemma:barkappa}, we know that $\bar{\kappa}\sim\mathrm{Geometric}(p_L)$. Consequently, we can compute the expectation term explicitly and obtain
\begin{align}
U^i(\bbs_{\bv}\,;\,\hist_t)&\geq \beta^{L}\,v^i\sum_{k\geq k_t + 1}(1-p_L)^{k-k_t-1}p_L (1-\beta^{L(k-k_t -1)})\nonumber\\
        &= \beta^{L}\,v^i \parenthese{1-p_L \sum_{k\geq0}[(1-p_L)\beta^L]^k}\nonumber\\
        &=\beta^{L}\,v^i\,\parenthese{1-p_L (1-(1-p_L)\beta^L)^{-1}}\nonumber\\
    &  \geq \beta^{L} \,(v^{i} - \,(1-\beta^L)^{-1}\,p_L )\eqsp,\label{equation:lowerboundcoop}
 \end{align}where we used $p_L\in(0,1)$, $v^i\in[0,1]$  and $\beta^L\leq 1$ in the last line.

    We now work on an upper bound for $U^i (s, \bars^{-i}_{\bv}\,;\,\hist_{t})$, where $s\in\cS^i$ is any deviation for player $i$. In what follows, we write $u^i(A^i,w^{-i})=\sum_{a^{-i}}u^i(A^i, A^{-i})w^{-i}[a^{-i}]$ for any $A^i\in\cA^i$. We have
\begin{align}
    U(s,\bars^{-i}_{\bv}\,;\,\hist_{t})&=\E^{(s, \bars^{-i}_{\bv})}\parentheseDeux{\sum_{\ell=t}^{\infty}\beta^{\ell-t}u^i(A^i_\ell , w^{-i}_{\ell})\,|\,\hist_t}.\nonumber \eqsp.
\end{align}
Since $\ell\geq t\geq k_t L$, starting the sum at $k_t L$ results in an upper bound. Moreover, given that $t<(k_t +1)L$,  we have $\beta^{\ell-t}\leq\beta^{\ell-(k_t+1) L }$ for any $\ell\geq t$. Thus
\begin{align}
    U(s,\bars^{-i}_{\bv};\,\hist_{t})&\leq(1-\beta)\E^{(s, \bars^{-i}_{\bv})}\parentheseDeux{\sum_{\ell=k_tL }^{\infty}\beta^{\ell-(k_t +1)L}u^i(A^i_\ell, w^{-i}_{\ell})\,|\,\hist_t}\nonumber\\
    &= (1-\beta)\beta^{-(k_t+1)L}\,\E^{(s,\bars^{-i}_{\bv})}\left[\2{\kappa_{\bphi}>k_t}\parenthese{\sum_{k=k_t}^{\kappa_{\bphi}-1} \sum_{\ell=kL}^{(k+1)L-1} \beta^{\ell} u^i (A^i_\ell, w^{-i}_{\bv}) }\,|\,\hist_t\right] \nonumber\\
        & \quad +(1-\beta)\,\beta^{-(k_t+1)L}\,\E^{(s,\bars^{-i}_{\bv})}\parentheseDeux{\sum_{\ell=\kappa_{\bphi} L }^{(\kappa_{\bphi}+1)L-1}\beta^{\ell} u^i (A^i_{\ell},w^{-i}_{\bv})\,|\,\hist_t}\nonumber\\
    & \quad +(1-\beta)\,\beta^{-(k_t+1)L}\,\E^{(s,\bars^{-i}_{\bv})}\parentheseDeux{\sum_{\ell\geq(\kappa_{\bphi}+1)L}\beta^{\ell} u^i (A^i_\ell, b^{-i})\,|\,\hist_t}\nonumber\\
    & \leq \underbrace{(1-\beta)\beta^{-(k_t+1)L}\,\E^{(s,\bars^{-i}_{\bv})}\parentheseDeux{\2{\kappa_{\bphi}>k_t}\sum_{k=k_t}^{\kappa_{\bphi}-1}\sum_{\ell=kL}^{(k+1)L-1} \beta^\ell u^i (A^i_{\ell},w^{-i}_{\bv})\,|\,\hist_t}}_{(\star\star)}\nonumber\\
    & \quad +(1-\beta^L)\,\EE{\beta^{(\kappa_{\bphi}-k_t-1) L }\,|\,\hist_t}\bar{u}^i+\EE{\beta^{(\kappa_{\bphi}-k_t)L}\,|\,\hist_t}\underline{u}^i\label{eq:withstar}\end{align}

    We now bound $(\star\star)$. Since $\{k<\kappa_{\bphi}\}\subset\ \{\phi^i_k =0\}$, we have $\2{k<\kappa_{\bphi}}=\2{k<\kappa_{\bphi}}\2{\phi^i_k = 0}$. It then follows that
    \begin{align*}
        &\E^{(s,\bars^{-i}_{\bv})}\parentheseDeux{\2{\kappa_{\bphi}>k_t}\sum_{k=k_t}^{\kappa_{\bphi}-1}\sum_{\ell=kL}^{(k+1)L-1}\beta^\ell \,u^i(A^i_\ell, w^{-i}_{\bv})\,|\,\hist_t}\\
        &\quad= \E^{(s,\bars^{-i}_{\bv})}\parentheseDeux{\2{\kappa_{\bphi}>k_t}\sum_{k=k_t}^\infty \2{k<\kappa_{\bphi}}\sum_{\ell=kL}^{(k+1)L-1}\beta^\ell u^i(A^i_\ell, w^{-i}_{\bv})\,|\,\hist_t}\\
        &\quad\leq\E^{(s,\bars^{-i}_{\bv})}\parentheseDeux{\2{\kappa_{\bphi}>k_t}\sum_{k=k_t}^\infty \2{k<\kappa_{\bphi}}\2{\phi^i_k = 0}\sum_{\ell=kL}^{(k+1)L-1}\beta^\ell u^i(A^i_\ell, w^{-i}_{\bv})\,|\,\hist_t}\\
        &\quad\leq\E^{(s,\bars^{-i}_{\bv})}\parentheseDeux{\2{\kappa_{\bphi}>k_t}\sum_{k=k_t}^\infty \2{k<\kappa_{\bphi}}\sum_{\ell=kL}^{(k+1)L-1}\beta^\ell (v^i+ \Delta_L)\,|\,\hist_t}\qquad\text{by  }\Cref{eq:utilitygain}\\
        &\quad\leq\E^{(s,\bars^{-i}_{\bv})}\parentheseDeux{\2{\kappa_{\phi}>k_t}\sum_{k=k_t}^{\kappa_{\bphi}-1}\beta^{kL}\frac{1-\beta^L}{1-\beta}(v^i+\Delta_L)\,|\,\hist_t}\\
        &\leq\beta^{k_t L}\frac{1-\E^{(s,\bars^{-i}_{\bv})}[\beta^{(\kappa_{\bphi}- k_t -1)L}\,|\,\hist_t]}{1-\beta^L}\frac{1-\beta^L}{1-\beta}(v^i +\Delta_L)\eqsp,
    \end{align*}
    and therefore
    $$
(\star\star)\leq \beta^{-L}(1-\E^{(s,\bars^{-i}_{\bv})}[\2{\kappa_{\bphi}>k_t}\beta^{(\kappa_{\bphi}-k_t-1)L}\,|\,\hist_t])(v^i+\Delta^L) \eqsp.
    $$
Thus, plugging this back in \eqref{eq:withstar} gives that
    \begin{align}
    U(s,\bars^{-i}_{\bv}\,;\,\hist_{t})&  \leq \beta^{-L}\,(v^i + \Delta_L) + \,\E^{(s, \bars^{-i}_{\bv})}\parentheseDeux{\beta^{(\kappa_{\bphi}-k_t-1)L}\,|\,\hist_t}\parenthese{(1-\beta^L)\bar{u}^i + \beta^L \underline{u}^i -\beta^{-L}(v^i + \Delta_L)}\nonumber\\
    &\leq \beta^{-L}\,(v^i + \Delta_L) + \E^{(s,\bars^{-i}_{\bv})}\parentheseDeux{\beta^{(\kappa_{\bphi}-k_t-1)L}\,|\,\hist_t}\parenthese{(1-\beta^L)\bar{u}^i + \beta^L \underline{u}^i -(v^i + \Delta_L)} \eqsp.
\end{align}
Since we assume that $\beta^L \geq (\bar{u}^i - v^i -  \Delta_L)(\bar{u}^i - \underline{u^i})^{-1}$, the second term is negative and we are left with
\begin{align}
    U(s,\bars^{-i}_{\bv}\,;\,\hist_{t})&\leq \beta^{-L}(v^i + \Delta_L)\eqsp.\label{equation:upperboundeviation2}
\end{align}
Finally, by \eqref{equation:lowerboundcoop} and \eqref{equation:upperboundeviation2}
\begin{align*}
U^i(s, \bars^{-i}_{\bv}\,;\,\hist_{t})&\leq \beta^{-L}(v^i +\Delta_L) \\
&= \beta^{L}v^i - \beta^{L}\frac{p_L}{1-\beta^L}+\beta^{-L}(1-\beta^{2L})v^i +\beta^{-L}\Delta_L + \beta^{L}\frac{p_L}{1-\beta^L}\\
&\leq U^i(\bbs_{\bv}\,;\,\hist_{t})+\beta^{-L}(1-\beta^{2L})v^i +\beta^{-L}\Delta_L + \beta^{L}\frac{p_L}{1-\beta^L}\\
&\leq U^i(\bbs_{\bv}\,;\,\hist_{t})+\beta^{-L}(1-\beta^{2L}+\Delta_L ) + \frac{p_L}{1-\beta^L}
\end{align*}Since  $\hist_{t}\in\mathsf{H}_t$, with $\P^{\bbs_{\bv}}(\hist_t\in\mathsf{H}_t)\geq 1 -q_L$, this proves that $\bbs_{\bv}$ is a ($\beta^{-L}(1-\beta^{2L}+\Delta_L) + p_L(1-\beta^L)^{-1}\,,\,q_L)$)-HP-SPNE.

It remains to prove point (1). By the exact same lines of computation as those leading to \eqref{equation:lowerboundcoop} starting from $t=0$, we obtain $U^i (\bbs_{\bv}\,;\, \hist_t)\geq \beta^{L}(1-p_L(1-\beta^L)^{-1})v^i$.  Regarding the upper bound, we note that by assumption, $v^i\geq\underline{u}^i$, and therefore
\begin{align*}
        U^i (\bbs_{\bv})&=(1-\beta)\,\E^{\bbs_{\bv}}\parentheseDeux{\sum_{\ell=0}^{(\kappa_{\bphi}+1)L-1}\beta^{\ell}v^i+\sum_{\ell=(\kappa_{\bphi}+1)L}^{\infty}\beta^{\ell}\underline{u}^i}\\
        &\leq (1-\beta)\,\E^{\bbs_{\bv}}\parentheseDeux{\sum_{\ell=0}^{(\kappa_{\bphi}+1)L-1}\beta^{\ell}v^i+\sum_{\ell=(\kappa_{\bphi}+1)L}^{\infty}\beta^{\ell}v^i}\\
        &= v^i\eqsp.
\end{align*}This concludes the proof. 

\end{proof}

In the following lemma, for $i\in[N]$ we denote by $U^i_k = u^i (a^i_k , w^{-i}_{\bv})=\sum_{a^{-i}}u^i (a^i_k, a^{-i})w^{-i}[a^{-i}]$ the utility associated to action $a^i_k \in\cA^i$ (when opponents play $w^{-i}_{\bv}$) for player $i$. For ease of notation, we consider that $U^i_1\geq\ldots\geq U^i_K$ and identify an action $a^i_k$ with its index $k$. For any action indexes $\bA=(A_0, \ldots, A_{L-1})\in[K]^L$, we define
$$
V^i (\bA)=\sum_{s=0}^{L-1}\beta^s \, U^i_{A_s}\eqsp.
$$
Likewise, for any $\sigma\in\Sigma(\bA)$, where $\Sigma(\bA)$ denotes the set of permutations over $\bA$, we define $$V^i _L (\sigma, \bA)=\sum_{s=0}^{L-1}\beta^{s}U^i_{\sigma(A_s)}\eqsp.$$
\begin{restatable}{lemma}{lemmaworstinstance}\label{lemma:worst_instance}
 Consider any even $L \in \N$ and sequence $\bA = (A^i_0,\dots,A^i_{L-1})\in[K]^L$. we have that
\begin{align*}
    \max_{\sigma \in \Sigma(\bA)} V^i(\sigma ,\bA)  &\leq V^i ( \bA) + \frac{(1-\beta^L)^2}{1-\beta} \eqsp.
\end{align*}
\end{restatable}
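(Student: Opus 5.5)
The plan is to reduce to a rearrangement argument and then bound the gap between the best and worst orderings of the same multiset of utilities. Fix $\bA$ and write $x_{(0)}\geq x_{(1)}\geq\ldots\geq x_{(L-1)}$ for the values $U^i_{A_0},\ldots,U^i_{A_{L-1}}$ sorted in non-increasing order. Since $u^i$ takes values in $[0,1]$, each $U^i_k=u^i(a^i_k,w^{-i}_{\bv})$ lies in $[0,1]$, so $x_{(s)}\in[0,1]$. The weights $\beta^0\geq\beta^1\geq\ldots\geq\beta^{L-1}$ are also non-increasing.

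By the rearrangement inequality, two bounds hold. First, $\max_{\sigma\in\Sigma(\bA)}V^i(\sigma,\bA)=\sum_{s=0}^{L-1}\beta^s x_{(s)}$, since the largest utilities go with the largest discount weights. Second, for the original ordering, $V^i(\bA)\geq\sum_{s=0}^{L-1}\beta^{L-1-s}x_{(s)}$, since the reversed pairing is the minimum over all orderings. Hence
\begin{align*}
\max_{\sigma\in\Sigma(\bA)}V^i(\sigma,\bA)-V^i(\bA)\leq D:=\sum_{s=0}^{L-1}\bigl(\beta^s-\beta^{L-1-s}\bigr)x_{(s)}\eqsp.
\end{align*}

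Next, I pair index $s$ with index $L-1-s$ for $s=0,\ldots,L/2-1$; this uses that $L$ is even, so no middle term is left over. The coefficient of $x_{(L-1-s)}$ is exactly the negative of that of $x_{(s)}$, which gives
\begin{align*}
D=\sum_{s=0}^{L/2-1}\bigl(\beta^s-\beta^{L-1-s}\bigr)\bigl(x_{(s)}-x_{(L-1-s)}\bigr)\eqsp.
\end{align*}
For $s<L/2$ both factors are nonnegative. The second factor is at most $1$ because all values lie in $[0,1]$. Therefore
\begin{align*}
D\leq\sum_{s=0}^{L/2-1}\beta^s-\sum_{s=L/2}^{L-1}\beta^{s}=\frac{1-\beta^{L/2}}{1-\beta}-\beta^{L/2}\frac{1-\beta^{L/2}}{1-\beta}=\frac{(1-\beta^{L/2})^2}{1-\beta}\eqsp.
\end{align*}

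To conclude, note that $\beta^{L/2}\geq\beta^L$ for $\beta\in(0,1)$, so $0\leq 1-\beta^{L/2}\leq1-\beta^L$ and hence $(1-\beta^{L/2})^2\leq(1-\beta^L)^2$, which yields the claimed bound. The main point requiring care is correctly identifying the permutation maximizing and minimizing $V^i$ via rearrangement (ties are harmless). After that, the pairing step is the only place evenness of $L$ enters. I expect no serious obstacle; the bound obtained is in fact slightly stronger than stated.
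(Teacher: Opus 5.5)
Your proof is correct and follows the same route as the paper. You bound the gap by the difference between the non-increasing and non-decreasing orderings (rearrangement), pair index $s$ with $L-1-s$, bound each utility difference by $1$, and sum to get $(1-\beta^{L/2})^2/(1-\beta)\leq(1-\beta^L)^2/(1-\beta)$; your indexing is actually cleaner than the paper's write-up, which carries off-by-one slips (pairing with $L+1-t$, sums running to $L/2$).
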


\begin{proof}[Proof of \Cref{lemma:worst_instance}]
We fix a sequence $\bA$ of length $L$ with entries drawn from $[K]$. For ease of presentation, we drop the dependency on $i$ the following proof. 

First, assume $L$ is even. Since $\beta \in (0,1), (\beta^t)_{t \in \{0,\ldots,L-1\}}$ is decreasing. Hence, the ordering that maximizes the utility is to sort the $U_{A_t}$ nonincreasing order, and we define $\sigma^\downarrow$ the associated permutation. Similarly, the ordering that minmizes the utility is to sort the $U_{A_t}$ nondecreasing order, and we define $\sigma^\uparrow$ the associated permutation. Thus
\begin{align*}
\max_{\sigma\in\Sigma(\bA)}V^i(\sigma, \bA) - V^i (\bA)&\leq \max_{\sigma,\sigma'\in\Sigma(\bA)} V^i (\sigma, \bA)-V^i(\sigma', \bA)\\
& =\sum_{t=0}^{L-1} \beta^t\, U_{\sigma^{\downarrow}(A_t)} - \sum_{t=0}^{L-1} \beta^t\, U_{\sigma^{\uparrow}(A_t)} \\
& =\sum_{t=0}^{L-1} \beta^t (U_{\sigma^\downarrow(A_t)} -U_{\sigma^\uparrow(A_t)}) \eqsp.
\end{align*}
Since $\sigma^\uparrow(A_{t}) = \sigma^\downarrow(A_{L+1-t})$, it can be written as
\begin{align*}
\max_{\sigma,\sigma'\in\Sigma(\bA)} \big(V^i(\sigma,\bA)-V^i(\sigma', \bA)\big) &= \sum_{t=0}^{L/2} (\beta^t - \beta^{L+1-t})\, (U_{\sigma^{\downarrow}(A_t)}) -U_{\sigma^{\downarrow}(A_{L+1-t})}) \eqsp,
\end{align*}
where we paired the symmetric indices. By definition, for any $t \in [L/2]$, we have that $U_{\sigma^\downarrow(A_t)} -U_{\sigma^\downarrow(A_{L+1-t})} \leq U_1 - U_K\leq 1$, and thus
\begin{align*}
\max_{\sigma,\sigma'\in\Sigma(\bA)} \big(V^i(\sigma,\bA)-V^i( \sigma',\bA)\big) &  \leq  \sum_{t=1}^{L/2} (\beta^t - \beta^{L+1-t})  = (1- \beta^{L/2}) \sum_{t=0}^{L/2}\beta^t \\
    & = (1- \beta^{L/2}) (1-\beta^{L/2})/(1-\beta) \\
    & \leq \frac{(1- \beta^{L/2})^2}{1-\beta} \leq \frac{(1-\beta^L)^2}{1-\beta} \eqsp.
\end{align*}Note that the proof directly extends to the case where $L$ is odd by counting twice in the sum the term related to the action $A_{(L-1)/2}$ played mid-batch.
\end{proof}

\requirementsbatch*

\begin{proof}
    We start by proving point (1).
    Let $t \in\mathbb{N}$, $L>0$ and denote $k_t = \lfloor t / L\rfloor$. We aim to prove that there exists $\mathsf{H}_t\subset\cH_{t}$ such that
\begin{align*}
\P^{\bbs_{\bv}}(\hist_t\in\mathsf{H}_t)\geq 1 -2KN\exp\parenthese{-2\frac{L\delta^2}{K^2}}\eqsp,
\end{align*}
and for any $\hist_{t}\in\mathsf{H}_t$ and $k\geq k_t$,
\begin{align*}
\P^{\bbs_{\bv}}(\kappa_{\bphi}\geq k+1\,|\,\kappa_{\bphi}\geq k\,,\,\hist_t)\geq 1 -2KN\exp\parenthese{-2\frac{L\delta^2}{K^2}}\eqsp.
\end{align*}
In what follows, we define $t_0 = t - k_t L$. We also define the vectors $w^{i\leftarrow}_{t_0}\in[0,1]^{t_0}$ and $w^{i\rightarrow}_{t_0}\in[0,1]^{L-t_{0}}$ satisfying for any $\ell\in[K]$ and $i\in[N]$
$$
w^{i\leftarrow} _{t_0 ,\ell} = t_0 ^{-1}\sum_{s=0}^{t_0-1}\2{A^i_{k_t L + s} =a_\ell}\quad\text{and}\quad w^{i\rightarrow} _{t_{0},\ell} = (L-t_0)^{-1}\sum_{s=t_0}^{L-1}\2{A^i _{k_t L + s} = a_\ell}\eqsp,
$$
which are respectively the empirical frequency of action $a_k$ from the start of the batch $k_t$ to $t_0 - 1$, and from $t_0$ to the end of the batch $k_t$ for player $i$. Fix $q_L \in(0,1)$ and define $B\subset\cA^{\N}$ as follows
$$B=B_0 \times B_1 \times \ldots\times B_{k_{t}L-1}\times \bar{B}\ldots\eqsp,$$
where $B_k = \cA$ for any $t\in\N \setminus\{k_{t} L,\ldots,k_{t}L+t_0-1\}$ and
\begin{equation}
    \label{eq:firstpartnormone}
\bar{B}=\defEns{ (A^i_{k_{t}L+s})_{\substack{1\leq i \leq N \\ 0\leq s\leq t_0 -1}} \colon \: \text{for any }\ell\in[K]\text{ and }i\in[N] \, , \,\abs{w^{i\leftarrow}_{t_0,\ell}-w^i_{\bv}[a^i_{\ell}]}\leq \sqrt{\frac{\ln(2KN/q_{L})}{2t_0}}}\eqsp,
\end{equation}$\bar{B}$ is the set  of actions from round $k_t L$ to round $k_t L+t_{0}-1$ such that  the empirical frequency of each action $\ell\in[K]$ for each player $i\in[N]$ over  $\{k_t L ,\ldots,k_t L + t_0\} $ is not too far  from $w^i_{\bv}[a^i_{\ell}]$. Observe that applying Hoeffding's inequality to the independent and bounded variables $(\2{A^i_{k_t L +s}=a_\ell})_{s\in\{0,\ldots,t_{0}-1\}}$ for each $(\ell, i)\in[K]\times[N]$ along with a union bound over all pairs $(\ell, i)$ gives $\P^{\bbs_{\bv}} (\bar{B})\geq 1 -q_L$, and it follows by independence of draws from $\bw_{\bv}$ that $$\P^{\bbs_{\bv}}(B)=\P(B_0)\times \ldots\times P(B_{k_{t}L-1})\times \P(\bar{B}) = 1\,.\,\P(\bar{B})\geq 1-q_L\eqsp.$$ We  show that the following set of histories satisfies the condition we are aiming for
\begin{equation}
    \label{eq:historyset}
    \mathsf{H}_t=\{\kappa_{\bphi}<k_t\}\cup \{\,\{\kappa_{\bphi}\geq k_t\}\cap B \,\}\eqsp,
\end{equation}where we recall that $\kappa_{\bphi}=\min_{i\in[N]}\kappa_{\bphi^i}$. To see why $\P^{\bbs_{\bv}}(\mathsf{H}_t)\geq 1 -q_L$, observe that  $\mathsf{H}_t^{c}=\{\{\kappa_{\bphi}\geq k_{t}\}\cap\{\kappa_{\bphi}<k_{t}\}\}\cup \{\{\kappa_{\bphi}\geq k_{t}\}\cap B\}$ and it follows that
\begin{align*}
\P^{\bbs_{\bv}}(\mathsf{H}^{c}_t)&=\P^{\bbs_{\bv}}(\{\{\kappa_{\bphi}< k_{t}\}\cap\{\kappa_{\bphi}\geq k_{t}\}\}\cup \{\{\kappa_{\bphi}\geq k_{t}\}\cap B^c\})\\
&\leq 0+ \P^{\bbs_{\bv}}(\{\kappa_{\bphi}\geq k_{t}\}\,\cap\,B^c)\leq q_L\eqsp,
\end{align*}where the last inequality stems from the fact that $\P^{\bbs_{\bv}}(B )\geq 1 - q_L$.

Let $\hist_t \in\mathsf{H}_t$. We now prove that with $q_L=p_L=2KN\exp(-L\delta^2/K^2)$, $$\P^{\bbs_{\bv}}(\kappa_{\bphi}\geq k +1\,|\,\kappa_{\bphi}\geq k\,,\,\hist_t)\geq 1-p_L\eqsp,$$ for any $k\geq k_t$. If $\hist_t \subset \{\kappa_{\bphi}< k_t\}$ then the statement is trivially true since $\{\kappa_{\bphi}\geq k \}\cap\hist_t =\emptyset$. 

We therefore focus on the case $\hist_t \subset\{k_t \leq \kappa_{\bphi}\}\cap B$. Let $k\geq 0$ such that $k_t\leq k \leq\kappa_{\bphi}$. First, assume $k=k_t$. For any $i\in[N]$, we have
\begin{align}
\normone{\hat{w}^i_{k_{t}} - w^i_{\bv}}&=\sum_{\ell=1}^{K}\abs{\frac{1}{L}\sum_{s=0}^{L-1} \2{A^i _s = a_\ell}-w^i_{\bv}[a_{\ell}]}\nonumber\\
&=\sum_{\ell=1}^{K}\abs{\frac{t_0}{L}w^{i\leftarrow}_{t_{0},\ell}+\frac{L-t_0}{L}w^{i\rightarrow}_{t_{0},\ell}-\parenthese{\frac{t_0}{L}w^i_{\bv}[a_{\ell}]+\frac{L-t_0}{L}w^i_{\bv}[a_{\ell}]}}\nonumber\\
&\leq \sum_{\ell=1}^{K}\parenthese{\frac{t_0}{L}\abs{w^{i\leftarrow}_{t_{0},\ell}-w^i_{\bv}[a_{\ell}]}+\frac{L-t_0}{L}\abs{w^{i\rightarrow}_{t_{0},\ell}-w^i_{\bv}[a_{\ell}]}}\nonumber \eqsp.
\end{align}
Since $\hist_t \in B$
\begin{align}
\normone{\hat{w}^i_{k_{t}} - w^i_{\bv}} & \leq \sum_{\ell=1}^{K}\parenthese{\frac{t_0}{L}\sqrt{\frac{\ln(2KN/q)}{2t_0}}+\frac{L-t_0}{L}\abs{w^{i\rightarrow}_{t_{0},\ell}-w^i _{\bv}[a_{\ell}]}}\nonumber \eqsp.
\end{align}
Thus, applying Hoeffding inequality  to $|w^{i\rightarrow}_{t_{0},\ell}-w^i_{\bv}[a_{\ell}]|$ for any $\ell\in[K]$ and $i\in[K]$, along with an union bound on all pairs $(\ell, i)$ gives
\begin{align}
\P^{\bbs_{\bv}}\parenthese{\bigcap_{i\in[N]}\defEns{\lVert\hat{w}^i_{k_{t}} -w^i_{\bv}\rVert_1 \leq \sum_{k=1}^{K}\parenthese{\frac{t_0}{L}\sqrt{\frac{\ln(2KN/q)}{2t_0}}+\frac{L-t_0}{L}\sqrt{\frac{\ln(2KN/q)}{2(L-t_0)}}}}\,\left.|\right.\kappa_{\bphi}\geq k_t\,,\,\hist_t}&\geq 1- q_L\eqsp, \nonumber 
\end{align}
Each summand in the above sum is maximized when $t_0=L/2$, so we also have
\begin{align}
\P^{\bbs_{\bv}}\parenthese{\bigcap_{i\in[N]}\defEns{\lVert\hat{w}^i_{k_t} -w^i_{\bv}\rVert \leq K\sqrt{\frac{\ln(2KN/q)}{L}}}\,|\,\kappa_{\bphi}\geq k_{t}\,,\,\hist_t}&\geq 1- q_L\eqsp,\nonumber
\end{align}
Then, picking $q_L = 2KN\exp\parenthese{-2\frac{L\delta^2}{K^2}}$ yields
\begin{align}
\P^{\bbs_{\bv}}\parenthese{\bigcap_{i\in[N]}\defEns{\normone{\hat{w}^i_{k_t}-w^i_{\bv} }\leq\delta}\,|\,\kappa_{\bphi}\geq k_t\,,\,\hist_t}\geq 1 - 2KN\exp\parenthese{-\frac{L\delta^2}{K^2}}\eqsp.\label{eq:intersection}
\end{align}
Since conditionally on $\{\kappa_{\bphi}\geq k_t\}$, $\{\kappa_{\bphi}\geq k_t +1\}=\cap_{i\in[N]}\{\phi^i_{k_{t}}=0\}=\cap_{i\in[N]}\defEns{\normone{\hat{w}^i_{k_t}-w^i_{\bv} }\leq\delta}$, we have by \eqref{eq:intersection}
\begin{align*}
\P^{\bbs_{\bv}}\parenthese{\kappa_{\bphi}\geq k_t +1\,|\,\kappa_{\bphi}\geq k_t \,,\,\hist_t}&=\P^{\bbs_{\bv}}\parenthese{\bigcap_{i\in[N]}\{\phi^i_{k_t}=0\}\,|\,\kappa_{\bphi}\geq k\,,\,\hist_t}\\
&=\P^{\bbs_{\bv}}\parenthese{\bigcap_{i\in[N]}\defEns{\normone{\hat{w}^i_{k_t}-w^i_{\bv} }\leq\delta}\,|\,\kappa_{\bphi}\geq k_t\,,\,\hist_t}\geq 1-q_L\eqsp,
\end{align*}which establishes the result.

Now, assume $k>k_t$. Applying Hoeffding's inequality to the bounded, independent random variables $\{\1\{A^{i} _s=a_\ell\}\}_{s\in\cB_k}$ for any $\ell\in[K]$,

$$
\P^{\bbs_{\bv}}\parenthese{\normone{\hat{w}^i_{k,\ell} - w^i_{\bv}[a^i_{\ell}]}\geq \sqrt{\frac{\ln(2KN/q_L)}{2t_0}}\,|\,\hist_t\,,\,\kappa_{\bphi}\geq k}\leq \frac{q_L}{KN}\eqsp,
$$where we removed the conditioning on $\hist_t$ because $(A^i _s)_{s\in\cB_k}$ are independent from $\hist_t$ (because $k>k_t$). From there, a union bound on all pairs $(i,\ell)$ and the same reasoning as before yields the desired result. 

 We now prove point 2.  For ease of notation, we drop the $i$ index. In what fellows, we define
    $$
\xi^i = \parenthese{\sum_{a^{-i}}u^i(a_1,a^{-i})w^{-i}_{\bv}[a^{-i}],\ldots,\sum_{a^{-i}}u^i(a_K,a^{-i})w^{-i}_{\bv}[a^{-i}]}\eqsp,
    $$so for any $w\in\Delta(\cA^i)$, $u^i (w,w^{-i}_{\bv})=\langle w,\xi^i\rangle.$We also introduce the following notations
    \begin{align*}
        Z=\sum_{t=0}^{L-1}\beta^t
        &\quad\text{and}\quad \hat{w}^{(\beta)}_{k} = Z^{-1}\parenthese{\sum_{t=0}^{L-1}\beta^t \2{A_{kL+t} = a_1},\ldots,\sum_{t=0}^{L-1}\beta^t \2{A_{kL+t} = a_K}}\eqsp.
    \end{align*}
    Note that $w^{(\beta)}_{k}$ is a  probability measure over $\{a_1,\ldots,a_K\}$. Finally, for any permutation $\sigma \colon \{1,\ldots,L\}\rightarrow\{1,\ldots,L\}$ over actions taken in batch $k$, we denot by
$$
\hat{w}^i_{k,\sigma} = L^{-1}\parenthese{\sum_{s=0}^{L-1} \2{A^i _{kL + \sigma(1)}=a_1},\ldots, \sum_{s=0}^{L-1} \2{A^i _{kL + \sigma(L)}=a_K}}\eqsp.
$$Observe that $\hat{w}^i _k = \hat{w}^i _{k,\sigma}$ and therefore $\2{\normone{\hat{w}^i _{k,\sigma}-\tilde{w}^i}\leq \delta}=\2{\normone{\hat{w}^i _{k}-\tilde{w}^i}\leq \delta}$. In other words, player $i$ can shuffle actions within any batch to increase its payoff without being detected. Observe that
    \begin{align}
    \2{\normone{\hat{w}^i_k - &w^i_{\bv}}\leq\delta}\parenthese{\sum_{t=kL}^{(k+1)L-1}\beta^t u^i (A^i_{t}, w^{-i}_{\bv})-\sum_{t=kL}^{(k+1)L-1}\beta^t u^i (w^i_{\bv}, w^{-i}_{\bv})}
        \nonumber\\
        &=\2{\normone{\hat{w}^i_k - w^i_{\bv}}\leq\delta}\,\beta^{kL}\parenthese{\sum_{t=0}^{L-1}\beta^t u^i (A^i_{kL+t}, w^{-i}_{\bv})-\sum_{t=0}^{L-1}\beta^t u^i (w^i_{\bv}, w^{-i}_{\bv})}
        \nonumber\\
        &\leq \max_{\sigma}\defEns{\2{\normone{\hat{w}^i_{\sigma,k} - w^i_{\bv}}\leq\delta}\,\beta^{kL}\parenthese{\sum_{t=0}^{L-1}\beta^t u^i (A^i_{kL+\sigma(t)}, w^{-i}_{\bv})-\sum_{t=0}^{L-1}\beta^t u^i (w^i_{\bv}, w^{-i}_{\bv})}}\nonumber\\
        &= \2{\normone{\hat{w}^i_{k} - w^i_{\bv}}\leq\delta}\,\beta^{kL}\parenthese{\max_{\sigma}\defEns{\sum_{t=0}^{L-1}\beta^t u^i (A^i_{kL+\sigma(t)}, w^{-i}_{\bv})}-\sum_{t=0}^{L-1}\beta^t u^i (w^i_{\bv}, w^{-i}_{\bv})}\nonumber\\
        &\leq \2{\normone{\hat{w}^i_{k} - w^i_{\bv}}\leq\delta}\,\beta^{kL}\parenthese{\sum_{t=0}^{L-1}\beta^t u^i (A^i_{kL+t}, w^{-i}_{\bv})+\frac{(1-\beta^L)^2}{1-\beta}-\sum_{t=0}^{L-1}\beta^t u^i (w^i_{\bv}, w^{-i}_{\bv})}\quad \text{by \Cref{lemma:worst_instance}}\nonumber\\
        &=\2{\normone{\hat{w}^i_k - w^i_{\bv}}\leq\delta}\beta^{kL}\,Z\,\parenthese{\langle \hat{w}^{(\beta)}_{k} - w^i_{\bv}\,,\,\xi^i\rangle+Z^{-1}\frac{(1-\beta^L)^2}{1-\beta}}\nonumber \\
        &\leq \2{\normone{\hat{w}^i_k - w^i_{\bv}}\leq\delta}\beta^{kL}\,Z\parenthese{\normone{\hat{w}^{(\beta)}_{k}-w^{i}_{\bv}}\lVert{\xi^i}\rVert_{\infty}+(1-\beta^L)}\nonumber\\
        &\leq \2{\normone{\hat{w}^i_k - w^i_{\bv}}\leq\delta}\beta^{kL}\,Z\parenthese{\normone{\hat{w}^{(\beta)}_{k}-w^{i}_{\bv}}+(1-\beta^L)}\qquad\text{(because $u^i(\ba)\leq 1$ for any $\ba\in\cA$)}\nonumber\\       &\leq\2{\normone{\hat{w}^i_{k}-w^{i}_{\bv}}\leq\delta}Z\,\beta^{kL}\,\parenthese{\normone{\hat{w}^{(\beta)}_{k}-\hat{w}}+\normone{\hat{w}-w^i_{\bv}}+(1-\beta^L)}\nonumber\\
       &\leq Z\,\beta^{kL}\,\parenthese{\sum_{t=0}^{L-1} \abs{\frac{\beta^t}{Z}-\frac{1}{L}}+\delta +(1-\beta^L)}\label{eq:normhattrue}\eqsp.
    \end{align}
    We now bound $\sum_{t=0}^{L-1}\left|\frac{\beta^t}{Z} - \frac{1}{L}\right|$. For each $t\in\{0,\ldots,L-1\}$, applying the triangle inequality gives
$$
\left|\frac{\beta^t}{Z} - \frac{1}{L}\right|
\leq \left|\frac{\beta^t}{Z} - \frac{\beta^t}{L}\right|
   + \left|\frac{\beta^t}{L} - \frac{1}{L}\right|
= \beta^t\left|\frac{1}{Z}-\frac{1}{L}\right|
  + \frac{1-\beta^t}{L}\eqsp.
$$
Then, summing over $t=0,\dots,L-1$ yields
$$
\sum_{t=0}^{L-1}\left|\frac{\beta^t}{Z} - \frac{1}{L}\right|
\leq \left|\frac{1}{Z}-\frac{1}{L}\right|\sum_{t=0}^{L-1}\beta^t
   + \frac{1}{L}\sum_{t=0}^{L-1}(1-\beta^t)\leq \left|1-\frac{Z}{L}\right| + \frac{L-Z}{L}\eqsp,
$$
Because $\beta^t \leq 1$ for all $t$, we have $Z \leq L$, and therefore $
\left|1-\frac{Z}{L}\right| = 1-\frac{Z}{L} = \frac{L-Z}{L}\eqsp.
$ Hence
$$
\sum_{t=0}^{L-1}\left|\frac{\beta^t}{Z} - \frac{1}{L}\right|
\leq 2\,\frac{L-Z}{L}\eqsp.
$$
Finally, observe that
$$
L-Z = \sum_{t=0}^{L-1}(1-\beta^t)
\leq \sum_{t=0}^{L-1}(1-\beta^{L})
= L(1-\beta^{L})\eqsp,
$$
where we used $\beta^t \geq \beta^L$ for $0\leq t \leq L-1$. Therefore
\begin{align*}
\sum_{t=0}^{L-1}\left|\frac{\beta^t}{Z} - \frac{1}{L}\right|
\leq 2(1-\beta^{L})\eqsp.
\end{align*}
Plugging this in \eqref{eq:normhattrue} yields
\begin{align}
        \2{\normone{\hat{w}^i_k - &w^i_{\bv}}\leq\delta}\parenthese{\sum_{t=kL}^{(k+1)L-1}\beta^t u^i (A^i_{t}, w^{-i}_{\bv})-\sum_{t=kL}^{(k+1)L-1}\beta^t u^i (w^i_{\bv}, w^{-i}_{\bv})}
        \nonumber\\
        &\leq Z\,\beta^{kL}\parenthese{2(1-\beta^L)+\delta+\frac{(1-\beta^L)^2}{1-\beta}}\nonumber\\
        &=\sum_{t=kL}^{(k+1)L-1}\beta^t (2(1-\beta^L)+\delta+(1-\beta^L))\eqsp,
\end{align}
which establishes the result.   
\end{proof}

\corollaryfolkbatch*

\begin{proof}[Proof of \Cref{corollary:folk_batch}]
Fix $\varepsilon\in(0,1]$. By assumption and with our parameters, we have
\begin{equation}
\label{eq:valeursdeltal_alt_v3}
1-\frac{\varepsilon}{16}\leq \beta^L\leq1-\frac{\varepsilon}{32} \; ,
\qquad
\delta=\frac{\varepsilon}{16} \; ,
\qquad
L=\frac{256K^2}{\varepsilon^2}\ln\left(\frac{128K}{\varepsilon^2}\right) \eqsp.
\end{equation}
Recall that
\begin{align*}
\Delta_L=\delta+3(1-\beta^L) \; ,\qquad
p_L=2KN\exp\left(-\frac{L\delta^2}{K^2}\right) \eqsp.
\end{align*}
We first bound $1-\beta^{2L}+\Delta_L$. Using $1-\beta^{2L}=(1-\beta^L)(1+\beta^L)\leq 2(1-\beta^L)$, we get
\begin{equation}
\label{eq:firstterm_bound1_v3}
1-\beta^{2L}+\Delta_L
\leq 2(1-\beta^L)+\delta+3(1-\beta^L)
=\delta+5(1-\beta^L)\eqsp.
\end{equation}
Moreover, from \eqref{eq:valeursdeltal_alt_v3} we have $1-\beta^L\leq \varepsilon/16$ and $\delta=\varepsilon/16$, hence
\begin{equation}
\label{eq:firstterm_bound2_v3}
1-\beta^{2L}+\Delta_L
\leq \frac{\varepsilon}{16}+\frac{5\varepsilon}{16}
=\frac{3\varepsilon}{8} \eqsp.
\end{equation}
We now bound the prefactor $\beta^{-L}$. Since $\beta^L\geq 1-\varepsilon/16$, we immediately obtain
\begin{equation}
\label{eq:beta_prefactor_v3}
\beta^{-L}\leq \frac{1}{1-\varepsilon/16}\leq \frac{16}{15} \eqsp,
\end{equation}
where the last inequality uses $\varepsilon\leq 1$. Combining \eqref{eq:firstterm_bound2_v3} and \eqref{eq:beta_prefactor_v3} yields
\begin{equation}
\label{eq:easy_part_alt_v3}
\beta^{-L}\bigl(1-\beta^{2L}+\Delta_L\bigr)
\leq \frac{16}{15}\cdot\frac{3\varepsilon}{8}
=\frac{2\varepsilon}{5} \eqsp.
\end{equation}
We now bound $p_L/(1-\beta^L)$: writing $s=2KN\exp(-L\delta^2/K^2)$, and using the fact that $1-\beta^L \geq \varepsilon/32$ gives
\begin{equation}
\label{eq:denom_lower_alt2_v3}
\frac{p_L}{1-\beta^L}=\frac{s}{1-\beta^L}\leq \frac{32s}{\varepsilon}\eqsp.
\end{equation}
Now compute $s$ using our choice of $L$ and $\delta$
\begin{align*}
s=2KN\exp \left(-\frac{L\delta^2}{K^2}\right)
=
2KN\exp \left(-\frac{L}{K^2}\cdot\frac{\varepsilon^2}{256}\right)
=
2KN\exp \left(-\ln \left(\frac{128KN}{\varepsilon^2}\right)\right)
=
\frac{\varepsilon^2}{64} \eqsp.
\end{align*}
Plugging this into \eqref{eq:denom_lower_alt2_v3} yields
\begin{equation}
\label{eq:hard_part_alt_v3}
\frac{p_L}{1-\beta^L}
\leq
\frac{32}{\varepsilon}\cdot\frac{\varepsilon^2}{64} = \frac{\varepsilon}{2}\eqsp.
\end{equation}
Thus, by \eqref{eq:easy_part_alt_v3} and \eqref{eq:hard_part_alt_v3}, we obtain
\begin{align*}
\beta^{-L}(1-\beta^{2L}+\Delta_L)+\frac{p_L}{1-\beta^L}
\leq
\frac{2\varepsilon}{5}+\frac{\varepsilon}{2}
=
\frac{9\varepsilon}{10}
\leq \varepsilon\eqsp.
\end{align*}
Finally, observe that $p_L =\varepsilon^2 / 64\leq \varepsilon$. All in all, we obtain that $\bbs_{\bv}$ is a $(\varepsilon, \varepsilon)-$SPNE.

For point (ii), we seek a lower bound on
\begin{align*}
\beta^{L}\left(1-\frac{p_L}{1-\beta^L}\right)v^i\eqsp.
\end{align*}
From our previous computations, $p_L/(1-\beta^L)\leq \varepsilon/2$, so we have
\begin{align*}
    \beta^{L}\parenthese{1-\frac{p_L}{1-\beta^L}}\,v^i &\geq \parenthese{1-\frac{\varepsilon}{16}} \parenthese{1-\frac{\varepsilon}{2}}\,v^i\\
    &=\parenthese{1-\frac{\varepsilon}{2}-\frac{\varepsilon}{16}+\frac{\varepsilon^2}{32}}v^i\\
    &\geq\parenthese{1-\frac{\varepsilon}{2}-\frac{\varepsilon}{16}}v^i\\
    &= \parenthese{1-\frac{9\varepsilon}{16}}v^i \\
    & \geq (1-\varepsilon)v^i\eqsp.
\end{align*}
\end{proof}

\nocontroltypei*

\begin{proof}
In what follows, we let $$(j,r)\in\operatorname{argmin}\limits_{i\in[N],\ell\in[K]}\,\defEns{w^i_{\bv,\ell},\;w^i_{\bv,\ell}>0}\eqsp.$$Observe that, denoting $\lVert w^j_{\bv}\rVert_0 = \sum_{\ell=1}^K \2{w^j_{\bv, \ell}\ne 0}$,
\begin{equation}
    \label{eq:lowerboundwjr}
    w^j _{\bv,r} \leq 1/\lVert w^j_{\bv}\rVert_0\eqsp,
\end{equation}
since otherwise $\sum_{\ell\in[K]}w^j_{\bv,\ell}\geq\lVert w^j_{\bv}\rVert_0\frac{1}{\lVert w^j_{\bv}\rVert_0}>1$. Moreover,
\begin{align}
    \P^{\bbs_{\bv}}(\kappa_{\bphi}<\infty)=\P^{\bbs_{\bv}}\parenthese{\bigcup_{i\in[N]}\bigcup_{k\geq 0}\defEns{\normone{\hat{w}^i_k -w^i_{\bv}}>\delta}}\geq\P^{\bbs_{\bv}}\parenthese{\bigcup_{k\geq0}\defEns{\normone{\hat{w}^j _k - w^j_{\bv}}>\delta}}\label{eq:bigcup}\eqsp.
\end{align}
Finally, note that for any $k\geq0$, $\cap_{t\in\cB_k}\{A^j_t = a_r\}\subset\{\normone{\hat{w}^j_k - w^j_{\bv}}>\delta\}$ since under this event
\begin{align*}
\normone{\hat{w}^j _k - w^j_{\bv}}=\sum_{\ell=1}^{K}\abs{\hat{w}^j_{k,\ell}-w^j _{\bv, \ell}}=1-w^j_{\bv,r}\geq 1 - 1/\lVert w^j_{\bv}\rVert_0 >\delta\eqsp,
\end{align*}
where the two last inequalities follows from \eqref{eq:lowerboundwjr} and our assumption respectively. Connecting this observation with \eqref{eq:bigcup} then gives
    \begin{align*}
        \P^{\bbs_{\bv}}(\kappa<\infty)\geq\P^{\bbs_{\bv}}\parenthese{\bigcup_{k\geq0}\defEns{\normone{\hat{w}^j_{k}-w^j _{\bv}}>\delta}}&\geq\P^{\bbs_{\bv}}\parenthese{\bigcup_{k\geq 0}\bigcap_{t\in\cB_k}\defEns{A^j_t = a_r}}\\
        &= 1-\P^{\bbs_{\bv}}\parenthese{\bigcap_{k\geq0}\bigcup_{t\in\cB_k}\{A^j_t \ne a_r\}}\\
        &= 1 - \prod_{k=0}^\infty (1-(w^j_{\bv,r})^L) =1\eqsp,
    \end{align*}where we used the independence of draws under $\bbs_{\bv}$ in the last step.
\end{proof}

\end{document}